\theoremstyle{plain}
\newtheorem{theorem}{Theorem}[section]
\newtheorem{corollary}[theorem]{Corollary}
\newtheorem{lemma}[theorem]{Lemma}
\newtheorem{notation}[theorem]{Notation}
\newtheorem{proposition}[theorem]{Proposition}
\newtheorem{definition}[theorem]{Definition}
\theoremstyle{remark}
\newtheorem{remark}[theorem]{Remark}
\numberwithin{equation}{section}
\newcommand{\ind}{1\!\kern-1pt \mathrm{I}}
\newcommand{\rsto}{]\!\kern-1.8pt ]}
\newcommand{\lsto}{[\!\kern-1.7pt [}
\numberwithin{equation}{section}
\renewcommand{\rho}{\varrho}
\DeclareMathOperator{\Diag}{Diag}
\DeclareMathOperator{\rk}{rank}
\begin{document}
\title[Polynomial processes in stochastic portfolio theory]{Polynomial processes in stochastic portfolio theory}

\begin{abstract}
We introduce polynomial processes in the sense of~\cite{CKT:12} in the context of stochastic portfolio theory to model simultaneously companies' market capitalizations and the corresponding market weights. These models substantially extend volatility stabilized market models considered by Robert Fernholz and Ioannis Karatzas in~\cite{FK:05}, in particular they allow for correlation between the individual stocks. At the same time they remain remarkably tractable which makes them applicable in practice, especially for estimation and calibration to high dimensional equity index data. In the diffusion case we characterize the joint polynomial property of the market capitalizations and the  corresponding weights, exploiting the fact that the transformation between absolute and relative quantities perfectly fits the structural properties of polynomial processes.
Explicit parameter conditions assuring the existence of a local martingale deflator \emph{and} relative arbitrages with respect to the market portfolio are given and the connection to non-attainment of the boundary of the unit simplex is discussed. We also consider extensions to models with jumps and the computation of optimal relative arbitrage strategies.
\end{abstract}

\keywords{Stochastic portfolio theory, relative arbitrage, polynomial processes, diffusions on the unit simplex, boundary attainment, tractable modeling}
\subjclass[2000]{60J25,91G10, 60H30}

\author{Christa Cuchiero}
\address{Vienna University, Oskar-Morgenstern-Platz 1, A-1090 Vienna}
\email{christa.cuchiero@univie.ac.at}
\maketitle

\section{Introduction}

Since the seminal works by Robert Fernholz, e.g.~\cite{F:02, FS:82, F:99}, on \emph{stochastic portfolio theory} (SPT), many models to capture the joint behavior of companies' market capitalization, denoted by $S^i$ for $i=1, \ldots, d$, in large equity indices have been proposed. One of the main goals in this respect is to reproduce the empirically observed shape and dynamics of the capital distribution curves which have turned out to be remarkably stable over time (see \cite[Figure 5.1]{F:02}). Another major aspect is to analyze the relative outperformance of certain portfolios with respect to the market portfolio and in turn to develop models which allow for so-called relative arbitrages. The models used so far can be broadly categorized (see e.g.~the overview article \cite{V:15}) into 
\begin{itemize}
\item \emph{rank based} models with the \emph{Atlas} and the \emph{Hybrid Atlas} model \cite{BFK:05, IPBKF:11} as examples;
\item \emph{diverse} models~\cite{F:02, FKK:05};
\item \emph{sufficiently volatile} models with \emph{volatility stabilized} \cite{FKK:05} and \emph{generalized volatility stabilized models} \cite{P:14} as subclasses.
\end{itemize}
While rank based models were designed to match the empirically observed stability of the capital distribution curves over time,  
the definition of diverse and sufficiently volatile models hinges on certain descriptive features of equity markets which have turned out to lead to relative arbitrages. We here briefly comment on the main properties of these model classes.

As the whole field of SPT, rank based type models have attracted much attention from theoretical side and are a rich source of mathematically interesting questions, in particular in the field of interacting particle systems, which is an active area of research (see the references in \cite[Section 6]{V:15}). In terms of practical applicability there are however two shortcomings, namely the common assumption that capitalizations are uncorrelated (at least in the Atlas model) and limited tractability due to rank (in contrast to name) based modeling.

The second class of models, coined diverse, combine a condition that prevents the concentration of all the market capital into one single stock (so-called market \emph{diversity}) with \emph{strong non-degeneracy of the instantaneous covariance matrix} of the log-prices. R.~Fernholz's key insight was that under these two conditions it is possible to systematically outperform the market portfolio, even on arbitrary short time horizons (see \cite{FKK:05}). This combination is however not the only one under which relative arbitrages occur, in particular the strong non-degeneracy condition, which is usually hard to verify  in practice when estimating high dimensional covariance matrices, can be omitted.

Indeed, another condition,  which led to the introduction of sufficiently volatile models  is related to the volatility of the market weights,
more precisely to the \emph{cumulative excess growth rate} given by 
\begin{align}\label{eq1}
\Gamma^H(\cdot)=\frac{1}{2} \sum_{i=1}^d \int_0^{\cdot} \mu^i_t d\langle \log \mu^i \rangle_t\,,
\end{align}
where $\mu^i_t$ represents the market weight of the $i$-th stock at time $t \geq 0$, for $i = 1,\ldots,d$. Indeed, the assumption that the slope of $\Gamma^H(\cdot)$ is bounded away from zero is the defining property of this class of models. As proved by R.~Fernholz and I.~ Karatzas in \cite{FK:05} such models allow for relative arbitrages over sufficiently long time horizons with certain functionally generated portfolios,  but, as recently shown in \cite{FKR:16}, it is in general not sufficient to generate relative arbitrages on arbitrary short time horizons. 

Examples of sufficiently volatile models are so-called \emph{volatility stabilized market models} which have been introduced in~\cite{FK:05}. As stated in the overview article \cite{FK:09}, these models are remarkable for several reasons, in particular because
the total market capitalization process follows a specific Black Scholes model,
while the individual stocks reflect the empirical fact that log-prices of smaller stocks tend to have greater volatility than the log-prices of larger ones.
 
Despite this coherence with equity market features, there is one major drawback, namely again the lack of correlation between the individual stocks, for which reason their applicability to model realistic market situations is limited. To overcome this drawback we propose in the present paper a significant but still tractable extension of this model class, which we call \emph{polynomial market weight and asset price models}. 

Indeed, a further remarkable property of volatility stabilized models is the fact that the asset prices and the market weights follow each individually and also jointly a \emph{polynomial process.} Polynomial processes, introduced in \cite{CKT:12}, see also \cite{FL:14}, constitute a class of time-homogeneous Markovian-It\^o semimartingales which are inherently tractable in the sense that the calculation of (mixed) moments only requires the computation of matrix exponentials. 
The computational advantage associated with this property has been exploited in a large variety of problems. In particular, applications in mathematical finance include interest rates, credit risk, stochastic volatility models, life insurance liabilities  and variance swaps (see \cite{Delbaen/Shirakawa:2002, Ackerer/Filipovic:2016, ack_fil_pul_16, BZ:16, Filipovic/Gourier/Mancini:2015}). In the present context, it
can be utilized for implementing optimal arbitrages (see Section~\ref{sec:pmwapm}) as well as for model calibration to high dimensional equity indices, such as S\&P 500, using for instance method of moment or pathwise covariance estimation techniques exploiting the
analytical knowledge of moments as well as the specific functional form of the covariance structure. In a companion paper \cite{CGGPST:17} we actually investigate calibration of polynomial models to market data (MSCI world index with 300 stocks) and found very promising results. In particular, they are capable of matching the typical shape and fluctuations of capital distribution curves surprisingly well. A specificity of the setting of SPT, which differs from other applications in finance, is the choice of the market portfolio as 
num\'eraire, so that the market weights become the modeling quantity of primary interest. This thus necessitates tractable models whose state space is the \emph{unit simplex.} 
One is therefore naturally led to polynomial processes because the usually considered most tractable classes, namely L\'evy and affine processes, are deterministic on compact and connected state spaces (see \cite{KL:16}).

Summarizing, the class of polynomial market weight and asset price models provides a tool to high dimensional equity market modeling with 
\begin{itemize}
\item a high degree of tractability (indeed, they can be viewed as the most tractable class when it comes to modeling market weights directly);
\item the possibility of correlation between the stocks;
\item consistency with empirical market features, such as volatility structures that are inverse proportional to the size of the assets and ``correct'' fluctuations of the ranked market weights (capital distribution curves).
\end{itemize}

Therefore, one goal of the current article is to characterize these models, which are  essentially defined through the property that the joint process of weights and capitalizations $(\mu_t, S_t)_{t \geq 0}$ is polynomial. For diffusion models, this is the case if and only if the total captialization process, i.e. $\sum_{i=1}^d S^i$, is a polynomial diffusion on $\mathbb{R}_{++}$, independent of the polynomial diffusion for the market weights on the unit simplex (see Theorem \ref{th:main}). Furthermore, if the characteristics of $S$ are not allowed to depend on $\mu$, then 
$\sum_{i=1}^d S^i$ follows necessarily a Black Scholes model, slightly more general than in volatility stabilized models (see Corollary \ref{cor:main}). The crucial point is that the market weights process can have a much richer covariance structure than in the case of volatility stabilized models, where it corresponds to the \emph{Wright-Fisher diffusion} sometimes also called \emph{multivariate Jacobi process} (see e.g.~\cite{GJ:06}). In the present case the market weights process is a general polynomial diffusion on the unit simplex, as first characterized by Damir Filipovi\'c and Martin Larsson~\cite{FL:14} (see also \cite{CLS:16}). Despite this significantly higher flexibility, it shares many desirable properties of the subclass of volatility stabilized models. For instance, under certain parameter conditions as made precise in Theorem \ref{th:NANUPBR} the slope of the cumulative excess growth rate is bounded away from zero, whence polynomial market weight models can be subsumed under the class of sufficiently volatile ones. In particular, they allow for relative arbitrages over long time horizons, generated by the so-called entropic portfolio \cite[Example 11.1]{FK:09}.
Moreover, as in volatility stabilized models the existence of (strong) relative arbitrage on arbitrary short time horizons (even with long only portfolios) is also possible and can be deduced from~\cite{BF:08}.

Related to this is the precise characterization of the existence of relative arbitrages and a local martingale deflator (see e.g.~\cite{ST:14}, \cite{KKS:15}) which we provide in Section \ref{sec:relarbitrage}. As a byproduct we obtain precise conditions for boundary attainment of polynomial models on the unit simplex. In the case of non-attainment of certain boundary segments we then discuss computable approximate optimal arbitrage strategies based on polynomials. 

The remainder of the paper is structured as follows: Section~\ref{sec:setting} and Section~\ref{sec:poly}
are dedicated to introduce the general setting of stochastic portfolio theory and the precise notion of polynomial processes. In Section~\ref{sec:pmwapm} we then recall the definition of volatility stabilized market models, show their polynomial property and draw a full picture of polynomial market weight and asset price diffusion models, relying on a characterization of polynomial diffusions on $\Delta^d \times \mathbb{R}_+^m$ proved in Appendix \ref{app:pmwapm}. In Section \ref{sec:jumps}, we consider extensions with jumps, while Section \ref{sec:relarbitrage} is fully dedicated to analyze fundamental properties (existence of relative arbitrages, local martingale deflators, completeness) of polynomial market weight models.

\subsection{Notation}
For a stochastic process we shall usually write $X$ for $(X_t)_{t \geq 0}$ and 
use superscript indices for its components, with the exception of semimartingale characteristics where we use subscript indices. In this case the superscript index indicates the process to which they belong, e.g.~the characteristic $b^X$ denotes the drift of the process $X$. We denote by $\mathbb{N}$ the natural numbers, $\mathbb{N}_0:=\mathbb{N}\cup\{0\}$ the nonnegative integers, and $\mathbb{R}_+$ the nonnegative reals.
The symbols $\mathbb{R}^{n\times n}$, $\mathbb{S}^n$, and $\mathbb{S}^n_+$ denote the $n \times n$ real, real symmetric, and real symmetric positive semi-definite matrices, respectively. Furthermore, $e_i$ stands for the $i^{\text{th}}$ canonical unit vector 
and $\mathbf{1}$ is the vector whose entries are all equal to 1.

\section{Setting and notions of stochastic portfolio theory}\label{sec:setting} 

We start by recalling the general setting and several notions of stochastic portfolio theory (SPT). For a more detailed account we refer to \cite{F:02, FK:09, KR:16, FKR:16}.

\subsection{The market model and trading strategies}

Let $T >0$ denote some finite time horizon and let $(\Omega, \mathcal{F}, (\mathcal{F}_t)_{t \in [0,T]}, P)$ be a filtered probability space with a right continuous filtration. Thereon we consider, for $d \geq 2$,  an $\mathbb{R}_{+}^d$-valued semimartingale $S$ with $S_0 \in \mathbb{R}_{++}^d$, corresponding to the companies' (undiscounted) market capitalizations, i.e.~stock price multiplied by the number of outstanding shares. We denote by $\Sigma$ the total capitalization of the considered equity market, i.e.
\[
\Sigma= \sum_{i=1}^d S^i,
\]
and require that $\Sigma$ is a.s.~strictly positive, i.e. $P[ \Sigma_t >0, \forall  t \geq 0] =1$. In contrast to that the individual capitalizations are allowed to vanish.

In line with the common literature on SPT (in particular, \cite{KR:16}), we consider trading strategies investing only in these $d$ assets and do not introduce a bank account.

\begin{definition}\label{def:trading}
\begin{enumerate}
\item
We call a predictable, $\mathbb{R}^d$-valued, $S$-integrable process $\vartheta$ a \emph{trading strategy} with \emph{wealth process} $V^{v,\vartheta}$, where
\[
V_t^{v,\vartheta} =\sum_{i=1}^d \vartheta_t^i S_t^i, \quad 0 \leq t \leq T,  \quad \text{ with } V^{v,\vartheta}_0=v,
\]
is the time $t$-value of the investment according to the strategy $\vartheta$ and initial capital $v >0$. Each $\vartheta^i_t $ represents the number of shares held at time $t \geq 0$ in the $i^{\text{th}}$ asset.
\item A trading strategy $\vartheta$ is called \emph{self-financing} if $V^{v,\vartheta}$ satisfies
\[
V_t^{v,\vartheta}=v+\int_0^t \vartheta_s dS_s, \quad 0 \leq t \leq T,
\]
where the integral is understood in the sense of vector stochastic integration.
\item A trading strategy is called $v$-\emph{admissible} if 
$V^{v,\vartheta} \geq 0$ and $V^{v,\vartheta}_{-} \geq 0$.
\item A trading strategy is called \emph{long-only}, if it never sells any stock short, i.e. $\vartheta$ takes values in $\mathbb{R}^d_+$.
\end{enumerate}
We denote by $\mathcal{J}(S)$ the collection of all self-financing, $1$-admissible trading strategies.
\end{definition}

\begin{notation}
The canonical value for the initial capital will be $1$. For notational simplicity, we shall therefore write $V^{\vartheta}$ for $V^{1,\vartheta}$.
\end{notation}

If each component of the semimartingale $S$ is strictly positive, one can pass to a \emph{multiplicative} (as opposed to the above additive) modeling approach, which used to be
standard in SPT and can sometimes be convenient. In this setup one rather considers - instead of trading strategies corresponding to the number of shares - so-called portfolios whose value represents the proportion of current wealth invested in each of the assets (see 
Appendix \ref{app:multi} for further details). 

\subsection{Relative arbitrage with respect to the market}

A crucial quantity of interest is the performance of two admissible portfolios relative to each other. In this respect the notion of relative arbitrage defined subsequently plays an important role.

\begin{definition}[Relative arbitrage opportunity]\label{def:relArb}
Let $\varphi, \vartheta \in \mathcal{J}(S)$ be self-financing, $1$-admissible trading strategies. Then, the strategy $\vartheta$ is said to generate a \emph{relative arbitrage opportunity} with respect to $\varphi$ over the time horizon $[0,T]$ 
if 
\[
P\left[V^{\vartheta}_T \geq V_T^{\varphi}\right]=1 \textrm{ and } P\left[V_T^{\vartheta} > V_T^{\varphi}\right] >0.
\] 
Moreover, $\vartheta$ is called a strong relative arbitrage opportunity if 
\[
P\left[V_T^{\vartheta} > V_T^{\varphi}\right] =1.
\]
If the above definition applies to $\vartheta$ with values in $\mathbb{R}^d_+$, we speak of  \emph{(strong) long only relative arbitrage.}
\end{definition}

\begin{remark}\label{rem:v}
In the above definition we consider wealth processes with initial capital $1$ and $1$-admissible trading strategies. However, the existence of relative arbitrages clearly does not depend on that. Indeed, $\vartheta$ is a relative arbitrage opportunity with respect to $\varphi$ over the time horizon $[0,T]$ in the sense of Definition \ref{def:relArb} if and only if 
\begin{align}\label{eq:differentv}
P\left[V^{v,v \vartheta}_T \geq V_T^{ v, v \varphi}\right]=1 \textrm{ and } P\left[V_T^{v, v\vartheta} > V_T^{v,v\varphi}\right] >0.
\end{align}
Note that  $v \vartheta, v \varphi$ are self-financing, $v$-admissible trading strategies and the initial capital is $v$.
\end{remark}

We now draw our attention to \emph{relative arbitrage opportunities with respect to the market}. In this case the $1$-admissible reference trading strategy is given by $\varphi_t^i  = \frac{1}{\Sigma_0}$ for all $i \in \{1, \ldots,d\}$ and $t \in [0,T]$ and the above definition of relative arbitrage translates as follows:

\begin{definition}[Relative arbitrage opportunity ]\label{def:relArbmarket}
Let $\vartheta \in \mathcal{J}(S)$ be a self-financing, $1$-admissible trading strategy. Then $\vartheta$ is said to generate a \emph{relative arbitrage opportunity} with respect to the market over the time horizon $[0,T]$ 
if 
\[
P\left[V^{\vartheta}_T \geq \frac{\Sigma_T}{\Sigma_0}\right]=1 \textrm{ and } P\left[V_T^{\vartheta} > \frac{\Sigma_T}{\Sigma_0}\right] >0.
\] 
Moreover, $\vartheta$ is called a strong relative arbitrage opportunity with respect to the market if 
\[
P\left[V_T^{\vartheta} > \frac{\Sigma_T}{\Sigma_0}\right] =1.
\]
If the above definition applies to $\vartheta$ with values in $\mathbb{R}^d_+$, we speak of  \emph{(strong) long only relative arbitrage with respect to market.}
\end{definition}

\subsection{Market weights}
From Definition \ref{def:relArbmarket} it is apparent that relative arbitrages with respect to the market correspond to classical arbitrages when choosing the total capitalization $\Sigma$ as num\'eraire. Doing so, yields the process of \emph{market weights} denoted by $\mu=(\mu^1, \ldots, \mu^d)$ and defined by
\begin{align}\label{eq:weights}
\mu^i=\frac{S^i}{\Sigma}= \frac{S^i}{\sum_{i=1}^d S^i}, \quad i \in \{1, \ldots,d\}.
\end{align}
Since all $S^i$ are assumed to be nonnegative, $\mu$ takes values in the unit simplex $\Delta^d$ defined by
\[
\Delta^d=\left\{x \in [0,1]^d\, |\, \sum_{i=1}^d x_i=1\right\}.
\]
The interior of the unit simplex, i.e. $\{x \in (0,1)^d\, |\, \sum_{i=1}^d x_i=1\}$ is denoted by $\mathring{\Delta}^d$. Note that by assumption $\mu_0 \in \mathring{\Delta}^d$.

\begin{remark}\label{rem:relarb}
\begin{enumerate}
\item 
Let us remark that $\vartheta$ is a self-financing trading strategy for the capitalization process $S$ as of Definition \ref{def:trading} if and only if it is a self-financing trading strategy for the process $\mu$, where the latter is defined as in Definition \ref{def:trading} simply by replacing $S$ by $\mu$ (see \cite{GKR:95} and \cite[Proposition 2.3]{KR:16}). In particular, denoting by $Y^{q, \vartheta}$ the relative wealth process, i.e.
\begin{align}\label{eq:Y}
Y_t^{q, \vartheta}
=\frac{V_t^{q \Sigma_0, \vartheta}}{\Sigma_t},\quad 0 \leq t \leq T,  \quad \text{ with } Y_0^{q, \vartheta}=\frac{q \Sigma_0}{\Sigma_0}=q, 
\end{align}
we have
\[
Y_t^{q, \vartheta}= q+\int_0^t \vartheta_s d\mu_s, \quad 0 \leq t \leq T.
\]
As before we shall write $Y^{\vartheta}$ for $Y^{1,\vartheta}$.

\item Combining this with Definition \ref{def:relArbmarket} and using Remark \ref{rem:v} with $v=\Sigma_0$ in \eqref{eq:differentv} yields the following equivalent formulation of a relative arbitrage opportunity with respect to the market 
over the time horizon $[0,T]$, namely
\begin{align*}
P\left[Y^{\Sigma_0\vartheta}_T \geq 1\right]=1 \textrm{ and } P\left[Y_T^{\Sigma_0\vartheta} > 1\right] >0
\end{align*}
and 
\[
P\left[Y_T^{\Sigma_0\vartheta} > 1\right] =1
\]
for a strong relative arbitrage opportunity. The existence of relative arbitrages with respect to the market thus only depends on the market weights process $\mu$.
\end{enumerate}
\end{remark}

\section{Polynomial processes}\label{sec:poly}

This section is dedicated to give a very brief overview on polynomial processes which have been introduced in~\cite{CKT:12} and further analyzed in several papers, in particular in~\cite{FL:14}, where questions on existence and uniqueness of polynomial diffusions on different state spaces are treated. For the case of jump-diffusions on the specific state space $\Delta^d$, being of particular importance for modeling the market weight process $\mu$, we refer to \cite{CLS:16}.

We  here define  polynomial  processes  as  a  particular  class  of \emph{time-homogeneous Markovian It\^o-semimartingales} with state space $D \subseteq \mathbb{R}^n$ defined on a filtered probability space $(\Omega, (\mathcal{F}_t)_{t\in [0,T]}, \mathcal{F},P)$ with a right-continuous filtration and $T \in (0,\infty]$. 
By an It\^o-semimartingale $X$ we mean a semimartingale whose characteristics $(B^X, C^X, \nu^X)$ (with respect to a certain truncation function $\chi$) are absolutely continuous with respect to the Lebesgue measure, i.e., we have 
\[
B^X=\int_0^{\cdot} b^X_t dt, \quad C^X=\int_0^{\cdot} c^X_t dt, \quad \nu^X(dt, d\xi)=K^X_t(d\xi)dt, 
\]
and we call $(b^X,c^X,K^X)$ \emph{differential semimartingale characteristics}. The time-homogeneous Markov property is expressed by the fact  that  
$X$ is assumed to be Markovian relative to $(\mathcal{F}_t)$, that is,
\[
E[f(X_{t})|\mathcal{F}_s]=E[f(X_{t}) | \sigma(X_s)], \quad P\textrm{-a.s.}
\]
for all  $t \geq s$ and all Borel functions $f: D \rightarrow \mathbb{R}$ satisfying 
$E[|f(X_t)|]< \infty$ for all $t \in[0,T]$, where $\sigma(X_s)$ denotes the $\sigma$-algebra generated by $X_s$. In particular, the differential characteristics $(b^X,c^X,K^X)$ are functions $(b(X_t),c(X_t),K(X_t,d\xi))_{t \in [0,T]}$ of the current state of the process, where $b: D \to \mathbb{R}^n$,  $c: D \to \mathbb{S}^n_+$ are Borel functions and $K(x,d\xi)$ a Borel transition kernel from $D$ to $\mathbb{R}^n$.

Before giving the precise definition of a polynomial process, let us introduce some further notation. 
Let $\mathcal{P}_m$ denote the finite dimensional vector space of polynomials up to degree $m \geq 0$ on $D$, i.e.~the restrictions of polynomials on $\mathbb{R}^n $ to $D$, defined by
\begin{align*}
\mathcal{P}_m:=\left\{D \ni x \mapsto \sum_{|\mathbf{k}|=0}^m \alpha_{\mathbf{k}}x^{\mathbf{k}}, \,\Big |\, \alpha_{\mathbf{k}} \in \mathbb{R}\right\},
\end{align*}
where we use multi-index notation $\mathbf{k} = (k_1,\ldots, k_n) \in \mathbb{N}^n_0$, $|\mathbf{k}|=k_1+\cdots+k_n$ and $x^{\mathbf{k}}=x^{k_1}_1\cdots x^{k_n}_n$.
Moreover, we denote by $\mathcal{P}$ the vector space of all polynomials on $D$. Whenever  there is ambiguity to which state space $\mathcal{P}_m$ refers we denote it by $\mathcal{P}_m(D)$. 

The following definition of a polynomial process is not the original one given in~\cite{CKT:12}, but equivalent under the stated moment condition on the compensator of the jump measure as clarified by the subsequent theorem. For a slightly different definition of polynomial processes relying on solutions to the corresponding martingale problem and not a-priori assuming the existence of a Markov process, we refer to \cite{FL:14, CLS:16}.

\begin{definition}\label{def:poly} 
A \emph{polynomial process} $X$ is a time-homogeneous Markovian It\^o-semimartingale with state space $D$ and initial value $X_0=x \in D$ whose differential characteristics $(b^X_t,c^X_t,K^X_t(d\xi))_{t \in [0,T]}=(b(X_t),c(X_t),K(X_t,d\xi))_{t \in [0,T]}$, with respect to the ``truncation'' function $\chi(\xi)=\xi$, 
satisfy 
\[
E\left[\int_{\mathbb{R}^n} \| \xi \|^m K(X_t,d\xi)\big | X_0=x\right] < \infty \textrm{ for all } m \in \mathbb{N},\, x \in D, \, t \in [0,T]
\]
and
\begin{align*}
b_i(x) &\in \mathcal{P}_1 \quad i \in \{1, \ldots, n\},\\
c_{ij}(x)+\int_{\mathbb{R}^n} \xi_i \xi_j K(x,d\xi)&\in  \mathcal{P}_2 \quad i,j \in \{1, \ldots, n\},\\
\int_{\mathbb{R}^n} \xi^{\mathbf{k}} K(x,d\xi) &\in \mathcal{P}_{|\mathbf{k}|} \quad |\mathbf{k}| =3, \ldots.
\end{align*}
\end{definition}

The following theorem is a reformulation of the results in~\cite{CKT:12}.

\begin{theorem}
For a Markovian It\^o-semimartingale $X$ with state space $D$ and $E[\int_{\mathbb{R}^n} \| \xi \|^m K(X_t,d\xi)| X_0=x] < \infty$ for all $m \in \mathbb{N}$, $x \in D$, $t \in [0,T]$ the  following are equivalent:
\begin{enumerate}
\item $X$ is a polynomial process.
\item 
$x \mapsto E[f(X_t)| X_0=x] \in \mathcal{P}_k$ for all $k \in \mathbb{N}$, $f \in \mathcal{P}_k$, $x \in D$ and $t \in [0,T]$.
\end{enumerate}
\end{theorem}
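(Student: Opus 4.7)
The plan is to characterize polynomial preservation via the (extended) infinitesimal generator and then transfer that algebraic condition to the level of expectations through a Kolmogorov-type equation.

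For the implication $(i)\Rightarrow(ii)$, I would first write down the action of the extended generator $\mathcal{L}$ on a monomial $f(x)=x^{\mathbf{k}}$ using It\^o's formula with truncation $\chi(\xi)=\xi$: the drift contribution is $b(x)\cdot \nabla f(x)$, the Brownian contribution is $\tfrac{1}{2}\tr(c(x)\nabla^2 f(x))$, and the jump contribution expands as $\sum_{|\mathbf{m}|\geq 2}\tfrac{1}{\mathbf{m}!}\partial^{\mathbf{m}} f(x)\int \xi^{\mathbf{m}} K(x,d\xi)$, with the second-order term combining with the Brownian part. The hypotheses on $b,c,K$ ensure that each summand is a polynomial in $x$, and a careful degree count shows $\mathcal{L}f\in\mathcal{P}_{|\mathbf{k}|}$; by linearity $\mathcal{L}$ maps $\mathcal{P}_k$ into itself. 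Next, using the moment condition on $K$ together with the polynomial growth of the characteristics, I would derive via Gr\"onwall-type estimates that $\E^x[\|X_t\|^{2k}]<\infty$ for every $k$, so that the local martingale $f(X_t)-f(X_0)-\int_0^t \mathcal{L}f(X_s)ds$ is in fact a true martingale. Taking expectations then gives the integral equation
\begin{equation*}
P_tf(x)=f(x)+\int_0^t P_s(\mathcal{L}f)(x)\,ds,
\end{equation*}
so $t\mapsto P_tf$ lies in the finite-dimensional space $\mathcal{P}_k$ and equals $e^{t\mathcal{L}}f$ with $\mathcal{L}$ viewed as a linear endomorphism of $\mathcal{P}_k$. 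This proves $(ii)$.

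For the converse $(ii)\Rightarrow(i)$, assume $P_t(\mathcal{P}_k)\subseteq\mathcal{P}_k$ for every $k$. Since $\mathcal{P}_k$ is finite-dimensional, the restriction $P_t|_{\mathcal{P}_k}$ is a uniformly continuous semigroup of operators, differentiable at $t=0$, and its generator $\mathcal{L}_k:=\frac{d}{dt}P_t f\big|_{t=0}$ maps $\mathcal{P}_k$ to $\mathcal{P}_k$ and is consistent across $k$, defining one operator $\mathcal{L}$ on $\mathcal{P}$. To identify the action of $\mathcal{L}$ in terms of characteristics, I would again apply It\^o's formula to $f(X_t)$ for monomials and pass to the limit $t\downarrow 0$ (using the moment assumption to exchange limit and expectation, and right-continuity of $t\mapsto b(X_t),c(X_t),K(X_t,\cdot)$ along paths) to obtain the pointwise identity
\begin{equation*}
\mathcal{L}f(x)=b(x)\cdot\nabla f(x)+\tfrac{1}{2}\tr(c(x)\nabla^2 f(x))+\int_{\RR^n}\bigl(f(x+\xi)-f(x)-\xi\cdot\nabla f(x)\bigr)K(x,d\xi).
\end{equation*}
Now I would read off the characteristic conditions degree by degree: taking $f(x)=x_i$ yields $\mathcal{L}f(x)=b_i(x)\in\mathcal{P}_1$; taking $f(x)=x_ix_j$ yields
\begin{equation*}
\mathcal{L}f(x)=x_ib_j(x)+x_jb_i(x)+c_{ij}(x)+\int \xi_i\xi_j K(x,d\xi)\in\mathcal{P}_2,
\end{equation*}
which combined with the $\mathcal{P}_1$-statement for $b$ gives the required $\mathcal{P}_2$ condition on $c_{ij}+\int \xi_i\xi_j K(\cdot,d\xi)$. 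For $|\mathbf{k}|\geq 3$, inductive computation on $f(x)=x^{\mathbf{k}}$ shows that $\int \xi^{\mathbf{k}} K(x,d\xi)$ equals $\mathcal{L}f(x)$ minus a combination of polynomials already shown to lie in $\mathcal{P}_{|\mathbf{k}|}$, so it lies in $\mathcal{P}_{|\mathbf{k}|}$ as well.

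The main obstacle I anticipate is the analytic justification rather than the algebra: in $(i)\Rightarrow(ii)$ one must know that $P_tf$ is actually polynomial, not merely a candidate solution, which requires upgrading the local martingale property of $f(X)-\int_0^{\cdot}\mathcal{L}f(X_s)ds$ to a genuine martingale uniformly in $t\in[0,T]$; and in $(ii)\Rightarrow(i)$ one must extract pointwise identities for the characteristics from expectation-level statements, which relies on the moment hypothesis to justify differentiation at $t=0$ under the integral sign. Both issues are handled by the integrability assumption $\E^x\!\int\|\xi\|^m K(X_t,d\xi)<\infty$ for every $m$, which together with polynomial growth of the characteristics yields the requisite moment bounds on $X_t$.
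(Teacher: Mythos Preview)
The paper does not give a self-contained proof; it simply invokes \cite[Theorem 2.15]{CKT:12} for $(i)\Rightarrow(ii)$ and \cite[Corollary 2.14]{CKT:12} for $(ii)\Rightarrow(i)$, noting that condition (ii) was the original definition of a polynomial process. Your sketch is a faithful reconstruction of the argument behind those cited results, and the overall strategy---show $\mathcal{G}$ preserves $\mathcal{P}_k$, upgrade the local martingale to a true one via localization and a Gr\"onwall bound on $E^x[1+\|X_t\|^{2k}]$, and then read off the characteristic conditions degree by degree---is correct.

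One point in your $(ii)\Rightarrow(i)$ argument needs more care. You propose to identify the abstract generator $A$ with the integro-differential operator $\mathcal{G}$ by passing to the limit $t\downarrow 0$, invoking ``right-continuity of $t\mapsto b(X_t),c(X_t),K(X_t,\cdot)$ along paths''. But $b,c$ and $\int\xi^{\mathbf{k}}K(\cdot,d\xi)$ are only assumed Borel, so you cannot use pathwise continuity of $\mathcal{G}f(X_t)$. The cleaner route is: from $P_t f(x)-f(x)=\int_0^t E^x[\mathcal{G}f(X_s)]\,ds$ and $P_t=e^{tA}$ one gets $E^x[\mathcal{G}f(X_s)]=(P_sAf)(x)$ for Lebesgue-a.e.\ $s$ and every $x$; applying this with the process started at $X_s$ (Markov property) yields $\mathcal{G}f(X_s)=Af(X_s)$ $P_x$-a.s.\ for a.e.\ $s$, and varying the initial condition over $D$ then gives $\mathcal{G}f=Af$ on the effective state space. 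This avoids any regularity of the coefficients and is how the identification is carried out in the cited reference. Apart from this technicality, your plan is sound.
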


\begin{proof}
The direction (i) $\Rightarrow$ (ii) is a direct consequence of~\cite[Theorem 2.15]{CKT:12}, while the (ii) $\Rightarrow$ (i)  follows from~\cite[Corollary 2.14]{CKT:12}. Note here that the original definition of a polynomial process corresponds to the statement in (ii). 
\end{proof}

\begin{remark}
The second assertion in the above theorem is crucial, since it means that the semigroup $(P_t)_{t \in [0,T]}$ associated with the Markov process maps $\mathcal{P}_k$ to $\mathcal{P}_k$. This in turn implies that expectations of polynomials of $X_t$ can be computed via matrix exponentials. More precisely
for every $k \in \mathbb{N}$, there exists a linear map $A$ on $ \mathcal{P}_k$, such that for all $t \in [0,T]$, the semigroup
$(P_t)$ restricted to $ \mathcal{P}_k$ can be written as
\[
P_t|_{ \mathcal{P}_k}=e^{tA}.
\]
This crucial property of polynomial processes allows for an easy and efficient computation of moments without knowing the probability distribution or the characteristic function.
\end{remark}

\begin{remark}
As indicated already in the introduction, another advantage of this model class arises in the context of model calibration in high dimensional situations, which we typically encounter when dealing with a large equity index.
Indeed, pathwise estimation techniques of the integrated or spot covariance can be applied to determine the parameters of the diffusion matrix $c$ whose entries are quadratic polynomials in the current state of the process (see Remark \ref{rem:marketweightcalibration} for further details).
\end{remark}

\begin{remark}\label{rem:martingaleprob}
As already mentioned above, an alternative approach to introduce polynomial processes is via the \emph{martingale problem} notion as for instance done in \cite{FL:14, CLS:16}. To this end, consider the following linear operator $\mathcal{G}: \mathcal{P} \to \mathcal{P}$ 
defined by
\begin{align*}
\mathcal{G}f(x)&=\sum_{i=1}^{n}D_if(x) b_i(x)+ \frac{1}{2} \sum_{i,j=1}^{n}D_{ij}f(x) c_{ij}(x)\\
 &\quad + \int_{\mathbb{R}^n} \left(f(x +\xi)- f(x) - \sum_{i=1}^{n}D_if(x) \xi_i\right)K(x,d\xi),
\end{align*}
with $(b,c,K)$ as of Definition \ref{def:poly}, which is usually called \emph{extended infinitesimal generator} (compare \cite[Definition 2.3]{CKT:12}). 

Given a probability distribution $\rho$ on $D$, 
a \emph{solution to the martingale problem} for $(\mathcal{G},\rho)$ is a c\`adl\`ag process $X$ with values in $D$ such that $P(X_0\in\cdot)=\rho$ and the process $M^f$ given by
\begin{equation*}\label{eq:Nf}
M_t^f:=f(X_t)-f(X_0)-\int_0^t\mathcal{G} f(X_s) ds
\end{equation*}
is a local martingale for every $f\in \mathcal{P}$ with respect to the filtration generated by $X$, i.e. $\mathcal{F}_t =\bigcap_{s >t} \mathcal{F}_s^X$ with $\mathcal{F}_s^X=\sigma(X_r\, | \, r \leq s)$.  We say that {\em the martingale problem for $\mathcal{G}$ is well-posed} if there exists a unique (in the sense of probability law) solution to the martingale problem for $(\mathcal{G},\rho)$ for any initial distribution $\rho$ on $D$. Since such a solution is a time-homogenous  Markovian It\^o-semimartingale (see \cite[Theorem 4.4.2]{EK:86} for the Markov property), we can therefore identify it with polynomial processes in the sense of Definition \ref{def:poly}.
\end{remark}

\section{Polynomial market weight and asset price models}\label{sec:pmwapm}

This section is dedicated to introduce models for the market weights and capitalizations  based on polynomial processes. We start by reviewing volatility stabilized market models as introduced in \cite{FK:05} and show that the joint process $(\mu, S)$ is polynomial in the sense of Definition \ref{def:poly} (see Proposition \ref{prop:jointproc}). Subsequently we characterize all polynomial diffusion models $(\mu, S)$ taking values in $\Delta^d \times \mathbb{R}^d_+$ where $\mu$ is Markovian in its own filtration and \eqref{eq:weights} holds true. The proofs of the results of this section can be found in Appendix \ref{app:pmwapm}.

\subsection{Volatility stabilized market models}\label{subsection:Vol}

The dynamics of the asset prices in volatility stabilized market models introduced in~\cite{FK:05} are defined through
\begin{align}\label{eq:vstm}
dS^i_t=S_t^i\left(\frac{1+\alpha}{2 \mu_t^i} dt + \frac{1}{\sqrt{\mu^i_t}} dW_t^i\right),\quad S_0^i=s^i, \quad i\in \{1, \ldots, d\},
\end{align}
where $\alpha \geq 0$ and $(W^1, \ldots,W^d)$ is a standard Brownian motion. We here and throughout the paper always consider weak solutions to such SDEs, or equivalently solutions to the associated martingale problems (compare Remark \ref{rem:martingaleprob}). Moreover, we suppose that the filtration is generated by $S$, i.e. 
\begin{align}\label{eq:filt}
\mathcal{F}_t =\bigcap_{s >t} \mathcal{F}_s^S \text{ with } \mathcal{F}_s^S=\sigma(S_r\, | \, r \leq s).
\end{align}
By a change of variable it is easily seen that these models are polynomial, which is stated in the proposition below. For its formulation recall the notation 
\[
\Sigma=\sum_{j=1}^d S^j.
\]

\begin{proposition}\label{prop:charS}
The volatility stabilized model $(S^1,\ldots, S^d)$ of~\eqref{eq:vstm} satisfies the following properties:
\begin{enumerate}
\item
$(S^1,\ldots, S^d)$ is a polynomial process on $\mathbb{R}_{++}^d$ whose differential characteristics $(b^S, c^S, K^S)$ are of the form
\begin{equation}
\begin{split}\label{eq:charS}
b^{S}_{i,t}&= \frac{1+\alpha}{2} \sum_{j=1}^d S_t^j, \\
c^{S}_{ii,t}&=S_t^i \left(\sum_{j=1}^d S_t^j \right),\quad c^{S}_{ij,t}=0,\\
K_t^S&=0.
\end{split}
\end{equation}
\item The dynamics of the total capitalization process are described by a Black-Scholes model of the form
\begin{align}\label{eq:BS}
d\Sigma_t=\Sigma_t\left(\frac{d(1+\alpha)}{2} dt+dZ_t\right),
\end{align}
where $Z$ denotes a Brownian motion.
\end{enumerate}
\end{proposition}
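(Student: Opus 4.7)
The proof is essentially a direct computation together with an appeal to Definition \ref{def:poly} and Lévy's characterization; I do not anticipate a genuine obstacle, just bookkeeping.

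For part (i), the plan is to substitute $\mu^i_t = S^i_t/\Sigma_t$ into the SDE \eqref{eq:vstm} and read off the differential characteristics of $S$ with respect to the truncation $\chi(\xi)=\xi$. Since $\frac{1}{\mu^i_t} = \Sigma_t/S^i_t$, the drift of $S^i$ becomes $S^i_t \cdot \frac{1+\alpha}{2}\cdot \Sigma_t/S^i_t = \frac{1+\alpha}{2}\Sigma_t$, which is the asserted expression for $b^S_{i,t}$ and lies in $\mathcal{P}_1$. Independence of the Brownian motions $W^1,\ldots,W^d$ forces $c^S_{ij,t}=0$ for $i\neq j$, while $c^S_{ii,t}=(S^i_t)^2\cdot \Sigma_t/S^i_t=S^i_t\,\Sigma_t\in\mathcal{P}_2$. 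Since $S$ is continuous, $K^S\equiv 0$, so the moment condition on $K^S$ is trivial. This matches \eqref{eq:charS} exactly and, by Definition \ref{def:poly}, shows that $S$ is a polynomial process on $\mathbb{R}^d_{++}$. (That the process stays in $\mathbb{R}^d_{++}$ under the weak-solution interpretation is part of the existence result in \cite{FK:05}, which I would simply cite.)

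For part (ii), I would sum the SDEs \eqref{eq:vstm} to obtain
\[
d\Sigma_t = \sum_{i=1}^d \frac{1+\alpha}{2}\,\Sigma_t\,dt + \sum_{i=1}^d \sqrt{S^i_t\,\Sigma_t}\,dW^i_t
        = \frac{d(1+\alpha)}{2}\,\Sigma_t\,dt + dM_t,
\]
where $M_t:=\sum_{i=1}^d\int_0^t \sqrt{S^i_s\,\Sigma_s}\,dW^i_s$ is a continuous local martingale with
\[
d\langle M\rangle_t = \sum_{i=1}^d S^i_t\,\Sigma_t\,dt = \Sigma_t^2\,dt.
\]
Define
\[
Z_t := \int_0^t \frac{1}{\Sigma_s}\,dM_s = \sum_{i=1}^d \int_0^t \sqrt{\mu^i_s}\,dW^i_s,
\]
which is well-defined and finite since $\Sigma$ is a.s.\ strictly positive. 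Then $Z$ is a continuous local martingale with $\langle Z\rangle_t = \int_0^t \sum_i \mu^i_s\,ds = t$, so by Lévy's characterization $Z$ is a Brownian motion, and $dM_t = \Sigma_t\,dZ_t$. Substituting this back yields \eqref{eq:BS}.

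The only mildly delicate step is the construction of $Z$ via Lévy's theorem; everything else is algebraic manipulation. I would present the two parts in the order above, since part (ii) can be stated independently of the polynomial structure but its computation reuses the same substitution $S^i_t/\mu^i_t = \Sigma_t$ that drives part (i).
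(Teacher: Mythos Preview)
Your proposal is correct and follows the same route as the paper: rewrite \eqref{eq:vstm} as $dS^i_t = \frac{1+\alpha}{2}\Sigma_t\,dt + \sqrt{S^i_t\Sigma_t}\,dW^i_t$, read off the characteristics, and for part~(ii) identify the martingale part of $\Sigma$ as $\Sigma\,dZ$ for a Brownian motion~$Z$. The only place the paper is more explicit is in securing the Markov property required by Definition~\ref{def:poly}---it cites \cite[Theorem~1.2, Corollary~1.3]{BP:03} for well-posedness of the martingale problem and then \cite[Theorem~4.4.2]{EK:86} to deduce Markovianity---whereas you fold this into your citation of \cite{FK:05}; conversely, the paper simply states the characteristics $(b^\Sigma,c^\Sigma,K^\Sigma)=(\tfrac{d(1+\alpha)}{2}\Sigma,\Sigma^2,0)$ and asserts \eqref{eq:BS} without spelling out the L\'evy argument you give.
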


\begin{remark} \label{rem:small}
\begin{enumerate}
\item 
The name volatility stabilized model stems from the fact that the total capitalization process follows a specific Black Scholes model, while the individual assets have dynamics of the form~\eqref{eq:vstm}.
\item As already mentioned in the introduction the empirical feature that log-prices of smaller stocks tend to have greater volatility than the log-prices of larger ones is reflected in this model class. Indeed, from \eqref{eq:vstm} it is easily seen that we have $c^{\log S}_{ii}=\frac{1}{\mu^i}$, but however \emph{no correlation} as $c^{\log S}_{ij}=0$.
\end{enumerate}
\end{remark}

The following proposition asserts that also the market weights follow a polynomial process on the unit simplex, more precisely a so-called \emph{multivariate Jacobi process} (see, e.g.~\cite{GJ:06}). A similar statement has also been obtain in~\cite{G:09}
and ~\cite{P:11}. 
Let us introduce the filtration generated by $(\mu^1, \ldots, \mu^d)$ which we denote by $(\mathcal{G}_t)$, i.e.
\begin{align}\label{eq:Gfilt}
\mathcal{G}_t =\bigcap_{s >t} \mathcal{G}_s^{\mu} \textrm{ with } \mathcal{G}_s^{\mu}=\sigma(\mu_r\, | \, r \leq s). 
\end{align}
Note that $\mathcal{G}_t \subseteq \mathcal{F}_t$ since the information on $\Sigma$ is lost.

\begin{proposition} \label{prop:Jacobi}
In the volatility stabilized model of~\eqref{eq:vstm}, 
the dynamics of the market weights $(\mu^1,\ldots,\mu^d)$ can be described by a multivariate Jacobi process of the form
\begin{align}\label{eq:Jacobi}
d\mu^i_t=\left(\frac{1+\alpha}{2}-\frac{d(1+\alpha)}{2}\mu_t^i\right)dt+\sqrt{\mu_t^i}(1-\mu_t^i)dB_t^i-\sum_{i\neq j}\mu_t^i\sqrt{\mu_t^j}dB_t^j,
\end{align}
where $B$ is a $d$-dimensional standard Brownian motion.
In particular  $(\mu^1,\ldots,\mu^d)$ is a polynomial process with respect to  
$(\mathcal{G}_t)$ with state space $\mathring{\Delta}^d$ and differential characteristics 
$(b^{\mu}, c^{\mu}, K^{\mu})$ of the form
\begin{equation}\label{eq:charmu}
\begin{split}
b^{\mu}_{i,t}&= \frac{1+\alpha}{2} -\mu_t^i\frac{d(1+\alpha)}{2}, \\
c^{\mu}_{ii,t}&=\mu_t^i(1-\mu_t^i),\quad c^{\mu}_{ij,t}=-\mu_t^i\mu_t^j ,\\
K^{\mu}_t&=0.
\end{split}
\end{equation}
\end{proposition}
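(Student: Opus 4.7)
The natural approach is to derive the dynamics of $\mu^i = S^i/\Sigma$ directly from \eqref{eq:vstm} and Proposition \ref{prop:charS} via It\^o's formula, and then read off the characteristics. Using $S^i = \mu^i \Sigma$, the SDE \eqref{eq:vstm} rewrites as
\begin{equation*}
dS^i_t = \tfrac{1+\alpha}{2}\Sigma_t\,dt + \sqrt{\mu^i_t}\,\Sigma_t\,dW^i_t,
\end{equation*}
so that $d\Sigma_t = \tfrac{d(1+\alpha)}{2}\Sigma_t\,dt + \Sigma_t\sum_j\sqrt{\mu^j_t}\,dW^j_t$, which confirms \eqref{eq:BS} with $dZ_t=\sum_j\sqrt{\mu^j_t}\,dW^j_t$ a standard Brownian motion by L\'evy's characterization (since $\sum_j\mu^j_t=1$). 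I would then apply It\^o to $f(s,\sigma)=s/\sigma$, using the cross-variation $d\langle S^i,\Sigma\rangle_t=\mu^i_t\Sigma_t^2\,dt$ and $d\langle\Sigma\rangle_t=\Sigma_t^2\,dt$.

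Collecting the four terms, the $dt$-corrections $+\mu^i_t\,dt$ (from $\langle\Sigma\rangle$) and $-\mu^i_t\,dt$ (from $\langle S^i,\Sigma\rangle$) cancel, leaving the drift $\tfrac{1+\alpha}{2}-\tfrac{d(1+\alpha)}{2}\mu^i_t$, in agreement with \eqref{eq:charmu}. The martingale part reduces to $\sqrt{\mu^i_t}\,dW^i_t-\mu^i_t\,dZ_t=\sqrt{\mu^i_t}(1-\mu^i_t)\,dW^i_t-\sum_{j\neq i}\mu^i_t\sqrt{\mu^j_t}\,dW^j_t$, so we may take $B=W$ to obtain \eqref{eq:Jacobi}. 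A direct computation of the quadratic covariation,
\begin{equation*}
d\langle\mu^i,\mu^j\rangle_t=\bigl(\delta_{ij}\sqrt{\mu^i_t\mu^j_t}-\mu^j_t\sqrt{\mu^i_t}\sqrt{\mu^i_t}-\mu^i_t\sqrt{\mu^j_t}\sqrt{\mu^j_t}+\mu^i_t\mu^j_t\bigr)dt=(\delta_{ij}\mu^i_t-\mu^i_t\mu^j_t)\,dt,
\end{equation*}
then yields $c^\mu$ as in \eqref{eq:charmu}; $K^\mu=0$ since there are no jumps.

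Finally, the polynomial property in the sense of Definition~\ref{def:poly} is immediate from \eqref{eq:charmu}: the drift is affine in $\mu$, the diffusion coefficient is quadratic in $\mu$, and there is no jump term. The remaining delicate point, and the one I expect to be the main obstacle, is the Markov property with respect to the smaller filtration $(\mathcal{G}_t)$ rather than $(\mathcal{F}_t)$, since the driving $W$ is only known to be $(\mathcal{F}_t)$-adapted and $(\mathcal{G}_t)\subsetneq(\mathcal{F}_t)$. I would handle this via the martingale problem formulation recalled in Remark~\ref{rem:martingaleprob}: since the coefficients $b^\mu,c^\mu$ depend only on $\mu$, the computation above shows that $\mu$ solves the martingale problem for the polynomial generator on $\mathring{\Delta}^d$ associated with \eqref{eq:charmu}; well-posedness of this martingale problem, established by Filipovi\'c and Larsson for polynomial diffusions on the simplex (\cite{FL:14}), then yields that $\mu$ is a time-homogeneous $(\mathcal{G}_t)$-Markov process and hence a polynomial process in the sense of Definition~\ref{def:poly}.
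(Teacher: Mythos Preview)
Your proof is correct and follows essentially the same route as the paper: derive the characteristics of $\mu$ from those of $(S,\Sigma)$ via It\^o's formula, read off \eqref{eq:charmu}, and then invoke well-posedness of the martingale problem for the polynomial generator on the simplex to conclude that $\mu$ is a time-homogeneous Markov (hence polynomial) process in its own filtration. The paper computes the characteristics by citing a general transformation-of-characteristics formula rather than your explicit SDE manipulation, and cites \cite{CLS:16} instead of \cite{FL:14} for well-posedness, but the substance is identical; if anything, your direct appeal to well-posedness for the $(\mathcal{G}_t)$-Markov property is cleaner than the paper's additional remark that the Markov property ``is preserved by passing to a coarser filtration,'' which is not a general fact and ultimately rests on the same well-posedness argument you give.
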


\begin{remark}\label{rem:posexcess}
\begin{enumerate}
\item As examples of sufficiently volatile models, volatility stabilized market models exhibit
a constant positive excess  growth rate $\gamma_{*}^{\mu}$ (compare with \eqref{eq1}), given by 
\[
\gamma_{*}^{\mu}=\frac{1}{2}\sum_{i=1}^d \mu^i d\langle \log \mu^i \rangle=\frac{1}{2}\sum_{i=1}^d \mu^i c^{\log(\mu)}_{ii}=
\frac{1}{2}\sum_{i=1}^d \frac{c^{\mu}_{ii}}{\mu^i}.
\]
From the above proposition it is easily seen that $\gamma_{*}^{\mu}=\frac{d-1}{2}$ and thus 
strong long-only relative arbitrages can be generated with the entropy function (see \cite[Example 11.1]{FK:09}). 
\item As recently shown in \cite{FKR:16}, a positive excess growth rate is in general not sufficient for the existence of relative arbitrages over \emph{arbitrary short} time horizons.
However, the class of volatility stabilized market models also allows for this type of arbitrage, as first proved in \cite{BF:08}. Indeed,  Assumption (2.8) in this paper, namely that the instantaneous variance of the minimal market weight satisfies  $c^{\mu}_{(d)(d)} \geq K \mu^{(d)}$ for some constant $K$,  is clearly satisfied as seen from \eqref{eq:charmu}. Here, the index $(d)$ denotes the minimum.
\end{enumerate}
\end{remark}

\begin{notation}\label{notationcov}
In the following proposition and subsequent theorems we write $c^{\mu,S}_{ij}$ for
the instantaneous covariance between  $\mu^i$ and $S^j$ and $c^{\Sigma, \mu}_{i}$ for the instantaneous covariance between $\mu^i$ and $\Sigma$ and similarly $c_i^{\Sigma, S}$.
\end{notation}

\begin{proposition}\label{prop:jointproc}
Consider the volatility stabilized model of~\eqref{eq:vstm}. Then, the joint process $(\mu,S)$ is a polynomial process with respect to $(\mathcal{F}_t)$ taking values in $\mathring{\Delta}^d \times \mathbb{R}^d_{++}$. In particular, the instantaneous covariance between  $\mu^i$ and $S^j$ as well as $\mu^j$ and $S^i$ for $i \neq j$ is given by 
\begin{align*}
c^{\mu,S}_{ij,t}= c^{\mu,S}_{ji,t}= -\mu_t^i S_t^j=-\mu_t^j S_t^i, \quad i \neq j
\end{align*}
and  between $S_t^i$ and $\mu_t^i$  by
\[
c^{\mu,S}_{ii,t}= S_t^i - \mu_t^i S_t^i.
\]
Moreover, the instantaneous covariance between $\Sigma$ and $\mu^i$ vanishes, i.e. $
c^{\mu,\Sigma}_{i,t}=0$ for all $i \in \{1, \ldots,d\}$. Hence $\mu$ and $\Sigma$ are independent.
\end{proposition}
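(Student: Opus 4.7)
The plan is to derive everything by applying It\^o's formula to $\mu^i = S^i/\Sigma$ and then compute the cross-characteristics directly from the SDE \eqref{eq:vstm}. Write $f_i(s) = s_i/\sum_j s_j$ so that $\partial_{s_k} f_i(s) = (\delta_{ik} - \mu^i)/\Sigma$ for $s \in \mathbb{R}_{++}^d$. Substituting the diffusion coefficient $S^k/\sqrt{\mu^k}$ of $S^k$ from \eqref{eq:vstm}, the martingale part of $d\mu^i$ becomes
\begin{equation*}
\sum_{k=1}^d \frac{\delta_{ik} - \mu^i_t}{\Sigma_t}\, \frac{S^k_t}{\sqrt{\mu^k_t}}\, dW^k_t = \sum_{k=1}^d (\delta_{ik} - \mu^i_t)\sqrt{\mu^k_t}\, dW^k_t,
\end{equation*}
which already recovers the representation used in Proposition \ref{prop:Jacobi}. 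From this representation and the diffusion coefficient of $S^j$ I can read off
\begin{equation*}
d\langle \mu^i, S^j\rangle_t = \sum_{k=1}^d (\delta_{ik} - \mu^i_t)\sqrt{\mu^k_t}\, \frac{S^j_t}{\sqrt{\mu^j_t}}\, \delta_{jk}\, dt = (\delta_{ij} - \mu^i_t)\, S^j_t\, dt,
\end{equation*}
which yields $c^{\mu,S}_{ii,t} = S^i_t - \mu^i_t S^i_t$ and, for $i\neq j$, $c^{\mu,S}_{ij,t} = -\mu^i_t S^j_t$; the symmetry $\mu^i S^j = S^iS^j/\Sigma = \mu^j S^i$ gives the claimed equality $c^{\mu,S}_{ij,t}=c^{\mu,S}_{ji,t}$.

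Summing $j$ over $\{1,\ldots,d\}$ immediately produces
\begin{equation*}
c^{\mu,\Sigma}_{i,t} = \sum_{j=1}^d (\delta_{ij}-\mu^i_t)\, S^j_t = S^i_t - \mu^i_t \Sigma_t = 0,
\end{equation*}
so $\mu$ and $\Sigma$ are orthogonal as continuous semimartingales. To upgrade orthogonality to independence I would exhibit the Brownian motion $Z$ of \eqref{eq:BS} explicitly as $dZ_t = \sum_j \sqrt{\mu^j_t}\, dW^j_t$ (Lévy: its quadratic variation equals $\sum_j \mu^j_t\, dt = dt$) and note that the driving martingale of $\mu^i$ above has vanishing covariation with $Z$; combined with the fact that the drift and diffusion coefficients of $\mu$ depend only on $\mu$ itself (Proposition \ref{prop:Jacobi}), this realises $\mu$ as the strong solution of an SDE driven by the $(d-1)$-dimensional Brownian motion orthogonal to $Z$, which is by construction independent of $Z$, hence of $\Sigma$.

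For the joint polynomial property, collect the characteristics computed so far together with those already obtained in Propositions \ref{prop:charS} and \ref{prop:Jacobi}. Every entry of the joint drift $(b^\mu, b^S)$ is affine in $(\mu,S)$, and every entry of the joint diffusion matrix $c^{(\mu,S)}$, whose off-diagonal blocks were just computed, is at most quadratic in $(\mu,S)$; since $K^{(\mu,S)}=0$ and the state space $\mathring{\Delta}^d\times\mathbb{R}^d_{++}$ is contained in $\mathbb{R}^{2d}$, the criteria of Definition \ref{def:poly} are satisfied and $(\mu,S)$ is polynomial with respect to $(\mathcal{F}_t)$.

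The main subtlety is the step from $c^{\mu,\Sigma}=0$ to actual independence of $\mu$ and $\Sigma$: zero quadratic covariation alone does not imply independence for general semimartingales, so the argument must rely on the explicit Brownian representation of $Z$ and on the fact that the characteristics of $\mu$ are autonomous (do not depend on $\Sigma$). Once this is in place, uniqueness of the respective martingale problems (Remark \ref{rem:martingaleprob}) allows one to conclude independence of the laws.
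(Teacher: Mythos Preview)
Your proof is correct and follows essentially the same route as the paper: both arguments compute the cross-characteristics $c^{\mu,S}$ and $c^{\mu,\Sigma}$ by applying It\^o's formula to $\mu^i=S^i/\Sigma$ and then verify that all entries of the joint drift and diffusion are of the required polynomial degree. The only cosmetic difference is that the paper packages the computation through the transformation formula of \cite[Proposition~2.5]{K:06} applied to $f(\tilde S)=(\mu,S,\Sigma)$ with $\tilde S=(S,\Sigma)$, whereas you work directly with the Brownian coefficients $(\delta_{ik}-\mu^i)\sqrt{\mu^k}$ and $S^j/\sqrt{\mu^j}$; the resulting bilinear forms are identical. On the independence step you are in fact more explicit than the paper, which simply asserts that $c^{\mu,\Sigma}=0$ together with the autonomous dynamics of $\mu$ and $\Sigma$ yields independence; your observation that one should invoke the explicit representation $dZ_t=\sum_j\sqrt{\mu^j_t}\,dW^j_t$ and well-posedness of the respective martingale problems is exactly what is needed to make that inference rigorous.
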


\subsection{Definition and characterization of polynomial market weight and asset price models}\label{subsection:pmwapm}

Inspired by the above analyzed volatility stabilized models, we now aim to characterize all diffusion processes $(\mu,S)$ satisfying relation \eqref{eq:weights} and taking values in $\Delta^d \times \mathbb{R}^d_+$ such that $(\mu,S)$ is jointly polynomial and $\mu$ Markovian in its own filtration. The reason for assuming the latter is motivated by the fact that we are mainly interested in the relative performance with respect to the market and in modeling the capital distribution curves $\log k \mapsto \log \mu_{(k)}$. As outlined in Section \ref{sec:setting} modeling solely $\mu$ is sufficient for these purposes. Assuming $\mu$ to be Markovian is therefore reasonable, in particular in view of the high dimensional applications we are interested in.

We start by giving a formal definition of the class of polynomial market weight and asset price models:

\begin{definition}\label{def:polymodel}
Let $(\mu, S)$ be a stochastic process 
defined on a filtered probability space $(\Omega, (\mathcal{F}_t), \mathcal{F},P)$  taking values in  $\Delta^d \times \mathbb{R}_+^d $ such that $\Sigma >0$ and
$\mu^i = \frac{S^i}{\Sigma}$ for all  $i \in \{1,\ldots,d\}$. We call $(\mu, S)$ a \emph{polynomial market weight and asset price model} if 
\begin{enumerate}
\item the weights process $\mu$ is Markovian with respect to its natural filtration $(\mathcal{G}_t)$ (made right continuous) as defined in~\eqref{eq:Gfilt} and
\item the joint process $(\mu, S)$ is a polynomial process.
\end{enumerate}
\end{definition}

\begin{remark}
\begin{enumerate}
\item 
Note that the second requirement of the above definition is strong in the sense that for instance both components $\mu$ and $S$ can be polynomial processes but their covariance structure is of non-quadratic form so that the joint polynomial property is lost. We shall encounter such cases in Proposition~\ref{prop:main} below. Nevertheless the joint polynomial property is relevant when it comes to computing joint moments of $\mu$ and $S$.
\item Note that we cannot start the process at any point in $\Delta^d \times \mathbb{R}_+^d $ since \eqref{eq:weights} is required to hold true. In other words, the state space can be bijectively mapped to $\Delta^d \times \mathbb{R}_{++} $.
\end{enumerate}
\end{remark}

In order to characterize polynomial market weight and asset price models in terms of parameter restrictions let us introduce the following definition. Indeed, the parameter set introduced therein characterizes polynomial diffusion process on $\Delta^d$ (see Proposition \ref{prop:statespace} and  
\cite[Proposition 6.6]{FL:14}).

\begin{definition}\label{def:parameters}
We call a triplet $(\beta^{\mu}, B^{\mu}, \gamma^{\mu})$ with $\beta^{\mu} \in \mathbb{R}^d$, $B^{\mu} \in \mathbb{R}^{d\times d}$ and  $\gamma^{\mu} \in \mathbb{S}^d$ \emph{admissible simplex parameter set} if
\begin{itemize}
\item  $\gamma^{\mu}$ has nonnegative off-diagonal elements and all its diagonal elements are equal to $0$,
\item $(B^{\mu})^{\top} \mathbf{1}+(\beta^{\top}\mathbf{1})\mathbf{1}=0$ and $\beta^{\mu}_i+B^{\mu}_{ij}\geq 0$ for all $i,j\neq i \in \{1, \ldots,d\}$.
\end{itemize}
\end{definition}

We are now ready to state the announced characterization of polynomial market weight and asset price models with continuous trajectories.

\begin{theorem}\label{th:main}
The following assertions are equivalent:
\begin{enumerate}
\item The process $(\mu,S)$ is a polynomial market weight and asset price model with continuous trajectories.
\item The processes $\mu$ and $\Sigma$ are independent polynomial processes on $\Delta^d$ and $\mathbb{R}_{++}$ respectively.
\end{enumerate}
Each of the above conditions implies the following assertion: 
\begin{enumerate}
\item[(iii)]There exists an admissible simplex parameter set $(\beta^{\mu}, B^{\mu}, \gamma^{\mu})$ and parameters $\kappa, \phi \in \mathbb{R}_+$ satisfying $2 \kappa -\phi \geq 0$, $\lambda, \sigma \in \mathbb{R}$, such that the differential characteristics of $ (\mu, S, \Sigma)$ are given by 
\begin{align}
b^{\mu}_{i,t}&=\beta^{\mu}_i+\sum_{j=1}^d B^{\mu}_{ij}\mu_t^j, &\quad &b^{S}_{i,t}=\beta^{\mu}_i \Sigma_t+ \sum_{j=1}^d B^{\mu}_{ij}S_t^j +  \kappa \mu_t^i+ \lambda S_t^i,\notag\\ 
c^{\mu}_{ii,t}&=\sum_{i\neq j}\gamma^{\mu}_{ij} \mu_t^i \mu_t^j,&\quad & 
c^{\mu}_{ij,t}= -\gamma^{\mu}_{ij}\mu_t^i\mu_t^j, \quad i\neq j,\notag\\
c^{S}_{ii,t}&=\phi S_t^i\mu_t^i+\sigma^2 (S_t^i)^2+\sum_{j\neq i} \gamma^{\mu}_{ij} S_t^i S_t^j, &\quad &
c^{S}_{ij,t}=\phi S_t^i \mu^j+ \sigma^2 S_t^i S_t^j -\gamma^{\mu}_{ij} S_t^i S_t^j, \quad i \neq j,
\notag\\
c^{\mu, S}_{ii,t}&= \sum_{ j \neq i} \gamma^{\mu}_{ij} S_t^i\mu_t^j, &\quad 
&c^{\mu, S}_{ij,t}= c^{\mu, S}_{ji,t} =- \gamma^{\mu}_{ij} \mu_t^i S_t^j = - \gamma^{\mu}_{ij} \mu_t^j S_t^i, \quad i \neq j,
\notag\\
b^{\Sigma}_t&=\kappa+\lambda \Sigma_t, &\quad &
c^{\Sigma}_t= \phi \Sigma_t+ \sigma^2\Sigma^2_t,\notag\\ 
c^{\Sigma, S}_{i,t}&= \phi S_t^i+\sigma^2S_t^i \Sigma_t,&\quad &
c^{\Sigma, \mu}_{i,t}=0.\notag
\end{align}
\end{enumerate}

Conversely, for an admissible simplex parameter set $(\beta^{\mu}, B^{\mu}, \gamma^{\mu})$ and parameters $\kappa, \phi \in \mathbb{R}_+$, $\lambda, \sigma \in \mathbb{R}$, there exists a polynomial market weight and asset price model whose differential characteristics are of the above form.

\end{theorem}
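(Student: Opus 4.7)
The plan is to prove $(\mathrm{ii}) \Rightarrow (\mathrm{i}) \Rightarrow (\mathrm{iii})$ and then derive the converse existence by an independent coupling of the two factors. For $(\mathrm{ii}) \Rightarrow (\mathrm{i})$: if $\mu$ and $\Sigma$ are independent polynomial processes, the joint process $(\mu,\Sigma)$ is polynomial on $\Delta^d \times \mathbb{R}_{++}$ because mixed moments factor and each marginal semigroup preserves polynomial degree, so their tensor does. The map $(\mu,\Sigma)\leftrightarrow(\mu,S)$ given by $S^i=\mu^i\Sigma$ and $\Sigma=\sum_j S^j$ is polynomial in both directions, so $(\mu,S)$ is a polynomial It\^o semimartingale on the constraint surface, and the Markovianity of $\mu$ in its own filtration is immediate from independence.

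For $(\mathrm{i}) \Rightarrow (\mathrm{iii})$: I would apply the appendix characterization of polynomial diffusions on $\Delta^d \times \mathbb{R}_+^m$ (with $m=1$) to the enlarged process $(\mu,\Sigma)$, which is polynomial because $\Sigma=\sum_j S^j$ is a polynomial functional of $(\mu,S)$. This produces an admissible parametrization of $b^{(\mu,\Sigma)}$ and $c^{(\mu,\Sigma)}$; the $\mu$-part recovers Filipovi\'c-Larsson's admissible triple $(\beta^{\mu}, B^{\mu}, \gamma^{\mu})$. The hypothesis that $\mu$ is Markovian in $(\mathcal{G}_t)$ forces $b^{\mu}$ and $c^{\mu}$ to be functions of $\mu$ only, and a moment-matching argument on the joint polynomial generator then eliminates any residual $\Sigma$-dependence in the $\mu$-block and, crucially, pins down $c^{\mu,\Sigma}\equiv 0$ while constraining $b^{\Sigma}$ and $c^{\Sigma}$ to be functions of $\Sigma$ alone. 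On $\mathbb{R}_{++}$, the only admissible polynomial-diffusion characteristics are the Black-Scholes-Bessel shape $b^{\Sigma}=\kappa+\lambda\Sigma$ and $c^{\Sigma}=\phi\Sigma+\sigma^2\Sigma^2$, with $\kappa,\phi\geq 0$ and $2\kappa-\phi\geq 0$ arising from non-attainment of $\{0\}$. The expressions for $b^{S}$, $c^{S}$, $c^{\mu,S}$ and $c^{\Sigma,S}$ then follow by a direct It\^o-formula computation applied to $S^i=\mu^i\Sigma$ using the (now decoupled) characteristics of $(\mu,\Sigma)$.

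For the converse existence and simultaneously $(\mathrm{iii})\Rightarrow(\mathrm{ii})$: given an admissible triple $(\beta^{\mu}, B^{\mu}, \gamma^{\mu})$ and parameters $\kappa,\phi\geq 0$ with $2\kappa-\phi\geq 0$, $\lambda,\sigma\in\mathbb{R}$, I take the unique-in-law polynomial diffusion $\mu$ on $\Delta^d$ supplied by Filipovi\'c-Larsson and, on an independent probability space, a polynomial diffusion $\Sigma$ on $\mathbb{R}_{++}$ solving $d\Sigma_t=(\kappa+\lambda\Sigma_t)\,dt + \sqrt{\phi\Sigma_t+\sigma^2\Sigma_t^2}\,dZ_t$ with $Z$ a Brownian motion independent of $\mu$; the condition $2\kappa-\phi\geq 0$ ensures $\Sigma$ does not hit $0$. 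Setting $S^i:=\mu^i\Sigma$ and invoking $(\mathrm{ii})\Rightarrow(\mathrm{i})$ yields a polynomial market weight and asset price model with exactly the characteristics of (iii). The main obstacle will be the decoupling step in $(\mathrm{i})\Rightarrow(\mathrm{iii})$: extracting from the joint polynomial structure together with Markovianity of $\mu$ in its own filtration that $c^{\mu,\Sigma}$ vanishes and that the $\Sigma$-characteristics are $\mu$-free. Once this is in hand, everything else reduces to Filipovi\'c-Larsson for the simplex factor, a one-dimensional polynomial classification for $\Sigma$, and a routine It\^o computation.
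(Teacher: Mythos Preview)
Your overall architecture matches the paper's, and your treatment of $(\mathrm{ii})\Rightarrow(\mathrm{iii})$ and the converse existence is essentially correct. The problem is the key step $(\mathrm{i})\Rightarrow(\mathrm{iii})$, where you propose to apply Proposition~\ref{prop:statespace} (the $\Delta^d\times\mathbb{R}_+^m$ classification) to $(\mu,\Sigma)$ on the grounds that ``$\Sigma=\sum_j S^j$ is a polynomial functional of $(\mu,S)$.'' This inference is not valid: a polynomial (even bijective) change of coordinates does \emph{not} preserve the polynomial-process property, because the degree filtration changes. Concretely, $S^i=\mu^i\Sigma$ has degree $2$ in $(\mu,\Sigma)$, so $\mathcal{P}_k$ computed in $(\mu,S)$-coordinates is strictly larger than $\mathcal{P}_k$ in $(\mu,\Sigma)$-coordinates (already for $k=1$ and $d=2$ the dimensions are $4$ versus $3$). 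Hence, knowing that the generator preserves $\mathcal{P}_k((\mu,S))$ does not tell you it preserves $\mathcal{P}_k((\mu,\Sigma))$; a priori $b^\Sigma=\sum_i b^{S}_i$ is only degree~$1$ in $(\mu,S)$, which after substituting $S^k=\mu^k\Sigma$ becomes degree~$2$ in $(\mu,\Sigma)$ unless all the coefficients on the $S^k$'s coincide. That they do coincide is exactly the content to be proved, so applying Proposition~\ref{prop:statespace} to $(\mu,\Sigma)$ at the outset is circular. The same degree-mismatch issue undercuts your abstract argument for $(\mathrm{ii})\Rightarrow(\mathrm{i})$: the sentence ``polynomial in both directions, so $(\mu,S)$ is polynomial'' is not a valid step, though here the direct It\^o computation (which you mention) repairs it.

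What the paper actually does for $(\mathrm{i})\Rightarrow(\mathrm{iii})$ is not go through $(\mu,\Sigma)$ at all. It first uses Markovianity of $\mu$ in $(\mathcal{G}_t)$ to get the admissible simplex form for $b^\mu,c^\mu$. Then it writes the most general degree-$1$/degree-$2$ expressions for $b^\Sigma$, $c^\Sigma$, $c^{\Sigma,\mu}$ \emph{as polynomials in $(\mu,S)$}, expands $b^{S}_i$, $c^{S}_{ii}$, $c^{S}_{ij}$, $c^{\mu,S}_{ij}$ via $S^i=\mu^i\Sigma$ and It\^o's product rule, and imposes that all these must again be of degree $\le 1$ or $\le 2$ in $(\mu,S)$ together with the consistency $b^\Sigma=\sum_i b^S_i$, $c^\Sigma=\sum_{i,j} c^S_{ij}$. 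It is this system of polynomial constraints---in particular the requirement that $c^{\mu,S}_{ij}$ be quadratic in $(\mu,S)$---that forces all $\lambda_k$ equal, $\eta=0$, $A^i_k=0$ for $k\ne i$, and hence $c^{\Sigma,\mu}\equiv 0$. In other words, the ``moment-matching argument'' you allude to is not a tidy corollary of Proposition~\ref{prop:statespace} applied to $(\mu,\Sigma)$; it \emph{is} the proof, and only after it is completed can one conclude a posteriori that $(\mu,\Sigma)$ is polynomial on $\Delta^d\times\mathbb{R}_{++}$.
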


\begin{remark}
Note that listing all characteristics of the process $(\mu, S, \Sigma)$ is of course redundant and they could be determined from the knowledge of the characteristics of $S$ or $(\mu, \Sigma)$. 
\end{remark}

\begin{remark}
\begin{enumerate}
\item 
In the case of volatility stabilized models the above parameters take the following values: $\beta^{\mu}=\frac{1+\alpha}{2} \mathbf{1}$, $B^{\mu}= -d\frac{1+\alpha}{2} I_d$, $\gamma^{\mu}_{ij} =1$ for all $i \neq j$, $\kappa= \phi =0$, $\lambda=d(\frac{1+\alpha}{2})$, $\sigma^2=1$. 
\item In comparison with Remark \ref{rem:small}, we see that the total capitalization process $\Sigma$ is not necessarily a Black-Scholes model, but can correspond to an affine process, more precisely a CIR process if $\sigma=0$.
\item Individual stocks can still reflect the fact that log-prices of smaller stocks tend to have greater volatility than the log-prices of larger ones, but allow additionally for correlation, in particular we have 
\[
c^{\log S}_{ii,t}=\frac{1}{\mu_t^i}\left(\frac{\sigma^2 S_t^i+\sum_{i\neq j}\gamma_{ij}^{\mu}  S_t^j+\phi \mu_t^i}{\Sigma_t}\right), \quad 
c^{\log S}_{ij,t}=-\gamma_{ij}^{\mu}+\sigma^2+ \frac{\phi}{\Sigma_t}.
\] 
This is the crucial advantage over volatility stabilized models.
\end{enumerate}
\end{remark}

\begin{remark}\label{rem:marketweightcalibration}
Let us here briefly comment on the practical applicability of these models in view of calibration, more precisely estimation of the instantaneous covariance of $\mu$. Note that the structure of $c^{\mu}$ allows to obtain $\gamma^{\mu}$ from any estimator of the integrated covariance of $\log{\mu}$ since 
\[
\frac{1}{T}\int_0^T c^{\log(\mu)}_{ij,t}dt=-\gamma_{ij}, \quad i\neq j.
\]
This simple relationship enables to estimate the parameters of the instantaneous covariance in any dimension and has been successfully implemented in \cite{CGGPST:17} to calibrate polynomial market weight and asset price models to market data of 300 stock constituting the MSCI World Index. In this specific case this means estimating 44 700 parameters of $\gamma^{\mu}$ which was under certain assumption on the correlation structure between the assets even possible on a scarse data set of a time series of around 300 time points.
As already mentioned in the introduction not only feasibility of this method could be proved but also the fact that polynomial models fit the data well and exhibit  promising empirical features in view of the shape and fluctuations of the capital distribution curves.
\end{remark}

The rather strong independence property between $\Sigma$ and $\mu$ which we obtained in the above theorem can be relaxed if we only require that each of the processes $\mu$ and $S$ is polynomial (but not necessarily jointly).  
Indeed in this case we have the following proposition.

\begin{proposition}\label{prop:main}
Consider an It\^o-diffusion process $S$ taking values in $\mathbb{R}_+^d $ such that $\Sigma= \sum_{i=1}^d S^i >0$. Let $(\mathcal{F}_t)$ be the filtration generated by $S$ as specified in \eqref{eq:filt}. Define  $\mu^i = \frac{S^i}{\Sigma}$ for all  $i \in \{1,\ldots,d\}$. 
Then the following assertions are equivalent:
\begin{enumerate}
\item  
Both processes $S$ and $\mu$ are polynomial (but not necessarily jointly).
\item The differential characteristics of $S$ are given by 
\begin{align*}
b^S_{i,t}&=  \beta^{\mu}_i \Sigma_t+ \sum_{k=1}^d B^{\mu}_{ik}S_t^k +   \lambda_i S_t^i,\\
c^{S}_{ii,t}&=(\zeta+2\lambda_i) (S_t^i)^2 +\sum_{k\neq i} \gamma^{\mu}_{ik} S_t^i S_t^k, &\quad &
c^{S}_{ij,t}=(\zeta +\lambda_i +\lambda_j -\gamma^{\mu}_{ij}) S_t^i S_t^j,
\end{align*} 
where $(\beta^{\mu}, B^{\mu}, \gamma^{\mu})$ is an admissible simplex parameter set and
and the parameters $\zeta \in \mathbb{R}$, $\lambda \in \mathbb{R}^d$ are such that 
$\zeta\mathbf{1}\mathbf{1}^{\top}+\Lambda -\gamma^{\mu} \in \mathbb{S}^d_+ $, where $\Lambda_{ij}=\lambda_i +\lambda_j$, in particular $\zeta+2\lambda_i \geq 0$ for all $i \in \ldots\{1, \ldots,d\}$.
\end{enumerate}

Moreover, each of the above conditions implies the following assertion. 
\begin{enumerate}
\item[(iii)] The differential characteristics of $(\mu, \Sigma)$ are given by
\begin{align*}
b^{\mu}_{i,t}&=\beta^{\mu}_i+\sum_{j=1}^d B^{\mu}_{ij}\mu_t^j, &\quad &
c^{\mu}_{ii,t}=\sum_{j\neq i}\gamma^{\mu}_{ij} \mu_t^i \mu_t^j,&\quad& 
c^{\mu}_{ij,t}= -\gamma^{\mu}_{ij}\mu_t^i\mu_t^j, \quad i\neq j,\\
b_t^{\Sigma}&= \sum_{i=1}^d \lambda_i S_t^i, &\quad &
c^{\Sigma}_t= \zeta \Sigma_t^2+ 2\Sigma_t \sum_{i=1}^d \lambda
_i S_t^i,   &\quad &
c^{\Sigma, \mu}_{i,t}=\lambda_i S_t^i -\mu_t^i\sum_{k=1}^d \lambda_k S_t^k,
\end{align*} 
and 
\begin{align*}
c^{\mu, S}_{ii,t}&= \sum_{j\neq i} \gamma^{\mu}_{ij} S_t^i\mu_t^j+\lambda_i \mu_t^i S_t^i-(\mu_t^i)^2\sum_{k=1}^d \lambda_k S_t^k,\\
c^{\mu, S}_{ij,t}&= - \gamma^{\mu}_{ij}  \mu_t^i S_t^j+ \lambda_i \mu_t^i S_t^j+\mu_t^i \mu_t^j \sum_{k=1}^d \lambda_k S_t^k, \quad i \neq j.
\end{align*}
\end{enumerate}

\end{proposition}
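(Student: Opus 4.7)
The plan is to establish the equivalence $(i) \Leftrightarrow (ii)$ by analysing the action of It\^o's formula on $\mu^i = S^i/\Sigma$ and matching monomials; the implication to $(iii)$ then follows from a direct computation. The basic structural facts I would rely on are that any polynomial It\^o-diffusion on $\mathbb{R}_+^d$ has a drift $b^S$ that is affine in $S$ and diffusion coefficients $c^S_{ij}$ that are at most quadratic in $S$, subject to the usual admissibility conditions on the boundary $\{s_i = 0\}$; the corresponding characterization on $\Delta^d$ via the admissible simplex parameter set of Definition \ref{def:parameters} is the second ingredient.

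For the direction $(ii) \Rightarrow (i)$, the given $b^S$ is affine and the $c^S_{ij}$ are quadratic in $S$. Combined with the admissibility of $(\beta^{\mu}, B^{\mu}, \gamma^{\mu})$ and the conditions $\zeta \mathbf{1}\mathbf{1}^\top + \Lambda - \gamma^{\mu} \in \mathbb{S}^d_+$ and $\zeta + 2\lambda_i \geq 0$, this yields valid data for a polynomial diffusion on $\mathbb{R}_+^d$, so $S$ is polynomial. To show $\mu$ is polynomial on $\Delta^d$ and Markov in its own filtration, I would apply It\^o's formula,
\begin{align*}
d\mu^i = \frac{1}{\Sigma}\bigl(dS^i - \mu^i\,d\Sigma\bigr) + \frac{\mu^i}{\Sigma^2}\,d\langle\Sigma\rangle - \frac{1}{\Sigma^2}\,d\langle S^i,\Sigma\rangle,
\end{align*}
and substitute the expressions from $(ii)$. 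A direct calculation should show that the $\Sigma$-dependence cancels completely in both $b^{\mu}$ and $c^{\mu}$, leaving precisely the $(\beta^{\mu}, B^{\mu}, \gamma^{\mu})$-polynomial diffusion on $\Delta^d$; autonomy of this SDE together with the well-posedness of the associated martingale problem recalled in Remark \ref{rem:martingaleprob} then delivers the Markov property of $\mu$ in its own filtration.

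For the converse $(i) \Rightarrow (ii)$, I would start from the most general polynomial form $b^S_i(s) = \alpha_i + \sum_k A_{ik} s_k$ and an arbitrary quadratic $c^S_{ij}(s)$, subject to admissibility on $\mathbb{R}_+^d$. Applying the same It\^o identity yields $b^{\mu}_i$ and $c^{\mu}_{ij}$ as rational functions in $s$ whose numerators are polynomials of degree at most three and whose denominators are $\Sigma$ or $\Sigma^2$. The requirement that these depend on $s$ only through $\mu$ forces cancellation of the $\Sigma$-denominators, which translates into explicit linear relations among the $\alpha_i$, $A_{ik}$ and the coefficients of $c^S$. The further requirement that the resulting $b^{\mu}, c^{\mu}$ form an admissible simplex parameter set pins down the remaining coefficients and produces exactly the five-parameter family $(\beta^{\mu}, B^{\mu}, \gamma^{\mu}, \zeta, \lambda)$ of $(ii)$. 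Implication $(iii)$ is then obtained by substituting the data of $(ii)$ back into It\^o's formula applied to $\Sigma = \sum_j S^j$ and the cross-brackets $\langle \mu^i, S^j\rangle$, $\langle \mu^i, \Sigma\rangle$, which is mechanical.

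The main technical obstacle lies in the direction $(i) \Rightarrow (ii)$: the monomial matching has to produce the bilinear form $c^S_{ij} = (\zeta + \lambda_i + \lambda_j - \gamma^{\mu}_{ij}) S^i S^j$ off the diagonal together with the correct $\Sigma$-linear and $\mu$-linear contributions to $b^S$. Linear independence of the monomials $1, s_1, \ldots, s_d, s_1^2, s_1 s_2, \ldots, s_d^2$ on the interior of $\mathbb{R}_+^d$ makes the matching unambiguous, but the bookkeeping is intricate because the same data must simultaneously produce an affine drift, a quadratic diffusion matrix, independence of $\Sigma$ in the weight dynamics, and admissibility on both $\mathbb{R}_+^d$ and $\Delta^d$. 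The positive semidefiniteness condition $\zeta \mathbf{1}\mathbf{1}^\top + \Lambda - \gamma^{\mu} \in \mathbb{S}^d_+$ emerges naturally as the diffusion-admissibility constraint on $\mathbb{R}_+^d$ once the factorization $c^S_{ij} = (\zeta + \lambda_i + \lambda_j - \gamma^{\mu}_{ij}) S^i S^j$ has been established.
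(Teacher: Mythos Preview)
Your approach is sound and would reach the same conclusion, but it inverts the paper's decomposition. For $(i)\Rightarrow(ii)$ the paper does not start from a general polynomial $S$ on $\mathbb{R}_+^d$ and pass to $\mu=S/\Sigma$ via the quotient rule; instead it first invokes the simplex classification (Proposition~\ref{prop:statespace} with $m=0$) to fix $(b^\mu,c^\mu)$ in terms of $(\beta^\mu,B^\mu,\gamma^\mu)$, writes $b^\Sigma,c^\Sigma$ as generic polynomials in $S$, and then applies It\^o's \emph{product} rule to $S^i=\mu^i\Sigma$. The constraint imposed is that the resulting $b^S_i=\Sigma b^\mu_i+\mu^i b^\Sigma+c^{\Sigma,\mu}_i$ and $c^S_{ij}$ be polynomial in $S$ alone, with no residual $\mu$-dependence; together with the consistency checks $b^\Sigma=\sum_i b^S_i$ and $c^\Sigma=\sum_{i,j}c^S_{ij}$ (automatic in your route, but the step that forces $\kappa=0$ and $A^i_k=(\lambda_i-c)\delta_{ik}$ in theirs) this pins down the parameters. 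The paper's direction is more economical because the unknowns are only the coefficients of $(b^\Sigma,c^\Sigma,c^{\Sigma,\mu})$ rather than the full $(b^S,c^S)$, and the product rule avoids the $1/\Sigma^2$ denominators that make your quotient computation heavier. Your route has the compensating advantage that admissibility on $\mathbb{R}_+^d$ is built in from the start; note however that $\zeta\mathbf{1}\mathbf{1}^\top+\Lambda-\gamma^\mu\in\mathbb{S}^d_+$ does not drop out of Proposition~\ref{prop:statespace} directly---the paper extracts it by a limiting argument sending the extra diagonal term $\Diag(\gamma^\mu s)\Diag(s)^{-1}$ to zero, and you would need the same step.
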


\begin{remark}
Note that for an admissible simplex parameter set $(\beta^{\mu}, B^{\mu}, \gamma^{\mu})$ and  parameters $\zeta \in \mathbb{R}$, $\lambda \in \mathbb{R}^d$  such that 
$\zeta\mathbf{1}\mathbf{1}^{\top}+\Lambda -\gamma^{\mu} \in \mathbb{S}^d_+ $, where $\Lambda_{ij}=\lambda_i +\lambda_j$,  the necessary conditions of Proposition \ref{prop:statespace} below for an $\mathbb{R}^d_+$-valued polynomial process are satisfied. Let us remark however, that only in specific parametric cases the well-posedness of the martingale problem for $S$ is known. This is for instance the case if $c^S_{ij}=0$, i.e. $\zeta+\lambda_i+\lambda_j=\gamma^{\mu}_{ij}$, which follows from \cite[Corollary 1.3]{BP:03}. Compare also with the assertions in \cite[Remark 6.5]{FL:14}.
\end{remark}

Finally combining Theorem \ref{th:main} and Proposition \ref{prop:main} yields the subsequent corollary stating that in polynomial market weight and asset price models with continuous trajectories, where $S$ is polynomial with respect to its own filtration, the total capitalization process $\Sigma$ is a Black Scholes model. So in this case we find a similar structure as in Proposition \ref{prop:charS} however with a much more general correlation structure between the assets.

\begin{corollary}\label{cor:main}
The following assertions are equivalent:
\begin{enumerate}
\item The process $(\mu,S)$ is a polynomial market weight and asset price model with continuous trajectories such that $S$ is polynomial with respect to its own filtration defined via \eqref{eq:filt}.
\item The processes $\mu$ and $\Sigma$ are independent polynomial processes on $\Delta^d$ and $\mathbb{R}_{++}$ respectively with the additional property that $\Sigma$ is a Black -Scholes model of the form
\[
d\Sigma_t=\lambda \Sigma_t dt + \sigma \Sigma_t dZ_t
\]
for some parameters $\lambda, \sigma \in \mathbb{R}$ and $Z$ a Brownian motion.
\end{enumerate}

\end{corollary}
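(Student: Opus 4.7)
The plan is to deduce the corollary as a direct consequence of Theorem~\ref{th:main} combined with Proposition~\ref{prop:main}, identifying precisely which parameters in the parametrisation of Theorem~\ref{th:main} must vanish once $S$ is required to be polynomial in its own filtration.

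For the implication (i)$\Rightarrow$(ii), I would first invoke Theorem~\ref{th:main} to obtain that $\mu$ and $\Sigma$ are independent polynomial processes on $\Delta^d$ and $\mathbb{R}_{++}$, respectively, with characteristics of $S$ given by
\begin{align*}
b^{S}_{i,t} &= \beta^{\mu}_i \Sigma_t + \sum_{j=1}^d B^{\mu}_{ij} S_t^j + \kappa \mu_t^i + \lambda S_t^i,\\
c^{S}_{ii,t} &= \phi S_t^i\mu_t^i + \sigma^2 (S_t^i)^2 + \sum_{j\neq i}\gamma^{\mu}_{ij} S_t^i S_t^j,\\
c^{S}_{ij,t} &= \phi S_t^i \mu_t^j + \sigma^2 S_t^i S_t^j - \gamma^{\mu}_{ij} S_t^i S_t^j, \quad i \neq j.
\end{align*}
The key observation is that the only appearances of $\mu$ in these expressions occur through the terms $\kappa\mu^i$ and $\phi S^i\mu^j = \phi (S^i S^j)/\Sigma$, which are rational (non-polynomial) functions of $S$ on the state space $\{S\in\mathbb{R}^d_+:\Sigma>0\}$. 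Since (i) requires the characteristics of $S$ to be polynomial in $S$ on this state space, a linear independence argument over $\mathbb{R}^d_{++}$ (for instance by fixing $\Sigma$ and varying $S^i$) forces $\kappa=0$ and $\phi=0$. Plugging these into the $\Sigma$-characteristics given in Theorem~\ref{th:main} yields $b^{\Sigma}_t = \lambda\Sigma_t$ and $c^{\Sigma}_t = \sigma^2\Sigma_t^2$, which is exactly the Black--Scholes SDE. Alternatively, one can match directly against Proposition~\ref{prop:main}: its expressions $b^{\Sigma}=\sum_i\lambda_i S^i$ and $c^{\Sigma}=\zeta\Sigma^2+2\Sigma\sum_i\lambda_i S^i$ must agree with Theorem~\ref{th:main}'s $\kappa+\lambda\Sigma$ and $\phi\Sigma+\sigma^2\Sigma^2$ identically in $S$, which forces $\kappa=0$, $\phi=0$, $\lambda_i=\lambda$ for all $i$ and $\sigma^2=\zeta+2\lambda$. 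This cross-check is reassuring and also gives back the explicit relation between the two parameter sets.

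For the converse (ii)$\Rightarrow$(i), I would apply the last statement of Theorem~\ref{th:main} with $\kappa=0$ and $\phi=0$, together with the admissible simplex parameter set determined by the polynomial diffusion $\mu$ on $\Delta^d$ (which is parametrised in this way by Proposition~\ref{prop:statespace}/\cite[Proposition 6.6]{FL:14}). This produces a polynomial market weight and asset price model $(\mu,S)$ with continuous trajectories. Substituting $\kappa=\phi=0$ into the characteristics of $S$ listed above gives
\[
b^{S}_{i,t}=\beta^{\mu}_i\Sigma_t+\sum_{j=1}^dB^{\mu}_{ij}S^j_t+\lambda S^i_t,\qquad c^{S}_{ii,t}=\sigma^2(S^i_t)^2+\sum_{j\neq i}\gamma^{\mu}_{ij}S^i_t S^j_t,
\]
and $c^S_{ij,t}=(\sigma^2-\gamma^{\mu}_{ij})S^i_tS^j_t$ for $i\neq j$, all of which are polynomial in $S$ once $\Sigma$ is rewritten as $\sum_kS^k$. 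Hence the characteristics of $S$ depend only on $S$, so by the martingale problem characterisation (Remark~\ref{rem:martingaleprob}) $S$ is a time-homogeneous Markovian It\^o-semimartingale in its own filtration with polynomial characteristics, i.e.~a polynomial process in the sense of Definition~\ref{def:poly}.

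The main obstacle I foresee is the rigorous justification of the step $\kappa=\phi=0$ in the forward direction: one must argue that the $\mu$-dependence cannot silently be absorbed into a polynomial expression in $S$ on the actual state space, which is two-dimensionally smaller than $\mathbb{R}^d_+$ because of the relation $\mu=S/\Sigma$. However, since the state space of $S$ contains an open subset of $\mathbb{R}^d_{++}$ on which $S\mapsto S^i/\Sigma$ is real-analytic but not polynomial, the standard uniqueness of polynomial representation on open sets resolves this, and the rest of the argument is a direct bookkeeping exercise between the two parametrisations.
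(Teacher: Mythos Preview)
Your proposal is correct and follows essentially the same approach as the paper: both directions are deduced from Theorem~\ref{th:main} by observing that the extra requirement on $S$ forces $\kappa=\phi=0$, which collapses the $\Sigma$-dynamics to Black--Scholes. The paper's proof is terser and simply asserts $\kappa=\phi=0$ without spelling out the rational-versus-polynomial argument you give, and it does not invoke Proposition~\ref{prop:main} at all; your added detail and cross-check are sound but not needed for the argument.
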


\subsubsection{Extension with jumps}\label{sec:jumps}

In the following proposition we consider an extension with jumps which still gives rise to a polynomial market weight and asset price model.

\begin{proposition}\label{prop:jump}
Let $\mu$ and $\Sigma$ be independent polynomial processes (both components possibly with jumps) on $\Delta^d$ and  $\mathbb{R}_{++}$ respectively.
Assume that the respective jump measures satisfy 
\begin{equation}\label{eq:jumpmeas}
\begin{split}
\mu &\mapsto \int (\xi^{\mu})^\mathbf{k} K(\mu,d\xi^{\mu}) \in \mathcal{P}_{2}(\Delta^d) \text{ for } |\mathbf{k}|=2 \\
 \Sigma &\mapsto \int (\xi^{\Sigma})^2 K(\Sigma,d\xi^{\Sigma}) \in \mathcal{P}_2(\mathbb{R}_{++}).
\end{split}
\end{equation}
Define $S^i=\mu^i \Sigma$ for $i \in \{1,\ldots, d\}$.
Then $(\mu,S)$ is a polynomial market weight and asset price model.
\end{proposition}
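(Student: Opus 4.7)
The plan is to verify the two conditions of Definition~\ref{def:polymodel} for the joint process $(\mu, S)$ with $S^i = \mu^i\Sigma$. Condition~(i) --- Markovianity of $\mu$ in its own filtration --- is immediate, since $\mu$ is by assumption polynomial and hence Markov. The substance of the proof is condition~(ii), namely that $(\mu, S)$ is itself polynomial, and I would establish this via the semigroup characterization in the theorem following Definition~\ref{def:poly}: show that $(m,s)\mapsto \mathbb{E}[f(\mu_t, S_t)\mid\mu_0 = m, S_0 = s]$ lies in $\mathcal{P}_k$ for every $f\in\mathcal{P}_k$.

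First, $(\mu, S)$ is a time-homogeneous Markovian It\^o-semimartingale on
\[
D := \{(m,s)\in\Delta^d\times\mathbb{R}_+^d : s^i = m^i\textstyle\sum_{j}s^j\ \text{for all}\ i\},
\]
because $(\mu,\Sigma)$ has these properties on $\Delta^d\times\mathbb{R}_{++}$ (being the product of two independent polynomial processes) and $\Phi(m,\sigma) := (m, m\sigma)$ is a polynomial bijection onto $D$. The required joint moment condition on $K^{(\mu,S)}$ reduces, via the two cases of $\mu$-jumps ($\eta = (\xi^{\mu}, \xi^{\mu}\Sigma_{-})$) and $\Sigma$-jumps ($\eta = (0, \mu_{-}\xi^{\Sigma})$), to those on $K^{\mu}, K^{\Sigma}$ combined with the boundedness of $\mu$ on $\Delta^d$ and the polynomial moments of $\Sigma$. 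By linearity it suffices to treat a monomial $f(m,s) = m^{\mathbf{a}}s^{\mathbf{b}}$ of total degree $k = |\mathbf{a}|+|\mathbf{b}|$, and using $S^i = \mu^i\Sigma$ and the independence of $\mu$ and $\Sigma$ I obtain
\[
\mathbb{E}[f(\mu_t, S_t)\mid\mu_0 = m, S_0 = s] = \mathbb{E}[\mu_t^{\mathbf{a}+\mathbf{b}}\mid\mu_0 = m]\cdot\mathbb{E}[\Sigma_t^{|\mathbf{b}|}\mid\Sigma_0 = \sigma] = p(m)\,q(\sigma),
\]
with $\sigma := \sum_i s^i$, $p\in\mathcal{P}_{|\mathbf{a}+\mathbf{b}|}(\Delta^d)$, and $q\in\mathcal{P}_{|\mathbf{b}|}(\mathbb{R}_{++})$ by the polynomial property of $\mu$ and $\Sigma$ separately.

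The key step is then to re-express each summand $m^{\mathbf{c}}\sigma^{l}$ of $p(m)q(\sigma)$ (where $|\mathbf{c}|\leq|\mathbf{a}+\mathbf{b}|$, $l\leq|\mathbf{b}|$) as a polynomial in $(m,s)$ of degree $\leq k$, exploiting the defining identity $m^i\sigma = s^i$ on $D$. If $l\leq|\mathbf{c}|$, pick exponents $0\leq p_i\leq c_i$ with $\sum_i p_i = l$ and rewrite $m^{\mathbf{c}}\sigma^{l} = \prod_i m_i^{c_i-p_i} s_i^{p_i}$, of degree $|\mathbf{c}|$. If $l > |\mathbf{c}|$, take $p_i = c_i$ for every $i$ and further expand $\sigma^{l-|\mathbf{c}|} = (\sum_j s^j)^{l-|\mathbf{c}|}$, yielding degree $l$. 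In both cases the $(m,s)$-degree is at most $\max(|\mathbf{c}|, l)\leq |\mathbf{a}|+|\mathbf{b}| = k$, so after summing monomials one obtains a representative of $\mathbb{E}[f(\mu_t,S_t)\mid\cdot\,]$ lying in $\mathcal{P}_k(D)$.

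The main obstacle is precisely this degree bookkeeping: because $\Phi$ is quadratic, the pullback $f\circ\Phi$ of a degree-$k$ polynomial has $(m,\sigma)$-degree up to $2k$, and naively the polynomial property of $(\mu,\Sigma)$ would place the conditional expectation only in $\mathcal{P}_{2k}$. It is the state-space constraint $s^i = m^i\sigma$ that collapses the degree back to $k$, and the absorption argument above is the concrete accounting realising this. If one instead verified condition~(ii) through the differential-characteristic formulation of Definition~\ref{def:poly}, the explicit hypothesis~\eqref{eq:jumpmeas} on the second moments of $K^{\mu}$ and $K^{\Sigma}$ would enter at exactly the analogous point, ensuring the jump contribution to $c^{S,S}$ (which picks up factors of $\Sigma^2$ from $\mu$-jumps and of $\mu_i\mu_j$ from $\Sigma$-jumps) lies in $\mathcal{P}_2(D)$ after the same absorption.
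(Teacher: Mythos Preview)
Your proof is correct and takes a genuinely different route from the paper. The paper verifies Definition~\ref{def:poly} directly: it computes $b^S$, $c^S$, $c^{\mu,S}$ and the compensator $K^{(\mu,S)}$ from those of $(\mu,\Sigma)$ via It\^o's product rule (using that independence forces $\mu$ and $\Sigma$ never to jump together), and then checks that each of the required combinations lies in $\mathcal{P}_1$, $\mathcal{P}_2$, $\mathcal{P}_{|\mathbf{k}|}$ as a function of $(\mu,S)$. You instead invoke the semigroup characterization, factor $\mathbb{E}[\mu_t^{\mathbf{a}+\mathbf{b}}\Sigma_t^{|\mathbf{b}|}\mid\cdot\,]$ by independence, apply the polynomial property of $\mu$ and $\Sigma$ separately, and then perform the absorption $m^{\mathbf{c}}\sigma^{l}\rightsquigarrow$ monomial in $(m,s)$ on the constrained state space. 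The paper's approach has the advantage of producing the explicit characteristics of $(\mu,S)$, which are informative in their own right. Your approach is shorter and, more interestingly, makes no use of hypothesis~\eqref{eq:jumpmeas}: only the sums $c^{\mu}_{ij}+\int\xi^{\mu}_i\xi^{\mu}_j K^{\mu}$ and $c^{\Sigma}+\int(\xi^{\Sigma})^2 K^{\Sigma}$ enter your argument, and these are already in $\mathcal{P}_2$ by the polynomial property alone. (In fact the paper's computation can be regrouped the same way, so~\eqref{eq:jumpmeas} appears to be a convenience rather than a genuine necessity; your route makes this transparent.)
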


\begin{remark}
\begin{enumerate}
\item
Note that the condition on the jump measure~\eqref{eq:jumpmeas} is anyhow satisfied for $|\mathbf{k}|\geq 3$ by the polynomial property of the processes $\mu$ and $\Sigma$.
\item For specifications of the jump structure of $\mu$ we refer to \cite[Section 6]{CLS:16}.
\item The assumptions on $\Sigma$ are satisfied if $\Sigma$ is for instance an affine CIR process with jumps.
\end{enumerate}
\end{remark}

\subsubsection{Polynomial processes on $\Delta^d \times \mathbb{R}^m_+$}

The proof of the above results relies on the following proposition which gives necessary conditions in terms of parameters for polynomial diffusion processes with state space $\Delta^d \times \mathbb{R}_+^m$. It is an extension of the results obtained in~\cite{FL:14}. 
For its formulation denote  $I=\{1, \ldots, d\}$ and $J=\{d+1, \ldots, d+m\}$. Moreover,  $x_I$ and $x_J$ stand for the vector $x$ consisting of the first $d$ and last $m$ elements respectively.

\begin{proposition}\label{prop:statespace}
Consider a polynomial diffusion process $X$ on $\Delta^d \times \mathbb{R}_+^m$, and denote  $I=\{1, \ldots, d\}$ and $J=\{d+1, \ldots, d+m\}$.
\begin{enumerate}
\item 
Then its diffusion matrix $c_t^X=c(X_t)$ is given by
\begin{align*}
c_{ii}(x)&=\sum_{i\neq j, 
j\in I}\gamma_{ij}x_ix_j, &\quad &(i \in I),\\
c_{ij}(x)&=-\gamma_{ij}x_ix_j, &\quad &(i,j \in I, i \neq j),\\
c_{ij}(x)&=0, &\quad& (i \in I, j \in J),\\
c_{jj}(x)&=\alpha_{jj}x_j^2+x_j(\phi_j+\theta_{(j)}^{\top} x_I+ \pi^{\top}_{(j)}x_J), &\quad& (j \in J),\\
c_{ij}(x)&=\alpha_{ij}x_ix_j, &\quad & (i,j \in J, i\neq j),
\end{align*}
where $\gamma_{ij}=\gamma_{ji} \in \mathbb{R}_+$, $\theta_{(j)} \in \mathbb{R}^d$, $\pi_{(j)} \in \mathbb{R}^m_+$ with $\pi_{(j)j}=0$, $\phi \in \mathbb{R}^m$ with $\phi_j \geq \max_{i\in I} \theta_{(j)i}^-$ and  $\alpha \in \mathbb{S}^m$ such that 
$\alpha+ \Diag (\Pi^{\top} x_J)\Diag(x_J)^{-1} \in \mathbb{S}^m_+$ for all $x_J \in \mathbb{R}^m_{++}$, where $\Pi \in \mathbb{R}^{m\times m}$ is the matrix with columns $\pi_{(j)}$.
\item The drift vector $b_t^X=b(X_t)$ satisfies
\begin{align*}
b(x)=\begin{pmatrix}
\beta_{I}+B_{II}x_I\\
\beta_J+ B_{JI}x_I+B_{JJ}x_J
\end{pmatrix}, 
\end{align*}
where $\beta \in \mathbb{R}^{d+m}$, $B \in \mathbb{R}^{d+m\times d+m}$ such that $B^{\top}_{II} \mathbf{1}+(\beta_I^{\top}\mathbf{1})\mathbf{1}=0$ and $\beta_i+B_{ij}\geq 0$ for all $i,j \in I$ with $j \neq i$, 
 $\beta_j \geq\max_{i \in I} B^-_{ji}$ for all $j \in J$ and $B_{JJ} \in \mathbb{R}^{m\times m}$ has nonnegative off-diagonal elements.
\end{enumerate}

\end{proposition}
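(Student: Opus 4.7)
The strategy is to combine two structurally independent sets of constraints, in the spirit of \cite[Proposition~6.6]{FL:14}. First, the polynomial property in Definition~\ref{def:poly} forces each entry of $b$ to be an affine polynomial and each entry of $c$ to be a polynomial of degree at most two in the joint variable $x=(x_I,x_J)$. Second, invariance of the state space $\Delta^d\times\mathbb{R}_+^m$ under the diffusion dictates two families of boundary identities: (a) on each coordinate face $\{x_k=0\}$ with $k\in I\cup J$, one has $c_{kk}(x)=0$ and $b_k(x)\geq 0$, and by positive semidefiniteness of $c$ also $c_{kj}(x)=0$ for every $j$; (b) on the simplex hyperplane $\{\sum_{i\in I}x_i=1\}$, the identities $\sum_{i\in I}b_i(x)=0$ and $\sum_{i\in I}c_{ij}(x)=0$ must hold for every $j$. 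All subsequent derivations are obtained by combining these facts with the degree bound and with the fact that polynomial identities on the state space hold only modulo the linear ideal generated by $\sum_{i\in I}x_i-1$.

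The first main step is to establish the vanishing of the coupling block $c_{ij}$ with $i\in I$ and $j\in J$: the boundary conditions force both $x_i$ and $x_j$ to divide $c_{ij}$, so by the degree bound $c_{ij}(x)=\alpha_{ij}x_ix_j$; inserting this into $\sum_{i\in I}c_{ij}(x)=0$ and reducing modulo $\sum_{i\in I}x_i-1$ forces $\alpha_{ij}=0$. For the $I\times I$ block the same divisibility argument yields $c_{ij}(x)=-\gamma_{ij}x_ix_j$ for $i\neq j$ with $\gamma_{ij}=\gamma_{ji}\in\mathbb{R}$, and the simplex identity applied row by row, together with a choice of canonical representative modulo $\sum_{i'\in I}x_{i'}-1$, fixes $c_{ii}(x)=\sum_{i'\neq i}\gamma_{ii'}x_ix_{i'}$; the nonnegativity $\gamma_{ij}\geq 0$ then follows from positive semidefiniteness of $c$ at the vertices of $\Delta^d$.

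For the $J\times J$ block I would use only $x_j$-divisibility: $c_{jj}(x)=\alpha_{jj}x_j^2+x_jL_j(x)$ with $L_j$ affine, and the double-divisibility argument gives $c_{ij}(x)=\alpha_{ij}x_ix_j$ for $i\neq j$ in $J$. Writing $L_j(x)=\phi_j+\theta_{(j)}^\top x_I+\pi_{(j)}^\top x_J$, the requirement that $L_j(x)\geq 0$ on the face $\{x_j=0\}$ intersected with the state space pins down $\pi_{(j)j}=0$, $\pi_{(j)}\in\mathbb{R}_+^m$ and $\phi_j\geq\max_{i\in I}\theta_{(j)i}^-$, while the global positive semidefiniteness of the $J\times J$ sub-block on $\mathring\Delta^d\times\mathbb{R}_{++}^m$ is equivalent to $\alpha+\Diag(\Pi^\top x_J)\Diag(x_J)^{-1}\in\mathbb{S}^m_+$ for all $x_J\in\mathbb{R}^m_{++}$.

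Finally, for the drift I would write $b_k(x)=\beta_k+\sum_\ell B_{k\ell}x_\ell$ and exploit the simplex identity $\sum_{i\in I}b_i=0$ on the state space, which modulo $\sum_{i\in I}x_i-1$ and a canonical choice of representative forces $B_{i\ell}=0$ for $i\in I$, $\ell\in J$ together with $B_{II}^\top\mathbf{1}+(\beta_I^\top\mathbf{1})\mathbf{1}=0$. The inward-pointing conditions $b_i(x)\geq 0$ on $\{x_i=0\}$ intersected with the state space give $\beta_i+B_{ij}\geq 0$ for $i\neq j$ in $I$ and, by minimising over the simplex and letting $x_\ell$ grow for $\ell\in J$, also $\beta_j\geq\max_{i\in I}B_{ji}^-$ for $j\in J$ and nonnegative off-diagonal entries of $B_{JJ}$. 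The main obstacle I anticipate is purely bookkeeping: one must carefully distinguish identities that hold on all of $\mathbb{R}^{d+m}$ from those that hold only on the state space, i.e.\ modulo the ideal generated by $\sum_{i\in I}x_i-1$, since this determines both the uniqueness of the parameter representation and the clean form of $(\beta,B,\gamma,\alpha,\phi,\theta,\pi)$ stated in the proposition.
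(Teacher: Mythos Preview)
Your approach is essentially the same as the paper's: both combine the polynomial degree bounds with the boundary constraints coming from invariance of $\Delta^d\times\mathbb{R}_+^m$ (in the paper, via \cite[Theorem~5.1]{FL:14} together with the auxiliary Lemmas~\ref{lem:representations} and~\ref{lem:matrix}), and the derivations of $c_{II}$, $c_{JJ}$ and $b$ proceed in the same way. The one notable difference is the $I\times J$ cross-block: having reduced to $c_{ij}(x)=\alpha_{ij}x_ix_j$, you kill the constants via the simplex row-sum identity $\sum_{i\in I}c_{ij}=0$, whereas the paper uses the Cauchy--Schwarz inequality $c_{ii}c_{jj}\ge c_{ij}^2$ and a limiting argument sending $x_i\to 1$ along the simplex. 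Your route is slightly cleaner here.

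One small gap to watch: the simplex identity $\sum_{i\in I}b_i=0$ by itself only yields $\sum_{i\in I}B_{i\ell}=0$ for each $\ell\in J$, not $B_{i\ell}=0$ individually. To conclude $B_{IJ}=0$ you must combine this with the inward-pointing condition: from $b_i\ge 0$ on $\{x_i=0\}$ and letting $x_\ell\to\infty$ for $\ell\in J$ one gets $B_{i\ell}\ge 0$, and nonnegative numbers summing to zero are all zero. You already have both ingredients in your outline; just make sure to invoke them together at this step rather than attributing $B_{IJ}=0$ to the simplex identity alone. (The paper's own proof is equally terse at precisely this point.)
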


\begin{remark}
Note that the drift part for the components in $\Delta^d$ could also be written as 
\[
\beta_{I}+B_{II}x_I=\widetilde{B}_{II} x_I,
\]
where $\widetilde{B}_{II,ij}=\beta_{I,i}+B_{II,ij}$ for all $j \in \{1,\ldots, d\}$ satisfies
$\widetilde{B}^{\top}_{II} \mathbf{1}=0$ and $B_{II,ij}\geq 0$ for all $i,j \in I, i \neq j$. 
In order to be consistent with the literature, we however keep the notation with a constant term.
\end{remark}

Under the above conditions on the parameters, one also gets existence of solutions to the martingale problem. However, well-posedness and hence the existence of polynomial processes in the sense of Definition \ref{def:poly} is not known in general. Under certain parameter restrictions this can be nevertheless achieved. In particular, the following holds and is relevant in our case for Theorem \ref{th:main} when $m=1$.

\begin{proposition}\label{prop:wellposed}
Consider the parameters given in Proposition \ref{prop:statespace}. Assume that $\alpha_{ij}=0$ for $i \neq j$. Then the martingale problem corresponding to the characteristics stated in Proposition \ref{prop:statespace} is well-posed.
\end{proposition}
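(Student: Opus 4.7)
The plan is to exploit the block structure of the characteristics of Proposition~\ref{prop:statespace}: with $\alpha_{ij}=0$ for $i\neq j$ in $J$, the drift and diffusion matrix of $X$ split into an autonomous $I$-part on $\Delta^d$ and a $J$-part with a diagonal diffusion matrix. Indeed $b_I(x)=\beta_I+B_{II}x_I$ depends only on $x_I$, $c_{II}(x)$ only on $x_I$, and the cross-block $c_{ij}\equiv 0$ for $i\in I,\,j\in J$, so $X_I$ forms an autonomous system and it suffices to treat $X_J$ conditionally on a given path of $X_I$.

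For $X_I$ I would invoke the results of Filipovi\'c and Larsson on polynomial diffusions on $\Delta^d$, in particular \cite[Proposition 6.6]{FL:14}, which establish well-posedness of the martingale problem under exactly the admissible simplex parameter conditions; this gives existence and uniqueness in law of $\mu=X_I$ on its own.

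Next, freeze a path $t\mapsto X_I(t)\in\Delta^d$. Because $\alpha_{ij}=0$ for $i\neq j$ in $J$, the diffusion matrix of $X_J$ is diagonal, so $X_J$ can be realised as the solution of a system
\[
dX_{j,t}=\bigl(\beta_j+B_{jI}X_I(t)+B_{jJ}X_{J,t}\bigr)\,dt+\sqrt{\alpha_{jj}X_{j,t}^2+X_{j,t}\bigl(\phi_j+\theta_{(j)}^{\top}X_I(t)+\pi_{(j)}^{\top}X_{J,t}\bigr)}\,dW_t^j,\qquad j\in J,
\]
driven by independent Brownian motions $(W^j)_{j\in J}$. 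The drift is affine and hence globally Lipschitz in $X_J$, while each diagonal diffusion coefficient is $\tfrac12$-H\"older in $X_j$ locally uniformly in the remaining variables (note $\pi_{(j)j}=0$) with linear growth. A coordinatewise Yamada--Watanabe argument, combined with a localisation on the other coordinates, then yields pathwise uniqueness of $X_J$ given $X_I$; existence follows from a Skorokhod tightness argument, and the admissibility conditions $\phi_j\geq\max_{i\in I}\theta_{(j)i}^{-}$ and $\beta_j\geq\max_{i\in I}B_{ji}^{-}$ together with a CIR-type comparison ensure that each $X_j$ remains nonnegative.

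Combining uniqueness in law of $X_I$ with pathwise (hence distributional) uniqueness of $X_J$ conditional on $X_I$ yields uniqueness in law of the joint martingale problem on $\Delta^d\times\mathbb{R}_+^m$. The principal technical obstacle is the coordinatewise Yamada--Watanabe step in the presence of the coupling $\pi_{(j)}^{\top}X_J$ inside the square root; this has to be handled by a careful localisation on stopping times after which the auxiliary components $(X_{j'})_{j'\neq j}$ are bounded, so that the $\tfrac12$-H\"older modulus of the $j$-th diffusion coefficient is uniform along each stopped path and the classical one-dimensional Yamada--Watanabe estimate applies.
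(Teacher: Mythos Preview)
Your overall strategy coincides with the paper's: exploit the block structure so that $X_I$ is autonomous on $\Delta^d$ and then handle $X_J$ given $X_I$, using the diagonal diffusion structure afforded by $\alpha_{ij}=0$. The paper's proof is a two-line citation: well-posedness on $\Delta^d$ from \cite[Lemma 6.1]{CLS:16}, and then \cite[Corollary 1.3]{BP:03} for the remaining part. Your coordinatewise Yamada--Watanabe argument with localisation is precisely the mechanism behind the Bass--Perkins result, so you are in effect re-deriving what the paper obtains by reference.

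Two small points. First, \cite[Proposition 6.6]{FL:14} is a characterisation of the admissible parameters on $\Delta^d$, not the well-posedness statement itself; for the latter you want \cite[Lemma 6.1]{CLS:16} (or the corresponding existence/uniqueness theorem in \cite{FL:14}). Second, the step ``uniqueness in law of $X_I$ plus pathwise uniqueness of $X_J$ given a frozen path of $X_I$ implies joint uniqueness in law'' is correct but deserves a word of justification: one must argue that for any weak solution $(X_I,X_J)$ the $X_I$-marginal law is fixed, and that the regular conditional law of $X_J$ given $\sigma(X_I)$ solves a (time-inhomogeneous) martingale problem whose uniqueness you have established. This is standard, but should be stated rather than left implicit.
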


\section{Relative arbitrage in polynomial models}\label{sec:relarbitrage}

This section is dedicated to characterize the existence of relative arbitrage opportunities
in polynomial diffusion market weight and asset price models under the so-called \emph{No unbounded profit with bounded risk condition} (NUPBR). Note that due to Remark \ref{rem:relarb}  relative arbitrages  only depend on the market weights process $\mu$. We thus only consider polynomial models on $\Delta^d$ and call them polynomial market weight models.
Recall that they are characterized in terms of an admissible simplex parameter set $(\beta^{\mu}, B^{\mu}, \gamma^{\mu})$. 

Recall that the (NUPBR) condition as introduced in \cite{DS:94} means that the set 
\[
\{ Y^{\vartheta}_T \,|\, \vartheta \in \mathcal{J}(\mu)\}
\]
is bounded in probability. Here, $\mathcal{J}(\mu)$  denotes the collection of all self-financing, 1-admissible trading strategies with respect to $\mu$ analogously as in Definition \ref{def:trading}. This (NUPBR) condition is the minimal requirement for economically reasonable models in continuous time and the usual assumption in stochastic portfolio theory. 

The following theorem provides a characterization of relative arbitrage under (NUPBR). Its proof together with the proofs of the subsequent propositions and lemmas 
are gathered in Appendix \ref{app:relarbitrage}. 

\begin{theorem}\label{th:NANUPBR}
Let $T \in (0, \infty)$ denote some finite time horizon and let $\mu$ be a polynomial diffusion process for the market weights on $\Delta^d$ with $\mu_0 \in \mathring{\Delta}^d$, being  described  by an admissible simplex parameter set $(\beta^{\mu}, B^{\mu}, \gamma^{\mu})$ with $\gamma^{\mu}_{ij} >0$ for all $i\neq j \in \{1, \ldots,d\}$. 
Then the following assertions are equivalent:

\begin{enumerate}
\item The model satisfies (NUPBR) 
and there exist strong relative arbitrage opportunities.
\item There exists some $i \in \{1, \ldots, d\}$ such that $b^{\mu}_i > 0$ for some elements in $\{\mu^i=0\}$ and for all such indices $i$, we have
\begin{align}\label{eq:nonbd}
2 \beta^{\mu}_i +\min_{i \neq j} (2 B^{\mu}_{ij} -\gamma^{\mu}_{ij}) \geq 0.
\end{align}
\end{enumerate}
\end{theorem}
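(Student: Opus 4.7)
The proof naturally reduces to a Feller-type non-attainment criterion for the faces of the simplex, from which both (NUPBR) and the existence of strong relative arbitrage can be deduced. The first step is to establish the following dichotomy: under $\gamma^{\mu}_{ij}>0$ and $\mu_0\in\mathring{\Delta}^d$, the face $F_i:=\{\mu^i=0\}\cap\Delta^d$ is $P$-a.s.\ not hit on $[0,T]$ if and only if $b^{\mu}_i>0$ at some point of $F_i$ and \eqref{eq:nonbd} holds. Applying It\^o's formula to $V(\mu):=-\log\mu^i$ and using $c^{\mu}_{ii}/\mu^i=\sum_{j\neq i}\gamma^{\mu}_{ij}\mu^j$, the drift of $V$ reads
\[
-\frac{1}{2\mu^i}\Big(2\beta^{\mu}_i+\sum_{j\neq i}(2B^{\mu}_{ij}-\gamma^{\mu}_{ij})\mu^j\Big)+O(1).
\]
The bracketed affine function of $\mu$ attains its minimum over $F_i$ at a vertex, and since $\sum_{j\neq i}\mu^j=1$ on $F_i$, condition \eqref{eq:nonbd} is exactly the requirement that this drift be nonpositive in a neighbourhood of $F_i$. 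A standard McKean supermartingale argument on $V$, refined in the borderline case by replacing $V$ with $V_{\varepsilon}(\mu)=(\mu^i)^{-\varepsilon}$ for a suitable $\varepsilon>0$, then prevents $\mu^i$ from reaching zero. The converse is obtained by time-changing $\mu^i$ along the direction of the minimizing vertex and reducing to a one-dimensional Feller test.

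With the dichotomy at hand, (NUPBR) is proved via an explicit local martingale deflator. The admissibility of $(\beta^{\mu},B^{\mu},\gamma^{\mu})$ forces $\mathbf{1}\in\ker c^{\mu}(\mu)$ and $\mathbf{1}^{\top}b^{\mu}\equiv 0$, hence $b^{\mu}\in\operatorname{range}c^{\mu}$ on $\mathring{\Delta}^d$; taking $\theta:=(c^{\mu})^{\dagger}b^{\mu}$ and $Z:=\mathcal{E}(-\int\theta^{\top}dM^{\mu})$, the blow-up of $\theta$ near the faces is neutralized by the non-attainment provided by (ii), so $Z$ is a genuine local martingale deflator for every $Y^{\vartheta}$, and (NUPBR) follows from the deflator characterization of Karatzas--Kardaras. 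For strong relative arbitrage, I would use Fernholz's functionally generated portfolio associated with $G(\mu)=-\log\mu^i$ for a non-attainable face $F_i$: the master equation yields
\[
\log\frac{Y^{\vartheta}_T}{Y^{\varphi}_T}=G(\mu_T)-G(\mu_0)+\int_0^T\tfrac{1}{2}\sum_{j\neq i}\gamma^{\mu}_{ij}\frac{\mu^j_t}{\mu^i_t}\,dt,
\]
whose right-hand side is $P$-a.s.\ strictly positive for $T$ large enough, thanks to the $P$-a.s.\ bound $\mu^i_t\geq\eta>0$ furnished by non-attainment; the short-horizon regime is handled separately via the Banner--Fernholz construction \cite{BF:08}, applicable here because the inequality $c^{\mu}_{(d)(d)}\geq(\min_{j\neq k}\gamma^{\mu}_{jk})\mu^{(d)}$ satisfies their degeneracy condition on the minimal market weight.

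The converse (i)$\Rightarrow$(ii) is obtained contrapositively: if (ii) fails then either no face has an inward-pointing drift anywhere, in which case $\mu$ hits every vertex eventually and no long-only strategy uniformly beats the market, or some face both has $b^{\mu}_i>0$ somewhere and violates \eqref{eq:nonbd}; by the dichotomy $F_i$ is then attained with positive probability, along whose paths $\theta$ fails to be square integrable and the deflator collapses, ruling out (NUPBR). The main obstacle is precisely the boundary analysis in paragraph one: the diffusion $\mu^i$ degenerates linearly at $F_i$ while its drift depends multidimensionally on the remaining coordinates, so the Feller threshold is critical rather than strict, and the borderline equality in \eqref{eq:nonbd} requires the $\varepsilon$-refinement via $V_{\varepsilon}$ together with care regarding potential accumulation at different vertices of $F_i$.
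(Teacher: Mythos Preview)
Your boundary analysis (the McKean argument for non-attainment) and the deflator construction via $(c^{\mu})^{\dagger}b^{\mu}$ are in line with the paper, though you skip the key integrability estimate: showing $\int_0^T \theta^{\top}c^{\mu}\theta\,dt<\infty$ $P$-a.s.\ is not automatic from non-attainment alone, and the paper handles it by a comparison $\widetilde{c}^{\mu}\succeq\gamma^{\ast}\widetilde{a}^{\mu}$ with the Wright--Fisher matrix, whose inverse is explicit.

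The serious gaps are in the arbitrage part of (ii)$\Rightarrow$(i) and in (i)$\Rightarrow$(ii):

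\textbf{Existence of strong relative arbitrage.} Your functionally generated argument with $G(\mu)=-\log\mu^i$ does not work as stated: $G$ is convex rather than concave, the master formula you write is not the standard one, and the claimed pathwise bound $\mu^i_t\geq\eta>0$ does not yield what you need (the obstruction is $G(\mu_T)-G(\mu_0)$, which can be arbitrarily negative). Banner--Fernholz alone would salvage the conclusion, but the paper takes a completely different route: it shows that \emph{no} equivalent martingale measure exists. The crucial lemma (which you do not have) is that a continuous polynomial \emph{martingale} on $\Delta^d$ with all $\gamma^{\mu}_{ij}>0$ hits every face $\{\mu^k=0\}$ with positive probability on any horizon. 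Combined with the non-attainment under $P$ furnished by \eqref{eq:nonbd}, this rules out $Q\sim P$; hence (NFLVR) fails, and together with (NUPBR) one obtains relative arbitrage. Strength then follows from completeness (uniqueness of the deflator plus the representation property).

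\textbf{Direction (i)$\Rightarrow$(ii).} Your Case~1 (``no face has inward drift anywhere, so $\mu$ hits every vertex and no long-only strategy beats the market'') is unjustified and not the right mechanism. If $b^{\mu}_i\equiv 0$ on $\{\mu^i=0\}$ for all $i$, then admissibility forces $b^{\mu}_i=\kappa_i\mu^i$; the paper shows that the candidate density $\mathcal{E}(-\int\widetilde{\lambda}^{\top}\sqrt{\widetilde{c}^{\mu}}\,dW)$ then satisfies Novikov (because $\widetilde{\lambda}^{\top}\widetilde{c}^{\mu}\widetilde{\lambda}$ is bounded on $\Delta^d$), yielding a true equivalent martingale measure and contradicting the assumed arbitrage. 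In Case~2, your claim that the deflator ``collapses'' when $F_i$ is attained is plausible but not argued; the paper instead observes that if $b^{\mu}_i>0$ somewhere on $\{\mu^i=0\}$ and the face is hit with positive probability, then $\mu^i$ must re-enter $(0,1)$ afterwards, which directly produces an unbounded profit with bounded risk and violates (NUPBR).
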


\begin{remark} \label{rem:relarbitrage}
\begin{enumerate}
\item Note that \eqref{eq:nonbd} together with the strict positivity of $\gamma^{\mu}_{ij}$ implies in particular that $b^{\mu}_i > 0$ on $\{\mu^i=0\}$.
\item Similarly as volatility stabilized market models (see Remark \ref{rem:posexcess}), polynomial market weight models exhibit a positive excess  growth rate
\[
\gamma_{*}^{\mu}=\frac{1}{2}\sum_{i=1}^d \mu^i d\langle \log \mu^i \rangle=\frac{1}{2}\sum_{i=1}^d \mu^i c^{\log(\mu)}_{ii}=
\frac{1}{2}\sum_{i=1}^d \frac{c^{\mu}_{ii}}{\mu^i}\geq \min_{i \neq j} \gamma^{\mu}_{ij}\frac{d-1}{2},
\]
whenever $\gamma^{\mu}_{ij} >0$ for all $i\neq j$ and $\mu$ takes values  $\mathring{\Delta}^d$ where the latter is equivalent to \eqref{eq:nonbd} for all $i \in \{1, \ldots,d\}$.
Functionally generated relative arbitrage opportunities over sufficiently long time horizons can therefore be generated as in \cite[Example 11.1]{FK:09}. 
\item Under the condition that $\mu$ takes values  in  $\mathring{\Delta}^d$ and $\gamma^{\mu}_{ij} >0$ for all $i\neq j$, it can also be seen from \cite{BF:08} that strong long-only  relative arbitrages on arbitrary time horizons exist. Indeed,  Assumption (2.8) in this paper, namely that the instantaneous variance of the minimal market weight satisfies $c^{\mu}_{(d)(d)} \geq K \mu^{(d)}$ for some constant $K$ is satisfied due to
\begin{align*}
\frac{c^{\mu}_{(d)(d)}}{\mu^{(d)}}&= \sum_{j \neq (d) } \gamma^{\mu}_{(d)j} \mu^j \geq \min_{i \neq j} \gamma^{\mu}_{ij} (1-\mu^{(d)})\geq\frac{d-1}{d} \min_{i\neq j} \gamma^{\mu}_{ij},
\end{align*}
where we used $\mu^{(d)} \leq \frac{1}{d}$.
\end{enumerate}
\end{remark}

\begin{remark}\label{rem:martdensity}
It is well known that (NUPBR) is equivalent to the existence of a \emph{supermartingale deflator}, that is a nonnegative process $D$ with $D_0=1$ and $D_T >0$ such that $DY^{\vartheta}$ is a supermartingale for all $\vartheta \in \mathcal{J}(\mu)$ (see e.g.~\cite{KK:07}). More precisely, as shown in \cite{ST:14}, this can be strengthen to the existence of a \emph{strictly positive local martingale deflator} (see also \cite{KKS:15} and the references therein), that is a strictly positive local martingale $Z$  with $Z_0=1$ such that $Z\mu$ is a local martingale.
\end{remark}

The following proposition establishes completeness of polynomial market weight models and uniqueness of the strictly positive local martingale deflator $Z$, a property which is relevant for the construction of strong relative arbitrages subsequently. It is also needed in the proof of Theorem \ref{th:NANUPBR} to establish the existence of \emph{strong} relative arbitrages.

\begin{proposition}\label{prop:complete}
Let $\mu$ be a polynomial diffusion process for the market weights on $\Delta^d$ as of Theorem \ref{th:NANUPBR} satisfying one of the equivalent conditions (i) or (ii). Then the following assertions hold true: 
\begin{enumerate}
\item There exists a unique strictly positive local martingale deflator $Z$ as introduced in Remark \ref{rem:martdensity}.
\item The model is complete in the sense that every bounded $\mathcal{F}_T$-measurable claim $Y$ can be replicated via some strategy $\psi$. More precisely, 
\[
Y= E[YZ_T]+ \int _0^T \psi^{\top}_s d\mu_s, \quad P\text{-a.s.}
\]
holds true.
\end{enumerate}
\end{proposition}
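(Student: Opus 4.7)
The approach rests on three pillars: a martingale representation property for $\mu$ inherited from well-posedness of its martingale problem, the rank structure of the diffusion matrix $c^{\mu}$, which forces uniqueness of the market price of risk, and a standard deflator/numeraire duality that converts a representation of $E[YZ_T \mid \mathcal{F}_t]$ into a replicating strategy.

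First I would invoke well-posedness of the martingale problem for polynomial diffusions on $\Delta^d$ (cf.~Proposition~\ref{prop:wellposed} and~\cite{FL:14, CLS:16}). By Jacod's theorem, well-posedness delivers the predictable representation property: every local $(\mathcal{F}_t)$-martingale $N$ admits a representation $N_t = N_0 + \int_0^t \phi_s^{\top}\, dM^{\mu}_s$, where $M^{\mu}$ denotes the continuous martingale part of $\mu$.

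For the uniqueness of the local martingale deflator $Z$ I would use this representation to write $dZ_t = -Z_t \lambda_t^{\top}\, dM^{\mu}_t$. Requiring each $Z \mu^i$ to be a local martingale and applying integration by parts yields the market price of risk equation $c^{\mu}_t \lambda_t = b^{\mu}_t$. This equation is solvable because the admissibility relation $(B^{\mu})^{\top}\mathbf{1} + (\beta^{\mu\top}\mathbf{1})\mathbf{1} = 0$ places $b^{\mu}$ in the tangent space $\{v : \mathbf{1}^{\top}v = 0\}$, which on $\mathring{\Delta}^d$ coincides with the range of $c^{\mu}$; moreover, the strict positivity $\gamma^{\mu}_{ij} > 0$ forces $\ker c^{\mu} = \mathbb{R}\mathbf{1}$ in the interior. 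Hence $\lambda$ is determined modulo this one-dimensional kernel, but stochastic integrals against $M^{\mu}$ are blind to elements of $\ker c^{\mu}$, and so $Z$ is uniquely determined.

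For completeness, given a bounded $\mathcal{F}_T$-measurable claim $Y$, I would form the uniformly integrable martingale $N_t := E[Y Z_T \mid \mathcal{F}_t]$, which by the first step satisfies $dN_t = \phi_t^{\top}\, dM^{\mu}_t$ for some predictable $\phi$. Defining $\widehat Y_t := N_t / Z_t$ and applying It\^o's product rule to $Z_t \widehat Y_t = N_t$, then matching finite-variation and martingale parts and using $c^{\mu}\lambda = b^{\mu}$, one obtains $d\widehat Y_t = \psi_t^{\top}\, d\mu_t$ with $\psi_t := Z_t^{-1}\phi_t + \widehat Y_t \lambda_t$. Evaluating at $t = T$ and $t = 0$ gives $\widehat Y_T = Y$ and $\widehat Y_0 = E[Y Z_T]$, which is exactly the claimed representation.

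The principal obstacle is the boundary behavior of $\mu$: along $\partial\Delta^d$ the matrix $c^{\mu}$ loses further rank, and there $b^{\mu}$ need not lie in $\operatorname{range}(c^{\mu})$, jeopardizing the market price of risk equation. I expect to overcome this by combining condition~\eqref{eq:nonbd} with arguments in the spirit of Theorem~\ref{th:NANUPBR} to show that the set of times at which $\mu$ sits on boundary strata causing further rank drop has zero Lebesgue measure, so the interior analysis suffices almost everywhere; additional care is needed to verify admissibility and $\mu$-integrability of $\psi$ and to transition between local and true martingale properties when appealing to the NUPBR characterization of~\cite{ST:14, KKS:15}.
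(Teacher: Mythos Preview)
Your proposal is correct and follows essentially the same route as the paper: martingale representation from well-posedness of the martingale problem, the market-price-of-risk equation $c^{\mu}_t\lambda_t=b^{\mu}_t$ obtained from the local martingale property of $Z\mu$, and It\^o's product rule applied to $E[YZ_T\mid\mathcal{F}_t]/Z_t$ to produce the replicating strategy $\psi_t=Z_t^{-1}\phi_t+\widehat Y_t\lambda_t$.

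Your final paragraph worrying about boundary rank degeneracy is unnecessary, and the paper's handling of this point is cleaner than what you sketch. Existence of a strictly positive local martingale deflator is simply taken from the assumed (NUPBR) via~\cite{ST:14}, so you never need to \emph{solve} $c^{\mu}\lambda=b^{\mu}$ from scratch on the boundary; you only need uniqueness of $Z$. For that the paper observes that any two solutions $\lambda,\lambda'$ of $c^{\mu}\lambda=b^{\mu}$ satisfy $\sqrt{c^{\mu}}\lambda=\sqrt{c^{\mu}}\lambda'$ (since $(\lambda-\lambda')^{\top}c^{\mu}(\lambda-\lambda')=0$), and this holds pointwise regardless of the rank of $c^{\mu}$. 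Hence the stochastic integral $\int\lambda^{\top}d\mu^c$ and therefore $Z$ are uniquely determined without any analysis of how much time $\mu$ spends on boundary strata.
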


\begin{remark}\label{rem:formdensity}
The strictly positive local martingale deflator $Z$ can be represented via $\mathcal{E}(-\int_0^{\cdot}\lambda^{\top}(\mu) d\mu^c)$, where $\mu^c$ denotes the martingale part of $\mu$ and  $\lambda$ is specified in Lemma \ref{lem:formlambda}. 
\end{remark}

The proof of Theorem \ref{th:NANUPBR} is based on the subsequent lemmas and propositions which are interesting in their own right. We start by the assertion that a continuous polynomial martingale with non degenerate diffusion matrix in the sense that $\gamma^{\mu}_{ij} >0$ for all $i \neq j$ reaches every boundary segment with positive probability on arbitrary time horizons.

\begin{lemma}\label{lem:martingaleboundary}
Let $\mu$ be a continuous polynomial martingale for the market weights on $\Delta^d$, being  described  by an admissible simplex parameter set $(\beta^{\mu}, B^{\mu}, \gamma^{\mu})$ with $\beta^{\mu}=0$ and $B^{\mu}=0$ and $\gamma^{\mu}_{ij} >0$ for all $i \neq j \in \{1, \ldots,d\}$. 
Then for any $T>0$, $k \in \{1,\ldots,d\}$, $\mu_0 \in \mathring{\Delta}^d$, we have $\mu^k_t = 0$ for some $t\leq T$ with positive probability.
\end{lemma}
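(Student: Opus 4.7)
Set $\tau := \inf\{t \geq 0 : \mu^k_t = 0\}$, so that the goal is to show $P[\tau \leq T] > 0$. Because $\beta^\mu = B^\mu = 0$, the process $\mu$ is a bounded continuous martingale in $\Delta^d$, and each component $\mu^i$ is a bounded nonnegative continuous martingale starting from $\mu^i_0 \in (0,1)$. My first step is to show that $\mu_t$ converges almost surely to a vertex of $\Delta^d$ and to compute the absorption probabilities. Martingale convergence gives $\mu_t \to \mu_\infty$ a.s.; the identity
\[
\mathbb{E}[\langle \mu^i \rangle_\infty] = \mathbb{E}[(\mu^i_\infty)^2] - (\mu^i_0)^2 < \infty,
\]
combined with $\langle \mu^i \rangle_\infty = \int_0^\infty \sum_{j \neq i} \gamma^\mu_{ij}\, \mu^i_s \mu^j_s\, ds$ and the strict positivity of the $\gamma^\mu_{ij}$, forces $\mu^i_\infty \mu^j_\infty = 0$ for every $i \neq j$, so $\mu_\infty$ lies on a vertex of $\Delta^d$ almost surely. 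The martingale identity $\mathbb{E}[\mu^j_\infty] = \mu^j_0$ with $\mu^j_\infty \in \{0,1\}$ then yields $P[\mu_\infty = e_j] = \mu^j_0$, and consequently $P[\mu^k_\infty = 0] = 1 - \mu^k_0 > 0$.

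Next, I would upgrade the limiting statement $\mu^k_\infty = 0$ to absorption at a finite time. By Dambis--Dubins--Schwarz applied to the bounded continuous martingale $\mu^k$, there exists (on a possibly enlarged space) a standard Brownian motion $\beta$ with $\mu^k_t = \mu^k_0 + \beta_{\langle \mu^k \rangle_t}$. Boundedness of $\mu^k$ forces $\langle \mu^k \rangle_\infty < \infty$ a.s.; hence on $\{\mu^k_\infty = 0\}$ the path of $\beta$ must have attained $-\mu^k_0$ by the finite Brownian time $\langle \mu^k \rangle_\infty$. Continuity and monotonicity of $t \mapsto \langle \mu^k \rangle_t$ then produce a finite calendar time $\hat t$ with $\langle \mu^k \rangle_{\hat t} = \inf\{s : \beta_s = -\mu^k_0\}$ and therefore $\mu^k_{\hat t} = 0$. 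This already gives $P[\tau < \infty] \geq 1 - \mu^k_0 > 0$.

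The remaining step, which is the main obstacle, is to promote $P[\tau < \infty] > 0$ to $P[\tau \leq T] > 0$ for arbitrarily small $T$. For this I would combine the Stroock--Varadhan support theorem with a local boundary comparison. The assumption $\gamma^\mu_{ij} > 0$ ensures that $c^\mu(x)$ has maximal rank $d-1$ on the tangent space of $\Delta^d$ at every $x \in \mathring{\Delta}^d$, so the support theorem applies on the interior of the simplex and, by exhibiting a smooth control in $\mathring{\Delta}^d$ steering the $k$-th coordinate below $\epsilon$ by time $T/2$, one obtains $P[\,\mu^k_{T_0} < \epsilon\,] > 0$ for the first entry time $T_0 \leq T/2$ of $\mu^k$ into $(0,\epsilon)$. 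Conditionally on $\{\mu^k_{T_0} = x < \epsilon\}$, optional stopping of the martingale $\mu^k$ at the exit time $\sigma$ of the interval $(0, 2\epsilon)$ yields
\[
P\bigl[\mu^k_{T_0 + \sigma} = 0 \,\big|\, \mu^k_{T_0} = x\bigr] = 1 - \tfrac{x}{2\epsilon} \geq \tfrac{1}{2} \quad \text{for } x < \epsilon,
\]
while the lower bound $c^\mu_{kk}(\mu) \geq \gamma_\ast\,\mu^k(1-\mu^k)$ with $\gamma_\ast := \min_{i \neq j}\gamma^\mu_{ij} > 0$, used through a Dambis--Dubins--Schwarz time change, allows one to dominate the relevant stopping time by that of a squared Bessel process of dimension $0$ started from $x$. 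Since the latter hitting time scales like $x\,\tau_0(1)$, one gets $P[\sigma > T/2] \to 0$ as $\epsilon \to 0$, and combining the two estimates yields $P[\tau \leq T] > 0$. The technical difficulty is precisely this squared-Bessel comparison: $\mu^k$ is not a one-dimensional Markov process, so the argument must be carried out uniformly in the trajectories of the remaining coordinates, essentially amounting to a boundary-attainment criterion for polynomial diffusions on the simplex complementary to the non-attainment results of \cite{FL:14}.
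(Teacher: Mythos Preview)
Your approach differs substantially from the paper's. The paper argues by induction on $d$: the base case $d=2$ uses the explicit transition density of the one-dimensional Jacobi martingale; for the induction step, the paper first shows that \emph{some} face $\{\mu^{k^*}=0\}$ is hit by time $T/2$ with positive probability, arguing by contradiction via the short-horizon arbitrage criterion of \cite{BF:08} (if no face were hit, the bound $c^{\mu}_{(d)(d)}/\mu^{(d)} \geq \tfrac{d-1}{d}\min_{i\neq j}\gamma^{\mu}_{ij}$ on $[0,T/2]$ would produce a relative arbitrage, contradicting the martingale property of $\mu$), and then applies the induction hypothesis to the remaining $d-1$ coordinates after absorption at that face. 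This route completely sidesteps any direct one-coordinate boundary analysis and any support-type argument.

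Your step~3 is the real argument, and the scheme---use the support description in the interior to reach a neighbourhood of the face, then hit the face from nearby---is sound in outline. The gap you yourself flag is genuine, though: the squared-Bessel comparison is not available because $\mu^k$ is not Markov on its own, and the lower bound $c^{\mu}_{kk}\geq \gamma_*\mu^k(1-\mu^k)$ alone does not control the calendar-time exit from $(0,2\epsilon)$ uniformly over the remaining coordinates. The good news is that the ``boundary-attainment criterion complementary to the non-attainment results of \cite{FL:14}'' you say is needed already exists: it is precisely \cite[Theorem~5.7(iii)]{FL:14}. With $p(\mu)=\mu^k$ and $\bar\mu=e_{j^*}$ for $j^*\in\arg\max_{j\neq k}\gamma^{\mu}_{kj}$, the martingale case gives $\mathcal{G}p(\bar\mu)=0$ and $2\mathcal{G}p(\bar\mu)-h^\top\nabla p(\bar\mu)=-\gamma^{\mu}_{kj^*}<0$, so that theorem delivers positive hitting probability of $\{\mu^k=0\}$ by time $T/2$ for all starting points sufficiently close to $\bar\mu$. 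Combined with your support-theorem step (non-degeneracy of $c^\mu$ on $\mathring{\Delta}^d$ is Lemma~\ref{lem:nondegenerate}), this closes the argument without any Bessel comparison.

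A minor side remark: your step~2 is neither needed nor quite correct. From $\mu^k_\infty=0$ and $\langle\mu^k\rangle_\infty<\infty$ you cannot conclude that $\beta$ reaches $-\mu^k_0$ at a Brownian time strictly below $\langle\mu^k\rangle_\infty$; if it is reached only at the endpoint and $t\mapsto\langle\mu^k\rangle_t$ attains $\langle\mu^k\rangle_\infty$ only in the limit $t\to\infty$, no finite $\hat t$ exists. Since step~3 is self-contained this does not affect your overall strategy.
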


The next proposition characterizes non-attainment of the boundary in terms of precise conditions on the admissible simplex parameter set.

\begin{proposition}\label{prop:nonattainement}
Let $\mu$ be a polynomial diffusion process for the market weights on $\Delta^d$ 
described by an admissible simplex parameter set $(\beta^{\mu}, B^{\mu}, \gamma^{\mu})$ with $\mu^i_0 >0$ $P$-a.s. 
Then the following assertions are equivalent:
\begin{enumerate}
\item For all $t >0$, $\mu^i_t >0$ $P$-a.s.
\item $2 \beta^{\mu}_i +\min_{i \neq j} (2 B^{\mu}_{ij} -\gamma^{\mu}_{ij}) \geq 0.$
\end{enumerate}
\end{proposition}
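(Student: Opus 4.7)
The proof is a McKean--Feller type criterion: one studies the scalar process $\mu^i$ near the boundary face $\{\mu^i=0\}$ via Itô's formula applied to $-\log\mu^i$. The forward implication (ii)$\Rightarrow$(i) amounts to showing that the drift of $-\log\mu^i$ is bounded above by a constant, which is where the constraint $\sum_{j\neq i}\mu^j=1-\mu^i$ on the simplex is crucial. The converse (i)$\Rightarrow$(ii) is proved by contraposition: if (ii) fails, one localizes $\mu$ near a vertex $e_{j_*}$ of the simplex where the scalar dynamics of $\mu^i$ degenerate to a Feller/CIR SDE whose Feller condition is violated, and invokes the classical Feller test.

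\textbf{Proof of (ii)$\Rightarrow$(i).} Let $\tau_n:=\inf\{t:\mu^i_t\leq 1/n\}$. Using the explicit characteristics from Definition~\ref{def:parameters}, a direct calculation gives
\begin{equation*}
\mathcal{L}(-\log\mu^i)=-B^{\mu}_{ii}+\frac{1}{\mu^i}\Big(-\beta^{\mu}_i+\sum_{j\neq i}\bigl(\tfrac{1}{2}\gamma^{\mu}_{ij}-B^{\mu}_{ij}\bigr)\mu^j\Big).
\end{equation*}
Condition (ii) rewrites as $\tfrac{1}{2}\gamma^{\mu}_{ij}-B^{\mu}_{ij}\leq\beta^{\mu}_i$ for every $j\neq i$, so $\sum_{j\neq i}(\tfrac{1}{2}\gamma^{\mu}_{ij}-B^{\mu}_{ij})\mu^j\leq\beta^{\mu}_i(1-\mu^i)$ and the drift is bounded above by $C:=-B^{\mu}_{ii}-\beta^{\mu}_i$. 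On $[0,\tau_n]$ the quadratic variation of the martingale part is $\int_0^{\cdot}c^{\mu}_{ii}/(\mu^i)^2\,ds\leq n\max_{i\neq j}\gamma^{\mu}_{ij}\cdot t$, so it is a true martingale. Optional stopping yields
\begin{equation*}
(\log n)\,P(\tau_n\leq t)\leq E[-\log\mu^i_{t\wedge\tau_n}]\leq -\log\mu^i_0+Ct,
\end{equation*}
which forces $P(\mu^i_s=0\text{ for some }s\leq t)=0$ upon letting $n\to\infty$.

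\textbf{Proof of (i)$\Rightarrow$(ii).} Argue by contraposition, assuming $2\beta^{\mu}_i+2B^{\mu}_{ij_*}-\gamma^{\mu}_{ij_*}<0$ for some $j_*\neq i$; note that admissibility of the parameter set forces $\gamma^{\mu}_{ij_*}>0$. Start the Markov process from a point with $\mu^{j_*}_0$ close to $1$ and $\mu^i_0>0$ small, and set $\tau_\delta:=\inf\{t:\mu^{j_*}_t\leq 1-\delta\}$ for small $\delta>0$. On $[0,\tau_\delta]$, up to corrections of order $\delta$, the SDE for $\mu^i$ reduces to a Feller/CIR equation with drift $\beta^{\mu}_i+B^{\mu}_{ij_*}$ and diffusion coefficient $\sqrt{\gamma^{\mu}_{ij_*}\mu^i}$. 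A scalar Ikeda--Watanabe comparison dominates $\mu^i$ from above by a CIR process $Y$ with drift $\beta^{\mu}_i+B^{\mu}_{ij_*}+\varepsilon$ and squared diffusion $(\gamma^{\mu}_{ij_*}-\varepsilon)Y$; for $\varepsilon$ small enough, $2(\beta^{\mu}_i+B^{\mu}_{ij_*}+\varepsilon)<\gamma^{\mu}_{ij_*}-\varepsilon$, so the classical Feller test shows $Y$ attains $0$ in arbitrarily short time with positive probability, and hence so does $\mu^i$ before $\tau_\delta$. Choosing the starting point close enough to $e_{j_*}$ makes $P(\tau_\delta>\eta)$ as close to $1$ as desired by path continuity, contradicting (i).

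\textbf{Main obstacle.} The delicate step is the converse direction: one must keep uniform control over the $O(\delta)$ perturbations of the drift and diffusion coefficients inside the time window $[0,\tau_\delta]$ and justify the scalar comparison with a genuine Feller/CIR process. This is a standard but careful argument in the literature on polynomial diffusions on simplices (see the analogous boundary analysis in~\cite{FL:14}); the role of the admissibility constraint $\beta^{\mu}_i+B^{\mu}_{ij}\geq 0$ is to guarantee that forcing $\gamma^{\mu}_{ij_*}>0$ in the failure scenario is automatic, so the CIR comparison always applies in a non-degenerate way.
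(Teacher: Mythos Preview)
Your argument tracks the paper's proof closely. For (ii)$\Rightarrow$(i) both you and the paper apply McKean's argument to $\log\mu^i$: you write out the optional stopping inequality explicitly, while the paper cites \cite[Proposition~4.3]{MPS:11}, but the computation of the drift and the bound via $\sum_{j\neq i}\mu^j=1-\mu^i$ are identical. For (i)$\Rightarrow$(ii) both proofs proceed by contraposition and single out the vertex $e_{j_*}$ with $j_*\in\arg\min_{j\neq i}(2B^{\mu}_{ij}-\gamma^{\mu}_{ij})$; the paper then invokes \cite[Theorem~5.7(iii)]{FL:14} as a black box, whereas you attempt to reprove that step via a CIR comparison.

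The one point that is not yet airtight is precisely the step you flag as the main obstacle. The scalar Ikeda--Watanabe comparison theorem requires the two processes to share the \emph{same} diffusion coefficient as a function of the one-dimensional state. Here the instantaneous variance of $\mu^i$ is $\mu^i\sum_{j\neq i}\gamma^{\mu}_{ij}\mu^j$, which depends on the full state $\mu$, not only on $\mu^i$; on $[0,\tau_\delta]$ you only have two-sided bounds $(\gamma^{\mu}_{ij_*}-\varepsilon)\mu^i\leq c^{\mu}_{ii}\leq \gamma_{\max}\,\mu^i$, not equality. So a direct Ikeda--Watanabe comparison with a CIR process does not apply as stated. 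Making this rigorous requires either a time-change/scale-function argument or a comparison result allowing mismatched diffusion coefficients, and this is exactly the content of \cite[Theorem~5.7(iii)]{FL:14} that the paper invokes. Your intuition and choice of $j_*$ are correct, but to close the argument you should either cite that result (as the paper does) or carry out the scale-function analysis explicitly rather than appealing to a one-dimensional comparison that does not quite fit.
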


\begin{remark}
Observe that condition~\eqref{eq:nonbd} thus corresponds exactly to the non-attainment of $\{\mu_i=0\}$.
\end{remark}

After having characterized the existence of relative arbitrage opportunities let us now focus on their implementation. We shall consider so-called optimal arbitrages (see e.g.~\cite{FK:10}).

\begin{definition}\label{def:superhedge}
We denote by $U_T$ the superhedging price of $1$ at time $T >0$, that is,
\begin{align*}
U_T:=\inf\{q \geq 0 \, | \, \exists \, \vartheta \in \mathcal{J}(\mu)  \text{ with } Y_T^{q, \vartheta} \geq 1 \quad P\text{-a.s.}\}
\end{align*}
and 
we call \emph{$\frac{1}{U_T}$ optimal arbitrage}. 
\end{definition}

The relation to the strict local martingale deflator $Z$ of Proposition \ref{prop:complete} and the way how these optimal arbitrages can be implemented is described in the following remark. For technical reasons, needed in the assertions of Remark \ref{rem:optstrat} and the proof of Proposition \ref{prop:arbitrageexplicit}, we assume $(\Omega, (\mathcal{F}_t), \mathcal{F}, P)$ to be the canonical filtered probability space as for instance specified in \cite[Section 5]{R:13}.

\begin{remark}\label{rem:optstrat}
Consider a polynomial diffusion market weight models as of Theorem \ref{th:NANUPBR} and assume that (i) or equivalently (ii) holds. Let $J=\{j_1, \ldots, j_k\}$, $0 \leq k \leq d$,  denote the set of indices for which $b^{\mu}_{j_i}=0$ on $\{\mu^{j_i} =0\}$ and set $E:=\{ \mu \in \Delta^d\,|\, \mu^j >0 \text{ for all } j \notin J\}$. Then the optimal arbitrage can be achieved by investing $U_T$ and replicating the payoff
$1 \equiv 1_{E}(\mu_T)$ $P$-a.s., where the last identity is a consequence of Proposition \ref{prop:nonattainement}.  
More precisely, denote by $Z$ the strictly positive martingale deflator of Proposition \ref{prop:complete}. Then, by the superhedging duality (see e.g.~\cite[page 32]{KK:07} in the present context where only (NUPBR) holds) and completeness of the model as proved in Proposition \ref{prop:complete}, we have $U_T= E[Z_T]$ and the ``price at time $t$ of the optimal arbitrage'' is given by
\begin{align}\label{eq:price}
g(t, \mu_t)=\frac{E[Z_T| \mathcal{F}_t]}{Z_t}=\frac{E[1_{E}(\mu_T)Z_T| \mathcal{F}_t]}{Z_t}\stackrel{P\text{-a.s.}}{=} 
E_{Q}[1_{E}(\mu_T)| \mathcal{F}_t],
\end{align}
where $Q$ denotes the so called F\"ollmer measure (see~\cite{F:72, DS:95a, FKa:10, R:13}
and the references therein), for which $P \ll Q$ holds and under which $\mu$ is a martingale up to the time when $\mu$ leaves $E$. The last equality in \eqref{eq:price} follows from \cite[Theorem 5.1]{R:13} by noting that $\frac{1}{Z_t} >0$ $P$-a.s and $Q$-a.s.~on $E$. 
Assuming that $g$ is sufficiently regular, the replicating Delta hedging strategy $\vartheta$ is computed via 
\begin{align}\label{eq:thetaopt}
\vartheta^i_{t} = D_i g(t,\mu_t)
\end{align} 
(compare \cite[Theorem 4.1]{R:13}). Note that in the multiplicative setting (see Appendix \ref{app:multi}) this can be translated to portfolios weights given by
\[
\pi^i_{t}=\mu_t^i\left(\frac{D_i g(t, \mu_t)}{g(t,\mu_t)}+1 -\sum_{j=1}^d \mu_t^i \frac{D_i g(t, \mu_t)}{g(t,\mu_t)}\right).
\]
\end{remark}

In order to implement the above described strategy at least approximately, the polynomial property can be exploited. Indeed, the following proposition provides an approximation of the optimal arbitrage strategy in terms of polynomials, which can be easily implemented by approximating the function  $\mu \mapsto 1_{E}(\mu)$ via polynomials.

\begin{proposition} \label{prop:arbitrageexplicit}
Consider a polynomial diffusion market weight models as of Theorem \ref{th:NANUPBR} and assume that (i) or equivalently (ii) holds true.
Then for every $\varepsilon > 0$ there exists a time-dependent polynomial $\mu \mapsto p^{\varepsilon}(t,\mu)$ and a strategy defined via
\[
\vartheta^{i,\varepsilon}_{t} =D_i p^{\varepsilon}(t, \mu_t)
\]
such that $P[Y_T^{\vartheta^{\varepsilon}} > 1] \geq 1-\varepsilon$. Moreover, as $\varepsilon \to 0$, $Y_T^{\vartheta^{\varepsilon}}$ converges $P$-a.s. to the optimal arbitrage. 
\end{proposition}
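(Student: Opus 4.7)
My plan is to approximate the optimal superhedge $g(t,\mu_t)=E_Q[1_E(\mu_T)\mid\mathcal F_t]$ of Remark \ref{rem:optstrat} by polynomials obtained via a Feynman--Kac representation, exploiting the fact that the polynomial structure of the diffusion is preserved by the $Q$-semigroup. The key structural observation is that since $c(\mu)$ is quadratic in $\mu$ by Theorem \ref{th:main}(iii), the purely diffusive operator
\[
\mathcal G^0 f(\mu):=\tfrac12\sum_{i,j=1}^d c_{ij}(\mu)D_{ij}f(\mu)
\]
maps $\mathcal P_k(\Delta^d)$ into itself for every $k\in\mathbb N$; the backward Cauchy problem $\partial_t p+\mathcal G^0 p=0$ with polynomial terminal data $p(T,\cdot)=q\in\mathcal P_k$ therefore admits a unique polynomial solution $p(t,\cdot)\in\mathcal P_k$, obtained by applying the matrix exponential of $\mathcal G^0|_{\mathcal P_k}$ to the coefficient vector of $q$. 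Since under $Q$ the process $\mu$ is a local martingale with the same diffusion $c$, Feynman--Kac identifies $p(0,\mu_0)$ with $E_Q[q(\mu_T)]$.

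For each $\varepsilon>0$ I would build $q^\varepsilon$ in two steps: first, using compactness of $\Delta^d$ and openness of $E$, choose a continuous $\varphi^\varepsilon:\Delta^d\to[0,1]$ with compact support in $E$ such that $\varphi^\varepsilon\uparrow 1_E$ pointwise as $\varepsilon\downarrow 0$; then, by Stone--Weierstrass on the compact set $\Delta^d$, take a polynomial $q^\varepsilon\geq 0$ with $\|q^\varepsilon-\varphi^\varepsilon\|_\infty<\varepsilon/2$. Let $\tilde p^\varepsilon$ solve the backward PDE above with $\tilde p^\varepsilon(T,\cdot)=q^\varepsilon$, and set
\[
p^\varepsilon(t,\mu):=\tilde p^\varepsilon(t,\mu)\big/\tilde p^\varepsilon(0,\mu_0),\qquad \vartheta^{i,\varepsilon}_t:=D_i p^\varepsilon(t,\mu_t),
\]
so that $p^\varepsilon(0,\mu_0)=1$. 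Applying It\^o's formula under $P$ to the polynomial $p^\varepsilon(t,\mu_t)$ and using $\partial_t p^\varepsilon+\mathcal G^0 p^\varepsilon=0$, the finite-variation terms regroup with the $P$-drift of $\mu$ to give
\[
p^\varepsilon(T,\mu_T)-1=\int_0^T \nabla p^\varepsilon(t,\mu_t)\cdot d\mu_t=\int_0^T (\vartheta^\varepsilon_t)^\top d\mu_t.
\]
Hence $Y_T^{\vartheta^\varepsilon}=p^\varepsilon(T,\mu_T)=q^\varepsilon(\mu_T)/\tilde p^\varepsilon(0,\mu_0)$ $P$-a.s., with $1$-admissibility following from $p^\varepsilon\geq 0$ via a maximum principle (using $q^\varepsilon\geq 0$).

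For the limiting behaviour, Feynman--Kac combined with dominated convergence under $Q$ yields $\tilde p^\varepsilon(0,\mu_0)=E_Q[q^\varepsilon(\mu_T)]\to E_Q[1_E(\mu_T)]=U_T<1$, while $q^\varepsilon(\mu_T)\to 1$ $P$-a.s.\ because $P[\mu_T\in E]=1$ by Proposition \ref{prop:nonattainement}. Therefore $Y_T^{\vartheta^\varepsilon}\to 1/U_T$ $P$-a.s., and a relabeling of $\varepsilon$ gives $P[Y_T^{\vartheta^\varepsilon}>1]\geq 1-\varepsilon$.

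The main obstacle is the rigorous Feynman--Kac identification $\tilde p^\varepsilon(0,\mu_0)=E_Q[q^\varepsilon(\mu_T)]$: since $Z$ is only a strict local martingale the F\"ollmer measure $Q$ is not equivalent to $P$, and $\mu$ may exit $E$ with positive $Q$-probability. This forces one to work on the canonical filtered space of \cite{R:13}, to stop $\mu$ at its first exit time from $E$, and to handle the cemetery state carefully in the bounded convergence step. All other ingredients --- preservation of $\mathcal P_k$ by $\mathcal G^0$, It\^o's formula for a polynomial, Stone--Weierstrass on the compact set $\Delta^d$, and the maximum principle --- are entirely routine.
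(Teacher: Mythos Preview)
Your strategy is essentially the paper's: approximate $1_E$ by polynomials, propagate backward using the driftless polynomial generator $\mathcal G^0$, read off the strategy from $\nabla p^\varepsilon$ via It\^o, and pass to the limit. The skeleton is correct.

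Where you diverge from the paper is precisely at the point you flag as ``the main obstacle''. You write $\tilde p^\varepsilon(0,\mu_0)=E_Q[q^\varepsilon(\mu_T)]$ via Feynman--Kac under the F\"ollmer measure $Q$, but this is not quite the right identity: under $Q$ the process $\mu$ is only a martingale up to $\tau=\inf\{t:\mu_t\notin E\}$, and after $\tau$ it is in a cemetery state, so $E_Q[q^\varepsilon(\mu_T)]$ is not even well defined on $\{\tau\le T\}$. What your PDE/matrix-exponential construction actually computes is $E_{\hat Q}[q^\varepsilon(\hat\mu_T)]$, where $\hat\mu$ is the \emph{unstopped} polynomial martingale on all of $\Delta^d$ with law $\hat Q$; the paper introduces this auxiliary process explicitly and then uses well-posedness of the martingale problem to identify the laws of $\mu^\tau$ under $Q$ and $\hat\mu^\tau$ under $\hat Q$. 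This is the clean way around the cemetery issue.

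A second, related point: you obtain $q^\varepsilon$ by Stone--Weierstrass, so it need not vanish on $\Delta^d\setminus E$. The paper instead takes the approximating polynomials $p_n$ with $p_n(\Delta^d\setminus E)=0$ exactly (e.g.\ multiply by $\prod_{j\notin J}\mu^j$). This is what makes the key identity
\[
E\bigl[p_n(\mu_T)Z_T\bigr]
= E_{\hat Q}\bigl[p_n(\hat\mu_{T})\,1_{\{\tau>T\}}\bigr]
= E_{\hat Q}\bigl[p_n(\hat\mu_T)\bigr]
= p_n(0,\mu_0)
\]
hold on the nose: on $\{\tau\le T\}$ one has $\hat\mu_\tau\in\Delta^d\setminus E$ (and it is absorbed there since $b^{\hat\mu}=0$ and $c^{\hat\mu}_{jj}=0$ on $\{\mu^j=0\}$), so $p_n(\hat\mu_T)=0$. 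With your $q^\varepsilon$ you would carry an error term $E_{\hat Q}[q^\varepsilon(\hat\mu_T)1_{\{\tau\le T\}}]$ that must be controlled separately. This is fixable (your $\varphi^\varepsilon$ has compact support in $E$, so the error is $O(\varepsilon)$), but the paper's choice avoids it entirely.

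In short: your plan is the paper's plan, but the paper's two devices --- the auxiliary polynomial martingale $(\hat\mu,\hat Q)$ in place of the F\"ollmer measure, and approximating polynomials that vanish exactly on $\Delta^d\setminus E$ --- are exactly what turns your acknowledged obstacle into a routine calculation.
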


\begin{remark}
Note that the strategy $\vartheta^{\varepsilon}_{t}$ yielding the ``approximate optimal arbitrage'' $Y_T^{\vartheta^{\varepsilon}}$
can be explicitly computed via matrix exponentials as it can be seen from \eqref{eq:vartheta}.
\end{remark}

\appendix

\section{Multiplicative modeling framework}\label{app:multi}
In addition to Section \ref{sec:setting}, let us here briefly review the \emph{multiplicative modeling framework} which used to be standard in stochastic portfolio theory
and can be applied if each component of the semimartingale $S$ is strictly positive. Indeed, then $S$ can be written in terms of the stochastic exponential of a $d$-dimensional semimartingale $R$ with $R_0=0$ and $\Delta R^i >-1$, i.e. $S^i=S^i_0\mathcal{E}(R^i)$, where $R$ is interpreted as the process of returns. Within this framework one can replace trading strategies by the notion of \emph{portfolios} defined as follows:

\begin{definition}
\begin{enumerate}
\item 
A \emph{portfolio} $\pi$ is a predictable process with values in
\[
 \left\{ x \in \mathbb{R}^d \, \Big{|}\, \sum_{i=1}^d x^i=1\right\}
\]
 such that 
$(\pi^1, \ldots, \pi^d)^{\top}$ is $R$-integrable. Each $\pi_t^i$ represents the proportion of current wealth invested at time $t$ in the $i^{\textrm{th}}$ asset for $i \in \{1, \ldots,d\}$.
\item A portfolio which satisfies $\pi^i \geq 0$ for all $i \in \{1, \ldots,d\}$, i.e., it takes values in the unit simplex $\Delta^d$
is called \emph{long-only}. 
\end{enumerate}
\end{definition}

Note that the market weights as defined in \eqref{eq:weights} are a particular long-only portfolio that invests in all assets in proportion to their relative weights.

By a slight abuse of notation we denote the wealth process achieved with initial wealth $v >0$ and by trading according to the portfolio $\pi$ by  $V^{v, \pi}$.
By converting the proportion of current wealth into numbers of shares, we can define a trading strategy $\vartheta$ by 
\[
\vartheta^i=\frac{V^{v,\pi}_{-} \pi^i }{S_{-}^i},\quad i=1, \ldots, d\,.
\]
The dynamics of $V^{v, \pi}$ can therefore be written as
\begin{align*}
\frac{dV_t^{v,\pi}}{V_{t-}^{v, \pi}}=\sum_{i=1}^n \pi_t^i \frac{dS_t^i}{S_{t-}^i}=\sum_{i=1}^n  \pi_t^i dR^i_t, \quad V_0^{v, \pi} =v >0.
\end{align*}
Similarly, the dynamics of the relative wealth process defined in \eqref{eq:Y} are in this setting given by
\begin{align*}
\frac{dY_t^{q,\pi}}{Y_{t-}^{q, \pi}}=\sum_{i=1}^d \pi_t^i\frac{d\mu_{t}^i}{\mu_{t-}^i}, \quad Y_0^{q, \pi} =q,
\end{align*}
in perfect analogy to the (original) wealth process where we have $\mu^i$ instead of $S^i$.

\section{Proofs of Section \ref{sec:pmwapm}} \label{app:pmwapm}

\subsection{Proofs of Subsection \ref{subsection:Vol}}

\begin{proof}[Proof of Proposition \ref{prop:charS}]
We start by proving (i). 
Recalling that  $ \mu^i=S^i/\Sigma$, we can rewrite~\eqref{eq:vstm} as
\begin{align}\label{eq:dynS}
dS^i_t= \frac{1+\alpha}{2} \Sigma_t dt+ \sqrt{S_t^i \Sigma_t} dW_t^i.
\end{align}
By~\cite[Theorem 1.2 and Corollary 1.3]{BP:03} (see also~\cite[Sections 4, 5]{FK:05}), this system of equations has a weak solution supported on $\mathbb{R}^d_{++}$, which is unique in the sense of probability law. Stated equivalently, this means that the associated martingale problem is well posed. Hence $(S^1, \ldots, S^d)$ is an It\^o-semimartingale which is Markovian, where the latter follows from~\cite[Theorem 4.4.2]{EK:86} due to the uniqueness of the solution to the martingale problem. Moreover, from~\eqref{eq:dynS} we see that that the drift and the diffusion matrix are linear and quadratic functions in the components of $S$. This thus yields the polynomial property and the form of the differential characteristics as stated in~\eqref{eq:charS}. 

Concerning (ii), note that the differential characteristics of $\Sigma$ are given by 
\[
(b^{\Sigma}, c^{\Sigma},  K^{\Sigma})=\left(d\frac{1+\alpha}{2}\Sigma, \Sigma^2, 0\right),
\] 
which implies that $\Sigma$ can be represented as in~\eqref{eq:BS}.
\end{proof}

\begin{proof}[Proof of Proposition \ref{prop:Jacobi}]
In order to compute the differential characteristics of $(\mu^1, \ldots, \mu^d)$, we apply~\cite[Proposition 2.5]{K:06}, which is simply a consequence of It\^o's formula, to the $C^2$-function $f: \mathbb{R}_{++}^{d+1} \to \mathring{\Delta}^d, f^i(S^1, \ldots, S^d, \Sigma)=S^i/\Sigma=\mu^i, i\in \{1, \ldots,d\}$. Denoting $\Sigma$ by $S^{d+1}$ and $\tilde{S}:=(S^1, \ldots, S^{d+1})$ as well as its differential characteristics accordingly, i.e.  
$b^{\Sigma}=b^{\tilde{S}}_{d+1}$ etc.,  we obtain from Proposition~\ref{prop:charS} and~\cite[Proposition 2.5]{K:06} the following form for the drift $b^{\mu}$
\begin{align*}
b^{\mu}_{i}&=D_i f^i(\tilde{S}) b^S_{i}+D_{d+1}f^i(\tilde{S}) b^{\tilde{S}}_{d+1}\\
&\quad +\frac{1}{2}\left(D_{ii} f^i(\tilde{S}) c^{S}_{ii}+2D_{i(d+1)}f^i(\tilde{S}) c^{\tilde{S}}_{i(d+1)}+D_{(d+1)(d+1)} f^i(\tilde{S})c^{\tilde{S}}_{(d+1)(d+1)}\right)\\
&=\frac{1}{S^{d+1}}\frac{1+\alpha}{2}S^{d+1}-\frac{S^i}{(S^{d+1})^2}\frac{d(1+\alpha)}{2}S^{d+1}\\
&\quad +\frac{1}{2}\underbrace{\left(-2\frac{1}{(S^{d+1})^2} S^{d+1}S^i+2 \frac{S^i}{(S^{d+1})^3} (S^{d+1})^2\right)}_{=0}\\
&=\frac{1+\alpha}{2}-\mu^i\frac{d (1+\alpha)}{2},
\end{align*}
and for the diffusion part $c^{\mu}$
\begin{align*}
c^{\mu}_{ii}&=(D_i f^i(\tilde{S}))^2c^{S}_{ii}+2D_i f^i(\tilde{S})D_{d+1} f^i(\tilde{S})c^{\tilde{S}}_{i(d+1)}+(D_{d+1} f^i(\tilde{S}))^2c^{\tilde{S}}_{(d+1)(d+1)}\\
&=\frac{1}{(S^{d+1})^2}S^{d+1}S^i-2\frac{1}{S^{d+1}}\frac{S^i}{(S^{d+1})^2}S^{d+1}S^i+\frac{(S^i)^2}{(S^{d+1})^4}(S^{d+1})^2\\
&=\mu^i-2(\mu^i)^2+(\mu^i)^2=\mu^i(1-\mu^i),\\
c^{\mu}_{ij}&=D_{i} f^i(\tilde{S})D_{d+1} f^j(\tilde{S})c^{\tilde{S}}_{i(d+1)}+ D_{d+1} f^i(\tilde{S})D_{j} f^j(\tilde{S})c^{\tilde{S}}_{j(d+1)}\\
&\quad +D_{d+1} f^i(\tilde{S})D_{d+1} f^j(\tilde{S})c^{\tilde{S}}_{(d+1)(d+1)}\\
&=-\frac{1}{(S^{d+1})}\frac{S^j}{(S^{d+1})^2}(S^i S^{d+1})- \frac{1}{(S^{d+1})}\frac{S^i}{(S^{d+1})^2}(S^j S^{d+1})\\
&\quad+\frac{S^i}{(S^{d+1})^2}\frac{S^j}{(S^{d+1})^2}(S^{d+1})^2\\
&=-\mu^i \mu^j, \quad i \neq j.
\end{align*}
Note here that $c^{\tilde{S}}_{i(d+1)}=S^iS^{d+1}$. This already yields the form of the differential characteristics in~\eqref{eq:charmu}. Noting that the differential characteristics of the Jacobi process given by~\eqref{eq:Jacobi} are the same and as the associate martingale problem admits a unique solution (see e.g.~\cite[Lemma 6.1]{CLS:16}), 
 we conclude that $(\mu^1, \ldots, \mu^d)$ defined via $\mu^i =S^i/\Sigma$ corresponds to this solution. Due to the fact that $(S^1, \ldots, S^d)$ takes values in $\mathbb{R}^d_{++}$, the state space of $(\mu^1,\ldots, \mu^d)$ is clearly $\mathring{\Delta}^d$. 
Moreover, since $S$ is Markovian with respect to $(\mathcal{F}_t)$, this is the case for $\mu$ as well. As the Markov property is preserved by passing to a coarser filtration to which $\mu$ is adapted to, in our case its natural filtration $(\mathcal{G}_t)$, 
 we can conclude the polynomial property of $(\mu^1, \ldots, \mu^d)$ with respect to $(\mathcal{G}_t)$.
\end{proof}

\begin{proof}[Proof of Proposition \ref{prop:jointproc}]
Note that in view of Proposition \ref{prop:charS} and \ref{prop:Jacobi}, we only have to prove that the instantaneous covariance between between $S$ and $\mu$ is a quadratic polynomial in the state variables. In order to compute this, we proceed as in the proof of Proposition \ref{prop:Jacobi}. Let $f: \mathbb{R}_{++}^{d+1} \to \mathring{\Delta}^d \times \mathbb{R}^{d+1}_{++}, f^i(S^1, \ldots, S^d, \Sigma)=S^i/\Sigma=\mu^i, f^{d+i}(S^1, \ldots, S^d, \Sigma)=S^i$ for $ i\in \{1, \ldots,d\}$ and $f^{2d+1}=\Sigma$ . Denoting $\Sigma$ by $S^{d+1}$ and its differential characteristics accordingly, and writing $\tilde{S}:=(S^1, \ldots, S^{d+1})$, 
 we have
\begin{align*}
c^{\mu,S}_{ij}&=D_{i} f^i(\tilde{S})D_{j} f^{d+j}(\tilde{S})c^{S}_{ij}+ D_{d+1} f^i(\tilde{S})D_{j} f^{d+j}(\tilde{S})c^{\tilde{S}}_{j(d+1)}\\
&=\frac{1}{S^{d+1}}c^{S}_{ij}-\frac{S^i}{(S^{d+1})^2}S^jS^{d+1}\\
&=\begin{cases}
S^i-\mu^i S^i & \text{if } i=j\\
-\mu^iS^j=-\mu^jS^i & \text{if } i\neq j.
\end{cases}
\end{align*}
Similarly, we have
\begin{align*}
c^{\mu,\Sigma}_{i}=c^{\mu,S^{d+1}}_{i}&= D_{i} f^i(\tilde{S})D_{d+1} f^{2d+1}(\tilde{S})c^{\tilde{S}}_{i(d+1)}\\
&\quad+ D_{d+1} f^i(\tilde{S})D_{d+1} f^{2d+1}(\tilde{S})c^{\tilde{S}}_{(d+1)(d+1)}\\
&=\frac{1}{S^{d+1}}S^i S^{d+1}-\frac{S^i}{(S^{d+1})^2}(S^{d+1})^2\\
&=0.
\end{align*}
This together with the form of the dynamics given in Proposition \ref{prop:charS} and \ref{prop:Jacobi}, implies independence of $\mu$ and $\Sigma$. 
\end{proof}

\subsection{Proofs of Section \ref{subsection:pmwapm}}

\begin{proof}[Proof of Theorem \ref{th:main}]
We prove the equivalence of (i) and (ii) by showing first that both conditions imply (iii).

Let us start by proving (i) $\Rightarrow$ (iii). By definition of a polynomial market weight model, $\mu$ is a polynomial process in its own filtration. Hence, by Proposition \ref{prop:statespace} taking $m=0$ (see also \cite[Proposition 6.6]{FL:14}), the form of the characteristic of $\mu$ follows immediately. Moreover, since $(\mu,S)$ is polynomial, the differential characteristics of $\Sigma$ necessarily satisfy 
\begin{align*}
b^{\Sigma}&=\kappa+\sum_{k=1}^d \lambda_k S^k+\sum_{k=1}^d \eta_k \mu^k, \\
c^{\Sigma}&=\alpha+\sum_{k=1}^d \phi_k S^k+ \sum_{k=1}^d \psi_k \mu^k+\sum_{k,l}\zeta_{kl} S^k S^l+\sum_{k,l} \theta_{kl} \mu^k S^l+ \sum_{k,l} \xi_{kl} \mu^k \mu^l,\\
c^{\Sigma, \mu}_i&=a^i+\sum_{k=1}^d A^i_k S^k+\sum_{k=1}^d B^i_k \mu^k+\sum_{k,l} C^i_{kl} S^k S^l+\sum_{k,l} D^i_{kl} \mu^k S^l+ \sum_{k,l} E^i_{kl} \mu^k \mu^l,
\end{align*}
for some parameters $\kappa, \alpha, a^i \in \mathbb{R}$, $ \lambda, \eta, \phi, \psi, A^i, B^i \in \mathbb{R}^d$ and $\zeta , \theta, \xi, C^i, D^i, E^i \in \mathbb{R}^{d \times d}$. We here assume without loss of generality that  $ \eta, \psi, B^i$, as well as the columns of $\theta$, $D^i$, $\xi$, $(\xi)^{\top}$,  $E^i$, $(E^i)^{\top}$ are not equal to $k\mathbf{1}$ for some constant $k \neq 0$, because then the linear and constant parts would be redundant. Let us now compute the differential characteristics of $S$. For the drift we have
\begin{align*}
b^S_i&= \Sigma b_i^{\mu}+ \mu^i b^{\Sigma} + c^{\Sigma, \mu}_i\\
&=\beta^{\mu}_i \Sigma + \sum_{k=1}^d B^{\mu}_{ik}S^k +  \kappa \mu^i+\mu^i(\sum_{k=1}^d \lambda_k S^k+\sum_{k=1}^d \eta_k \mu^k)+c^{\Sigma, \mu}_i.
\end{align*}
As the quadratic terms have to vanish, we obtain the following relationships, holding for all $i, k,l \in\{1, \ldots, d\}$ 
\begin{align*}
&C^i_{kl}=0 \\
&D^i_{ii}+\lambda_i=c,\quad D^i_{ik}+D^i_{ki}+\lambda_k=c, \quad k\neq i, 
\quad D^i_{kl}=0 \quad k \neq i \text{ and } l\neq i,\\
& E^i_{ii}+\eta_i=e,\quad E^i_{ik}+E^i_{ki}+\eta_k=e, \quad k\neq i,\quad E ^i_{kl}=0\quad k \neq i \text{ and } l\neq i,
\end{align*}
for some constants $c$ and $e$.
Therefore, 
\begin{align}
b^S_i&=  \beta^{\mu}_i \Sigma+ \sum_{k=1}^d B^{\mu}_{ik}S^k +  \kappa \mu^i+ a^i+\sum_{k=1}^d A^i_k S^k+\sum_{k=1}^d B^i_k \mu^k+cS^i+e \mu^i, \label{bS}\\
c^{\Sigma, \mu}_i&=a^i+\sum_{k=1}^d A^i_k S^k+\sum_{k=1}^d B^i_k \mu^k+\mu^i\left(\sum_{k=1}^d (c-\lambda_k) S^k+\sum_{k=1}^d (e-\eta_k) \mu^k\right) \label{csigmamu}.
\end{align}
For the instantaneous variance we have
\begin{align*}
c^{S}_{ii}&=(\mu^i)^2 c^{\Sigma}+2 \mu^i \Sigma c^{\Sigma, \mu}_i +\Sigma^2 c^{\mu}_{ii}\\
&=(\mu^i)^2\left(\alpha+\sum_{k=1}^d \phi_k S^k+ \sum_{k=1}^d \psi_k \mu^k+\sum_{k,l}\zeta_{kl} S^k S^l+\sum_{k,l} \theta_{kl} \mu^k S^l+ \sum_{k,l} \xi_{kl} \mu^k \mu^l\right)\\
&\quad +2 S^i\left(a^i+\sum_{k=1}^d A^i_k S^k+\sum_{k=1}^d B^i_k \mu^k+\mu^i\left(\sum_{k=1}^d (c-\lambda_k) S^k+\sum_{k=1}^d (e-\eta_k) \mu^k\right)\right)\\
&\quad +\sum_{k\neq i} \gamma^{\mu}_{ik} S^i S^k.
\end{align*}
In order to obtain a polynomial of degree $2$, the following conditions are necessary
\begin{equation}\label{eq:neccond}
\begin{split}
&\zeta_{kl}=\zeta + 2 \lambda_k, \\
&\theta_{kl}=\phi-\phi_l+ 2\eta_k, \\
&\xi_{kl}=\xi-\psi_k,  \\
\end{split}
\end{equation}
for all $l, k \in \{1, \ldots,d\}$. Here, $\zeta, \phi, \xi$ denote some constants. 
Inserting \eqref{eq:neccond} in the expression for $c^{\Sigma}$ yields
\begin{align*}
c^{\Sigma}&= \alpha+\xi+\phi \Sigma+ \zeta \Sigma^2+ 2\Sigma \sum_{k=1}^d \lambda
_k S^k  +2 \Sigma \sum_{k=1}^d  \eta_k \mu^k .
\end{align*}
Regarding $c^{\Sigma}$ as a function of $(\mu,S, \Sigma)$ it has to vanish whenever $(S, \Sigma)=0$, since $\Sigma$ is a strictly positive process.  
This implies that $\alpha+\xi=0$.
Since the instantaneous covariance matrix of $(\mu, \Sigma)$ has to be positive semidefinite, and both $c^{\mu}$ and $c^{\Sigma}$ do not contain a constant term, $a^i=0$ for all $i \in \{1,\ldots,d\}$. 
Hence, putting all together we obtain
\begin{align}\label{ciS}
c^{S}_{ii}=(\phi+2e)\mu^i S^i+(\zeta+2c)(S^i)^2+2 S^i(\sum_{k=1}^d A^i_k S^k+\sum_{k=1}^d B^i_k \mu^k)+\sum_{k\neq i} \gamma^{\mu}_{ik} S^i S^k.
\end{align}
For the instantaneous covariance we have
\begin{equation} \label{cijS}
\begin{split}
c^{S}_{ij}&=\mu^i \mu^j c^{\Sigma}+\mu^i \Sigma c^{\Sigma, \mu}_j+\mu^j \Sigma c^{\Sigma, \mu}_i+\Sigma^2 c^{\mu}_{ij}\\
&=\mu^i \mu^j (\phi \Sigma+ \zeta \Sigma^2+ 2\Sigma \sum_{k=1}^d \lambda
_k S^k  +2 \Sigma \sum_{k=1}^d  \eta_k \mu^k )\\
&\quad + S^i \left(\sum_{k=1}^d A^j_k S^k+\sum_{k=1}^d B^j_k \mu^k+\mu^j\left(\sum_{k=1}^d (c-\lambda_k) S^k+\sum_{k=1}^d (e-\eta_k) \mu^k\right)\right)\\
&\quad + S^j\left(\sum_{k=1}^d A^i_k S^k+\sum_{k=1}^d B^i_k \mu^k+\mu^i\left(\sum_{k=1}^d (c-\lambda_k) S^k+\sum_{k=1}^d (e-\eta_k) \mu^k\right)\right)\\
&\quad -\gamma^{\mu}_{ij} S^i S^j\\
&=\phi S^i \mu^j+ \zeta S^i S^j+ S^i \left(\sum_{k=1}^d A^j_k S^k+\sum_{k=1}^d B^j_k \mu^k+c S^j +e \mu^j\right)\\
&\quad + S^j\left(\sum_{k=1}^d A^i_k S^k+\sum_{k=1}^d B^i_k \mu^k+cS^i +e \mu^i\right) -\gamma^{\mu}_{ij} S^i S^j.
\end{split}
\end{equation}
Finally, we compute the instantaneous covariance  between $\mu$ and $S$
\begin{equation}
\begin{split} \label{cmuSi}
c^{\mu,S}_{ii}&=\mu^i c^{\Sigma,\mu}_i+\Sigma c^{\mu}_{ii}\\
&=\mu^i \left(\sum_{k=1}^d A^i_k S^k+\sum_{k=1}^d B^i_k \mu^k+\mu^i\left(\sum_{k=1}^d (c-\lambda_k) S^k+\sum_{k=1}^d (e-\eta_k) \mu^k\right)\right)\\
&\quad +S^i\sum_{j \neq i} \gamma^{\mu}_{ij} \mu^j.\\
c^{\mu,S}_{ij}&=\mu^j c^{\Sigma,\mu}_i+\Sigma c^{\mu}_{ij}\\
&=\mu^j \left(\sum_{k=1}^d A^i_k S^k+\sum_{k=1}^d B^i_k \mu^k+\mu^i\left(\sum_{k=1}^d (c-\lambda_k) S^k+\sum_{k=1}^d (e-\eta_k) \mu^k\right)\right)\\
&\quad - \gamma^{\mu}_{ij} S^i\mu^j .
\end{split}
\end{equation}
In order to obtain a polynomial of degree 2, we necessarily have for all $k$, $\lambda_k \equiv \lambda$ where $\lambda$ denotes now (by a slight abuse of notation) some constant. Similarly $\eta = k \mathbf{1}$ for some $k$ but the assumption at the beginning that   $\eta \neq k \mathbf{1}$  for $k\neq 0$ implies that $\eta=0$. This together with $\alpha+\xi=0$ yields
\begin{equation} \label{eq:charSigma}
\begin{split}
c^{\Sigma}&=\phi \Sigma+ (\zeta +2\lambda)\Sigma^2 \\
b^{\Sigma}&=\kappa + \lambda \Sigma.
\end{split}
\end{equation}
In order obey the condition that $\Sigma = \sum_{i=1}^d S^i$, we need to verify $b^{\Sigma}=\sum_{i=1}^d b^S_i$ and $c^{\Sigma}= \sum_{i,j} c^S_{ij}$.
This leads to the following conditions
\begin{align*}
A^i_i&=\lambda-c, \quad A^i_k=0, \quad i \neq k,\\
B^i_i&=-e, \quad B^i_k=0, \quad i \neq k,
\end{align*}
and implies due to \eqref{csigmamu}
\begin{align}\label{eq:cSigmamuneu}
c^{\Sigma, \mu}_i&=0.
\end{align}
Therefore, we see from the form of the characteristics for $(\mu, \Sigma)$, that it is a polynomial process on $ \Delta^d \times \mathbb{R}_+$. In particular, \eqref{eq:cSigmamuneu} is in line with Proposition \ref{prop:statespace}.  
Moreover, from Proposition \ref{prop:statespace} it follows that $\kappa, \phi$ and $\zeta + 2\lambda$ have to be non-negative. Since $\Sigma$ is strictly positive, Lemma \ref{lem:strictpos} further implies the condition $2\kappa -\phi \geq 0$.

In order to obtain the final form of the characteristics, we now insert all these restrictions in the equations \eqref{bS}, \eqref{ciS}, \eqref{cijS} and \eqref{cmuSi}. The remaining characteristics involving $\Sigma$ are given by \eqref{eq:charSigma} and \eqref{eq:cSigmamuneu}, and $c^{\Sigma, S}$ results from a straightforward computation. Observe that the only expressions involving $\zeta$ are $\zeta +2 \lambda $ so that we can replace $\zeta+2 \lambda\geq 0$ by some parameter $\sigma^2\geq0$. This then yields the form of the characteristics in assertion (iii) with the corresponding admissible parameters. 

In order to prove (i) $\Rightarrow$ (ii), it only remains to prove the independence of $\mu$ and $\Sigma$. 

In this respect note that the martingale problem in the sense of Remark \ref{rem:martingaleprob} associated to the characteristic of $\mu$ and $\Sigma$ (see e.g.~\cite[Lemma 6.1]{CLS:16}, \cite[Corollary 1.3]{BP:03} or Proposition \ref{prop:wellposed}) is well-posed. Therefore the solution corresponds to a weak solution of 
\begin{align*}
d\mu_t&=b^{\mu}_t dt + \sqrt{c^{\mu}_t} dB_t,\\
d\Sigma_t &= (\kappa +\lambda \Sigma_t ) dt+ \sqrt{\phi\Sigma_t + \sigma^2 \Sigma_t} dZ_t,
\end{align*}
where $B$ is a $d$-dimensional standard Brownian motion and  $Z$ a one-dimensional one,  independent of $B$. As $\mu$ and $\Sigma$ are fully decoupled, independence follows.

Let us now turn to the converse direction (ii) $\Rightarrow$ (i) by first proving that (ii) implies (iii). Observe that the existence of an admissible simplex parameter set and parameters $\kappa, \phi \in \mathbb{R}_+$ with $2\kappa +\phi \geq 0$, $\lambda, \sigma \in \mathbb{R}$ determining the form of the characteristics of $\mu$ and $\Sigma$ follows immediately from Proposition \ref{prop:statespace} and Lemma \ref{lem:strictpos}. Note in this respect that the independence of $\mu$ and $\Sigma$ implies that both $b^{\Sigma}$ and $c^{\Sigma}$ do not depend on $\mu$. The remaining characteristics can then be easily computed via It\^o's product rule using the fact that $S^i=\mu^i \Sigma$. This then yields the expressions stated in (iii), which implies that the joint process $(\mu,S)$ is polynomial.  
This together with the fact that $\mu$ is clearly a polynomial process in its own filtration allows to conclude that $(\mu,S)$ is a polynomial market weight and asset price model with continuous trajectories, whence (i).

The last statement of the theorem follows again from the fact that the martingale problem 
corresponding to the characteristics of $\mu$ and  $\Sigma$ as given in (iii) is well-posed as long as the involved parameters satisfy the stated admissibility conditions. The existence of a polynomial market weight and asset price model can then be deduced as in the proof of direction (ii) $\Rightarrow$ (i).
\end{proof}

The following lemma is needed in the above proof to characterize strict positivity of one dimensional polynomial processes.

\begin{lemma}\label{lem:strictpos}
Let $\Sigma$ be a polynomial diffusion process on $\mathbb{R}_{+}$ with $\Sigma_0> 0$, $b_t^{\Sigma}=\kappa +\lambda \Sigma_t$ and  $c_t^{\Sigma}=\phi \Sigma_t+\sigma^2 \Sigma_t^2$ where $\kappa, \phi \geq 0$ and $\lambda, \sigma \in \mathbb{R}$. Then the following assertions are equivalent:
\begin{enumerate}
\item For all $t >0$, $\Sigma_t >0$ $P$-a.s. 
\item $2\kappa -\phi \geq 0$.
\end{enumerate}
\end{lemma}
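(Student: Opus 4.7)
My approach is to reduce the statement to Feller's boundary classification for the one-dimensional SDE that $\Sigma$ solves. By Proposition \ref{prop:wellposed} applied to the one-dimensional case (where the off-diagonal condition is vacuous), the polynomial diffusion $\Sigma$ is the unique weak solution of
\[
d\Sigma_t = (\kappa + \lambda \Sigma_t)\, dt + \sqrt{\phi \Sigma_t + \sigma^2 \Sigma_t^2}\, dW_t,
\]
with $\Sigma_0 > 0$. Non-explosion at $+\infty$ in finite time follows from the linear growth of the drift and the at-most-quadratic growth of the diffusion via a standard Gronwall argument; so the question of strict positivity is entirely a question of attainability of the left boundary $0$.

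I would first dispose of the degenerate case $\phi = 0$. Here the diffusion coefficient $\sigma \Sigma_t$ vanishes at $0$, while the drift at $0$ equals $\kappa \geq 0$; applying It\^o's formula to $\log \Sigma_t$ (or, more rigorously, a Yamada--Watanabe comparison with a geometric Brownian motion) shows that $\Sigma_t > 0$ for all $t \geq 0$. Since $2\kappa - \phi = 2\kappa \geq 0$ automatically, both assertions hold trivially.

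The substantive case is $\phi > 0$, where I would compute the scale function and speed measure explicitly. Writing $\sigma^2(x) := \phi x + \sigma^2 x^2$, the scale derivative
\[
s'(x) = \exp\!\left( - \int_{x_0}^{x} \frac{2(\kappa + \lambda y)}{\phi y + \sigma^2 y^2}\, dy \right)
\]
has, near $y = 0$, integrand of the form $\tfrac{2\kappa/\phi}{y} + O(1)$, so $s'(x) \sim C\, x^{-2\kappa/\phi}$ as $x \to 0^+$ for some $C > 0$. Correspondingly the speed density $m(y) = 2/(\sigma^2(y) s'(y))$ behaves like $C'\, y^{2\kappa/\phi - 1}$ near $0$. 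From Feller's criterion, $0$ is attainable with positive probability from any interior starting point if and only if
\[
|s(0+)| < \infty \quad \text{and} \quad \int_0^{x_0} (s(x_0) - s(y))\, m(dy) < \infty,
\]
which from the above asymptotics occurs exactly when $2\kappa/\phi < 1$, i.e.\ $2\kappa < \phi$. Equivalently, $0$ is unattainable iff $2\kappa \geq \phi$. This gives both implications: (ii) $\Rightarrow$ (i) since $2\kappa \geq \phi$ forces unattainability, and (i) $\Rightarrow$ (ii) by contraposition, since $2\kappa < \phi$ yields $\P[\Sigma_t = 0 \text{ for some } t \leq T] > 0$ by Feller's test applied to the SDE stopped at a large level.

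The main obstacle is the careful invocation of Feller's classification: one must check that the asymptotics of the integrand are honest (the $O(1)$ remainder is integrable near $0$, so the logarithmic blow-up dominates) and that the speed measure integral behaves as claimed in both regimes $2\kappa \gtrless \phi$ and in the borderline case $2\kappa = \phi$, where $s(0+) = -\infty$ arises from the logarithmic term. Handling the boundary case cleanly, together with localizing away from the (harmless) right boundary, is where the bookkeeping is most delicate.
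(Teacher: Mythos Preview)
Your proof is correct and self-contained, but it proceeds by a different route than the paper. The paper does not compute the scale function or invoke Feller's boundary classification at all. Instead, for (ii) $\Rightarrow$ (i) it applies McKean's argument to $\log \Sigma$: It\^o's formula gives
\[
d\log \Sigma_t = \left(\frac{2\kappa-\phi}{2\Sigma_t} + \lambda - \tfrac12\sigma^2\right)dt + \frac{\sqrt{\phi\Sigma_t+\sigma^2\Sigma_t^2}}{\Sigma_t}\,dW_t,
\]
and since the drift is bounded below on $[0,\tau\wedge T)$ when $2\kappa\ge\phi$, the standard McKean argument (as in \cite[Proposition 4.3]{MPS:11}) forces $\tau=\infty$. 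For (i) $\Rightarrow$ (ii) the paper invokes \cite[Theorem 5.7(iii)]{FL:14}, a general boundary-attainment criterion for polynomial diffusions, with $p(x)=x$, $\bar x=0$, $h(x)=\phi+\sigma^2 x$: the hypotheses $\mathcal{G}p(\bar x)=\kappa\ge 0$ and $2\mathcal{G}p(\bar x)-h(\bar x)p'(\bar x)=2\kappa-\phi<0$ yield attainment of $0$ from sufficiently nearby starting points.

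Your Feller approach has the advantage of handling both directions at once and of showing directly that when $2\kappa<\phi$ the boundary is reached from \emph{every} interior starting point, whereas the paper's use of \cite{FL:14} only gives attainment from points close to $0$ (this suffices in the paper's context because the lemma is applied uniformly over initial conditions, and for a regular one-dimensional diffusion attainability from one interior point implies it from all). The paper's approach, on the other hand, avoids the explicit asymptotic computations of $s'$ and $m$ and the borderline analysis at $2\kappa=\phi$, trading them for two citations to results already in use elsewhere in the paper. One minor point: your stated form of Feller's criterion (finiteness of $s(0+)$ together with $\int_0^{x_0}(s(x_0)-s(y))\,m(dy)<\infty$) is not quite the standard formulation; the usual test is finiteness of $\int_0^{x_0} s'(y)\int_y^{x_0} m(z)\,dz\,dy$, though your asymptotics lead to the correct threshold $2\kappa/\phi=1$ either way.
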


\begin{proof}
To prove (ii) $\Rightarrow$(i) we apply McKean's argument (see e.g.~\cite[Proposition 4.3]{MPS:11}) to the $\log(\Sigma)$. 
Then by It\^o's formula, we have for $t < \tau :=\inf\{ s \geq 0 \, |\, \Sigma_s =0\}$
\begin{align*}
\log(\Sigma_t)=\log(\Sigma_0) + \int_0^t \left(\frac{2\kappa -\phi}{2 \Sigma_s} +\lambda- \frac{1}{2}\sigma^2 \right)ds +\int_0^t \frac{\sqrt{\phi \Sigma_s+ \sigma^2 \Sigma_s^2}}{\Sigma_s} dW_s,
\end{align*}
where $W$ denotes some one-dimensional Brownian motion. 
Since the first  term in the drift is nonnegative and the second is constant, we deduce that for every $\mathbf{T} >0$ 
\[
\inf_{t \in [0, \tau \wedge \mathbf{T})} \int_0^t \left(\frac{2\kappa -\phi}{2 \Sigma_s} +\lambda- \frac{1}{2}\sigma^2 \right)ds  > -\infty, \quad P\text{-a.s.}
\]
Mc Kean's argument as of e.g.~\cite[Proposition 4.3]{MPS:11} therefore yields that $\tau =\infty$ and thus implies the assertion.

For the converse direction we apply \cite[Theorem 5.7 (iii)]{FL:14} and assume for a contradiction that (ii) does not hold, i.e.~$2 \kappa -\phi <0$. In the terminology of \cite[Theorem 5.7 (iii)]{FL:14}, we have $p(x)=x$, $\bar{x}=0$, $h(x)=\phi+\sigma^2 x$.
Since 
\begin{align*}
\mathcal{G}p(\bar{x}) =\mathcal{G}p(0) =\kappa\geq 0 \text{ and } 2 \mathcal{G} p(\bar{x})- h(\bar{x}) p'(\bar{x}) = 2\kappa-\phi<0, 
\end{align*}
\cite[Theorem 5.7 (iii)]{FL:14} thus implies that for any time horizon $T$ we can find some $\Sigma_0>0$ close enough to $0$ such that $0$ is hit with positive probability.
This contradicts (i) and proves the assertion.

\end{proof}

\begin{proof} [Proof of Proposition \ref{prop:main}] 
Let us start by proving (i) $\Rightarrow$ (ii). As $\mu$ is supposed to be a polynomial process,  the form of the characteristic of $\mu$ as stated in (iii) follows similarly as in Theorem \ref{th:main} from Proposition \ref{prop:statespace}. Moreover, since $S$ is supposed to be polynomial too, the differential characteristics of $\Sigma$ necessarily satisfy  
\begin{align*}
b^{\Sigma}&=\kappa+\sum_{k=1}^d \lambda_k S^k, \\
c^{\Sigma}&=\alpha+\sum_{k=1}^d \phi_k S^k+\sum_{k,l}\zeta_{kl} S^k S^l,\\
\end{align*}
for some parameters $\kappa, \alpha \in \mathbb{R}$, $ \lambda,\phi \in \mathbb{R}^d$ and $\zeta \in \mathbb{R}^{d \times d}$. Using these expressions and the characteristics of $\mu$ we now compute the differential characteristics of $S^i=\mu^i \Sigma$ for $i \in \{1, \ldots,d\}$. For the drift we have
\begin{align*}
b^S_i&= \Sigma b_i^{\mu}+ \mu^i b^{\Sigma} + c^{\Sigma, \mu}_i\\
&=\beta^{\mu}_i \Sigma + \sum_{k=1}^d B^{\mu}_{ik}S^k +  \kappa \mu^i+\mu^i(\sum_{k=1}^d \lambda_k S^k)+c^{\Sigma, \mu}_i.
\end{align*}
In order to obtain an affine function in the components of $S$, $c^{\Sigma, \mu}$ is necessarily of the following form
\begin{align}\label{eq:covmuSig}
c^{\Sigma, \mu}_i=a^i -\kappa \mu^i+ \sum_{k=1}^d A^i_k S^k +\mu^i\sum_{k=1}^d (c-\lambda_k) S^k.
\end{align}
Therefore, 
\begin{align}
b^S_i&=  \beta^{\mu}_i \Sigma+ \sum_{k=1}^d B^{\mu}_{ik}S^k +  a^i+\sum_{k=1}^d A^i_k S^k+cS^i\label{bS1}.
\end{align}
For the instantaneous variance we have
\begin{align*}
c^{S}_{ii}&=(\mu^i)^2 c^{\Sigma}+2 \mu^i \Sigma c^{\Sigma, \mu}_i +\Sigma^2 c^{\mu}_{ii}\\
&=(\mu^i)^2(\alpha+\sum_{k=1}^d \phi_k S^k+\sum_{k,l}\zeta_{kl} S^k S^l)\\
&\quad +2 S^i\left(a^i-\kappa \mu^i+\sum_{k=1}^d A^i_k S^k+\mu^i\left(\sum_{k=1}^d (c-\lambda_k) S^k\right)\right)\\
&\quad +\sum_{k\neq i} \gamma^{\mu}_{ik} S^i S^k.
\end{align*}
To obtain a polynomial of degree $2$, that does not depend on $\mu$ the following conditions are necessary
\begin{align*}
&\alpha=0\\
&\zeta_{kl}=\zeta+2\lambda_k,  \\
&\phi_k=2\kappa, 
\end{align*}
for all $l, k \in \{1, \ldots,d\}$. Here, by a slight abuse of notation $\zeta$ denotes some constant. Inserting these restrictions into $c^{\Sigma}$ yields
%\begin{align*}
$c^{\Sigma}= 2\kappa \Sigma+ \zeta \Sigma^2+ 2\Sigma \sum_{k=1}^d \lambda
_k S^k$. 
%\end{align*}
 Since the instantaneous covariance matrix of $(\mu, \Sigma)$ has to be positive semidefinite, and both $c^{\mu}$ and $c^{\Sigma}$ do not contain a constant term, $a^i = 0$ for all $i \in {1,...,d}$ in \eqref{eq:covmuSig}.
We then have
\begin{align}\label{ciiS1}
c^{S}_{ii}&= (\zeta+2c) (S^i)^2+2 S^i\left(\sum_{k=1}^d A^i_k S^k\right) +\sum_{k\neq i} \gamma^{\mu}_{ik} S^i S^k
\end{align}
and for the instantaneous covariance $c^{S}_{ij}$
\begin{equation} \label{cijS1}
\begin{split}
c^{S}_{ij}&=\mu^i \mu^j c^{\Sigma}+\mu^i \Sigma c^{\Sigma, \mu}_j+\mu^j \Sigma c^{\Sigma, \mu}_i+\Sigma^2 c^{\mu}_{ij}\\
&=\mu^i \mu^j ( 2\kappa \Sigma+ \zeta \Sigma^2+ 2\Sigma \sum_{k=1}^d \lambda
_k S^k)\\
&\quad + S^i \left(-\kappa \mu^j+ \sum_{k=1}^d A^j_k S^k +\mu^j\sum_{k=1}^d (c-\lambda_k) S^k\right)\\
&\quad + S^j\left(-\kappa \mu^i+ \sum_{k=1}^d A^i_k S^k +\mu^i\sum_{k=1}^d (c-\lambda_k) S^k\right)\\
&\quad -\gamma^{\mu}_{ij} S^i S^j\\
&=(\zeta+2c) S^i S^j+ S^i \left(\sum_{k=1}^d A^j_k S^k\right)+ S^j\left(\sum_{k=1}^d A^i_k S^k\right) -\gamma^{\mu}_{ij} S^i S^j.
\end{split}
\end{equation}
To guarantee that $\Sigma=\sum_{i=1}^d S^i$ we need to achieve $b^{\Sigma} = \sum_{i=1}^d b^S_i$ and $c^{\Sigma}= \sum_{i,j} c^S_{ij}$. Therefore it is necessary to impose $\kappa=0$ and 
\[
A^i_i=\lambda_i-c, \quad A^i_k=0, \quad i \neq k, 
\]
so that we finally obtain from \eqref{bS1}, \eqref{ciiS1}, \eqref{cijS1} the form of the characteristics stated in (ii).
Note that the matrix $\alpha$ in the characteristics of $S$ as specified in Proposition \ref{prop:statespace} is given by $\zeta\mathbf{1}\mathbf{1}^{\top}+\Lambda -\gamma^{\mu}$, where $\Lambda_{ij}=\lambda_i +\lambda_j$ and $\mathbf{1}$ denotes the vector with all entries equal to $1$. By Proposition~\ref{prop:statespace} we thus have the requirement that $\zeta\mathbf{1}\mathbf{1}^{\top}+\Lambda -\gamma^{\mu} +\Diag(\gamma^{\mu} s)\Diag(s)^{-1}$ is positive semidefinite for every $s \in \mathbb{R}^d_{++}$. As $\gamma^{\mu}_{ij} \geq 0$ for all $i\neq j$,  
\[
\Diag(\gamma^{\mu} s)\Diag(s)^{-1}=\Diag\left(\frac{1}{s_1}\sum_{j\neq 1} \gamma^{\mu}_{1j}s_j, \ldots,\frac{1}{s_d}\sum_{j\neq 1} \gamma^{\mu}_{dj}s_j\right)
\]
is positive definite. Each component $\frac{1}{s_i}\sum_{j\neq i} \gamma^{\mu}_{ij}s^j$ can be made arbitrarily small by choosing  $s_j$ for $j\neq i$ accordingly. 
This then yields that  $\zeta\mathbf{1}\mathbf{1}^{\top}+\Lambda -\gamma^{\mu} $ is positive semidefinite and $\zeta+ 2\lambda_i  \geq 0$ for all $i$, 
whence assertion (ii).

Conversely, assume (ii). Then the polynomial property of $S$ is clear. Moreover the characteristics of $(\mu, \Sigma)$ can be easily computed 
and are stated in (iii). From this  we see that $\mu$ is a Markovian It\^o-semimartingale with respect to $(\mathcal{F}_t)$ and thus also with respect to its natural filtration. Hence we can conclude the polynomial property of $\mu$ and thus (i).

The form of the characteristics as stated in (iii) follows from (i) and (ii) by simple computations. 
\end{proof}

\begin{proof}[Proof of Corollary \ref{cor:main}]
This is a simple consequence of Theorem \ref{th:main}. 
Indeed, on the one hand Condition (i) implies $\kappa=\phi=0$, whence we infer from (ii) in Theorem~\ref{th:main} that $\Sigma$ reduces to a Black-Scholes model. On the other hand Condition (ii) clearly also implies $\kappa=\phi=0$. From the form of the characteristics of $S$ as stated in item (iii) of Theorem \ref{th:main}, we therefore deduce that $S$ is polynomial with respect to its own filtration.
\end{proof}

\subsubsection{Proofs of Section \ref{sec:jumps}}

\begin{proof}[Proof of \ref{prop:jump}]
By independence of $\mu$ and $\Sigma$, the characteristics of $S^i=\mu^i\Sigma$ and the joint characteristics of $(\mu, S)$ with respect to the truncation function $\chi(\xi)=\xi$ read as follows
\begin{align*}
&b^{S}_i=\Sigma b^{\mu}_i + \mu^i b^{\Sigma}+ \int \xi_i^{\mu} \xi^{\Sigma} K(\mu, \Sigma, d\xi^{\mu}, d\xi^{\Sigma} ),\\
&c^{S}_{ij}=\Sigma^2c^{\mu}_{ij}+\mu^i \mu^j c^{\Sigma}, \quad c^{\mu, S}_{ij}=\Sigma c^{\mu}_{ij},\\
&K^{(\mu,S)}(G)=\\
&\, \,  \int 1_G\left( \xi^{\mu}_1,\ldots, \xi_d^{\mu}, \mu^1 \xi^{\Sigma}+\Sigma \xi^{\mu}_1+\xi^{\mu}_1\xi^{\Sigma}, \ldots,
\mu^d \xi^{\Sigma}+\Sigma \xi^{\mu}_d+\xi^{\mu}_d\xi^{\Sigma}\right)K(\mu, \Sigma, d\xi^{\mu}, d\xi^{\Sigma} ),
\end{align*}
where for $A \subseteq \mathbb{R}^{d+m}$, $K(\mu, \Sigma, A) =K(\mu, \{\xi^{\mu}: (\xi^{\mu},0) \in A\})+K(\Sigma, \{ \xi^{\Sigma}:  (0, \xi^{\Sigma}) \in A\})$ due to the independence of $\mu$ and $\Sigma$. This means in particular that $\mu$ and $\Sigma$ cannot jump together and that the jump term in the expression for $b^S$ vanishes. Hence, $b^{S}_i=\Sigma b^{\mu}_i + \mu^i b^{\Sigma}$ and is thus  a polynomial of degree $1$ in the components of $(\mu, S)$ again due to the independence of $\mu$ and $\Sigma$ and the form of the parameters stated in Proposition \ref{prop:statespace}.

Similarly we obtain due to \eqref{eq:jumpmeas}  that 
\begin{align*}
&c^{S}_{ij}+\int (\mu^i \xi^{\Sigma}+\Sigma \xi^{\mu}_i+\xi^{\mu}_i\xi^{\Sigma})(\mu^j \xi^{\Sigma}+\Sigma \xi^{\mu}_j+\xi^{\mu}_j\xi^{\Sigma}) K(\mu,\Sigma,d \xi^{\mu}, d\xi^{\Sigma})\\
&\quad=c^{S}_{ij}+\int (\mu^i)^2 (\xi^{\Sigma})^2 K(\Sigma, d\xi^{\Sigma})
+\int\Sigma^2 \xi^{\mu}_i\xi^{\mu}_j K(\mu, d\xi^{\mu})\\
&c^{\mu,S}_{ij}+\int (\mu^j \xi^{\Sigma}+\Sigma \xi^{\mu}_j+\xi^{\mu}_j\xi^{\Sigma}) \xi^{\mu}_i  K(\mu,\Sigma,d \xi^{\mu}, d\xi^{\Sigma})\\
&\quad=c^{\mu,S}_{ij}+\int \Sigma \xi^{\mu}_i\xi^{\mu}_j  K(\mu,d \xi^{\mu})
\end{align*}
are polynomials of degree $2$ in the components of $(\mu, S)$. Finally for $\mathbf{k}=(k_1, \ldots, k_{2d})$
\begin{align*}
&\int (\xi_1^{\mu})^{k_1}\cdots(\xi_d^{\mu})^{k_d}(\mu^1 \xi^{\Sigma}+\Sigma \xi^{\mu}_1+\xi^{\mu}_1\xi^{\Sigma})^{k_{d+1}} \cdots(\mu^d \xi^{\Sigma}+\Sigma \xi^{\mu}_d+\xi^{\mu}_d\xi^{\Sigma})^{k_{2d}}  K(\mu,\Sigma,d\xi^{\mu}, d\xi^{\Sigma}) \\
&\quad =\int (\xi_1^{\mu})^{k_1+k_{d+1}}\cdots(\xi_d^{\mu})^{k_d+k_{2d}} \Sigma^{\sum_{i=1}^d k_{d+i}}
K(\mu,d\xi^{\mu})\\
&\quad \quad +1_{\{(k_1,\ldots, k_d)=0\}} \int \mu_1^{k_{d+1}} \cdots \mu_d^{k_{2d}} (\xi^{\Sigma})^{\sum_{i=1}^d k_{d+i}} K(\Sigma, d\xi^{\Sigma})
\end{align*}
lies in $\mathcal{P}_{|\mathbf{k}|}(\Delta^d \times \mathbb{R}^d_+)$, i.e.~they are polynomials of degree $|\mathbf{k}|$  in the components of $(\mu, S)$. This shows the polynomial property of $(\mu,S)$, since the moment condition of Definition \ref{def:poly} holds as $\mu$ and $\Sigma$ are assumed to satisfy it.  
\end{proof}
\subsubsection{Proof of Proposition~\ref{prop:statespace}}\label{app:proof}

In order to prove Proposition~\ref{prop:statespace}, we start by the following auxiliary lemmas, which translate the assertions of \cite[Lemma E.1, E.3]{CLS:16} to the present state space $D=\Delta^d \times \mathbb{R}^m_+$ for $d \geq 2$.

\begin{lemma}\label{lem:representations}
Consider a polynomial $p \in \mathcal{P}_n$.
\begin{enumerate}
\item If $p$ vanishes on $\Delta^d \times \mathbb{R}^m_+ \cap \{x_i=0\}$ for some $i \in \{1, \ldots, d+m\}$, it can be written as
\begin{align}\label{eq:rep1}
p(x)=x_ip^i_{n-1}, \quad \text{ for some  }p^i_{n-1} \in \mathcal{P}_{n-1}.
\end{align}
\item If $p$ vanishes on $\Delta^d \times \mathbb{R}^m_+ \cap (\{x_i=0\} \cup \{x_j =0\})$ for some $i, j \in \{1, \ldots, d+m\}$, it can be written as
\begin{align}\label{eq:rep2}
p(x)=x_ix_jp^{ij}_{n-2}, \quad \text{ for some  } p^{ij}_{n-2} \in \mathcal{P}_{n-2}.
\end{align}
\end{enumerate}
\end{lemma}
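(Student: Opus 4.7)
The proof plan rests on treating $\mathcal{P}_n(D)$ as the quotient $\mathcal{P}_n(\mathbb{R}^{d+m})/\bigl((\sum_{k\in I} x_k - 1)\cdot \mathcal{P}_{n-1}(\mathbb{R}^{d+m})\bigr)$, which is legitimate because $D$ has nonempty relative interior in the affine hyperplane $H=\{\sum_{k\in I} x_k = 1\}$, so the ideal of polynomials vanishing on $D$ is exactly the principal ideal generated by $\sum_{k\in I} x_k-1$. Throughout, pick an arbitrary polynomial representative $P\in\mathcal{P}_n(\mathbb{R}^{d+m})$ of $p$ and perform polynomial division in the variable $x_i$ to write $P(x)=x_i\,Q(x)+R(x_1,\dots,\hat{x}_i,\dots,x_{d+m})$ with $\deg Q\le n-1$ and $\deg R\le n$. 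The goal for (i) is then to show that $R$ represents zero in $\mathcal{P}_n(D)$, i.e.\ $R$ is divisible by $\sum_{k\in I} x_k-1$.

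For (i) with $i\in J=\{d+1,\dots,d+m\}$, the set $D\cap\{x_i=0\}$ is $\Delta^d\times(\mathbb{R}_+^{m-1}\times\{0\})$, which has nonempty interior in the hyperplane $H$. Since $R$ does not involve $x_i$, it suffices to view $R$ as a polynomial on $\mathbb{R}^{d+m-1}$ and divide it by the degree-one polynomial $\sum_{k\in I} x_k-1$ (linear in, say, $x_d$), obtaining $R=(\sum_{k\in I} x_k-1)\,Q'+S(x_1,\dots,x_{d-1},x_{d+1},\dots,\hat x_i,\dots)$. The hypothesis that $R$ vanishes on $\Delta^d\times\mathbb{R}_+^{m-1}$ forces $S$ to vanish on a set of nonempty interior in $\mathbb{R}^{d+m-2}$, hence $S\equiv 0$ and therefore $R=(\sum_{k\in I} x_k-1)Q'$. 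Then $P=x_iQ+R$ descends to $p=x_i\,p^i_{n-1}$ on $D$ with $p^i_{n-1}$ the class of $Q$, and degree control is immediate.

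For (i) with $i\in I$, WLOG $i=1$, the polynomial $R$ does not involve $x_1$ and vanishes on the face $\{x_1=0,\ \sum_{k=2}^d x_k=1,\ x_k\ge 0\}\times\mathbb{R}_+^m$. Repeating the argument in the $(d-1)$-dimensional simplex yields $R=(\sum_{k=2}^d x_k-1)\tilde Q$ with $\deg\tilde Q\le n-1$. Using the algebraic identity $\sum_{k=2}^d x_k-1=(\sum_{k\in I}x_k-1)-x_1$, we obtain $P = x_1(Q-\tilde Q)+(\sum_{k\in I} x_k-1)\tilde Q$, so on $D$ we have $p=x_1\cdot[\text{class of }Q-\tilde Q]$ with the class in $\mathcal{P}_{n-1}(D)$. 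Part (ii) then follows by iteration: from (i), $p=x_i\,p^i_{n-1}$, and since $p$ vanishes on $D\cap\{x_j=0\}$ while $x_i$ is not identically zero on the nonempty relatively open subset $D\cap\{x_j=0\}\cap\{x_i>0\}$, continuity forces $p^i_{n-1}$ to vanish on $D\cap\{x_j=0\}$; applying (i) once more gives $p^i_{n-1}=x_j\,p^{ij}_{n-2}$, hence $p=x_ix_j\,p^{ij}_{n-2}$.

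The main technical obstacle is the case $i\in I$ in part (i): one must work modulo the simplex constraint $\sum_{k\in I} x_k=1$, and the naive factorization on the face $\{x_i=0\}$ yields the ``wrong'' linear form $\sum_{k\in I\setminus\{i\}}x_k-1$ rather than $\sum_{k\in I}x_k-1$; reconciling these two via the identity above, and simultaneously tracking that the resulting quotient polynomial really lies in $\mathcal{P}_{n-1}(D)$ (not one degree higher), is the delicate bookkeeping step.
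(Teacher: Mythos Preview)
Your argument is correct, but it follows a genuinely different route from the paper's. The paper exploits a single observation: on $D=\Delta^d\times\mathbb{R}^m_+$ every element of $\mathcal{P}_n(D)$ can be represented by a \emph{homogeneous} polynomial of degree exactly $n$, obtained by multiplying lower-order terms by suitable powers of $\sum_{k\in I}x_k=1$. Once $p$ is written as $\sum_{|\mathbf{n}|=n}p_{\mathbf{n}}x^{\mathbf{n}}$, the vanishing on $D\cap\{x_i=0\}$ kills all coefficients with $n_i=0$ by linear independence of the homogeneous monomials on the face, and both assertions drop out in two lines with no case distinction between $i\in I$ and $i\in J$.

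Your approach instead works with an arbitrary (inhomogeneous) representative and carries out explicit polynomial division, first by $x_i$ and then by the linear simplex constraint, using the identity $\sum_{k\in I\setminus\{i\}}x_k-1=(\sum_{k\in I}x_k-1)-x_i$ to absorb the ``wrong'' linear form back into the ideal. This is perfectly valid and has the merit of being entirely constructive, showing exactly how the quotient polynomial arises; it also makes the degree bookkeeping explicit. The cost is the case split and the extra algebraic manipulation you flag in your last paragraph, all of which the homogenization trick bypasses. For part (ii) you iterate (i) and use a density/continuity argument to pass from $\{x_j=0,\,x_i>0\}$ to $\{x_j=0\}$; the paper again gets both factors at once from the homogeneous representation. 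In short: your proof is sound, just longer---the paper's device of homogenizing against the simplex relation is worth noting as a shortcut for this kind of divisibility statement on $\Delta^d$.
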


\begin{proof}
Note that on the state space $\Delta^d \times \mathbb{R}^m_+$  every polynomial $p \in \mathcal{P}_n$ can be written as 
\[
p(x)=\sum_{|\mathbf{n}| =n} p_{\mathbf{n}} x^{\mathbf{n}},
\]
for real coefficients  $(p_{\mathbf{n}})_{|\mathbf{n}| =n}$. Indeed this is possible by multiplying with powers of $\sum_{i=1}^d x_i=1$. The assumption that $p$ vanishes on $\Delta^d \times \mathbb{R}^m_+ \cap \{x_i=0\}$ translates to
\[
0=p(x)=\sum_{|\mathbf{n}| =n, n_i=0} p_{\mathbf{n}} x^{\mathbf{n}},
\]
which implies that $p_{\mathbf{n}}=0$ for all multi-indices $\mathbf{n}$ such that $|\mathbf{n}| =n$ and $n_i=0$. We thus conclude that $p$ satisfies \eqref{eq:rep1}. Similarly we have for the second assertion (ii), that $p_{\mathbf{n}}=0$ for all multi-indices $\mathbf{n}$ such that $|\mathbf{n}| =n$ and $n_i=0$ or $n_j=0$. Therefore \eqref{eq:rep2} holds true.
\end{proof}

For the formulation of the subsequent lemma recall the notation  $I=\{1, \ldots, d\}$ and $J=\{d+1, \ldots, d+m\}$. Moreover, for a matrix $c \in \mathbb{S}^{d+m}$ we write $c_{II}$ for the matrix consisting of the first $d$ columns and rows and analogously $c_{JJ}$. Similarly, $x_I$ and $x_J$ stand for the vector $x$ consisting of the first $d$ and last $m$ elements respectively.

\begin{lemma} \label{lem:matrix} The following assertions are equivalent:
\begin{enumerate}
\item The matrix $c(x) \in \mathbb{S}^{d+m}$ satisfies $c(x)e_j=0$ on  $\Delta^d \times \mathbb{R}^m_+ \cap \{x^j=0\}$ for all $j \in I \cup J$, $c_{II}\mathbf{1}=0$ on $\Delta^d \times \mathbb{R}^m_+$  and $c_{ij} \in \mathcal{P}_2$ for all $i,j \in I \cup J$.
\item The matrix $c$ satisfies the condition stated in Proposition \ref{prop:statespace} (i).
\end{enumerate}
\end{lemma}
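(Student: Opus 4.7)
My plan is to prove the two implications separately, treating positive semi-definiteness of $c(x)$ on $D := \Delta^d \times \mathbb{R}^m_+$ as an implicit standing assumption, since in the context of Proposition~\ref{prop:statespace} the matrix $c$ is the instantaneous covariance of a diffusion on $D$. The direction (ii) $\Rightarrow$ (i) I would dispatch by direct substitution into the parametric form: each entry manifestly lies in $\mathcal{P}_2$; the identity $c_{II}\mathbf{1} = 0$ follows from the cancellation $c_{ii} + \sum_{j \neq i,\,j \in I} c_{ij} = 0$ built into the diagonal formula; and every entry of column $j$ visibly carries a factor $x_j$, hence vanishes on $\{x_j = 0\}$.

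For (i) $\Rightarrow$ (ii) I would proceed block by block using Lemma~\ref{lem:representations}. Since $c$ is symmetric and its $j$-th column vanishes on $\{x_j = 0\} \cap D$, the entry $c_{ij}$ vanishes on $\{x_i = 0\} \cup \{x_j = 0\}$ for every $i \neq j$; part (ii) of the representation lemma then gives $c_{ij} = x_i x_j q_{ij}$ with $q_{ij}$ a constant, which I name $-\gamma_{ij}$ for $i,j \in I$, $\alpha_{ij}$ for $i,j \in J$, and $c^0_{ij}$ in the mixed case $i \in I$, $j \in J$. The diagonals $c_{ii}$ for $i \in I$ are then forced by $c_{II}\mathbf{1} = 0$ to equal $\sum_{j \neq i,\,j \in I} \gamma_{ij} x_i x_j$. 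For $j \in J$, the diagonal $c_{jj}$ vanishes on $\{x_j = 0\}$, and part (i) of the lemma yields $c_{jj} = x_j q_j$ with $\deg q_j \le 1$; expanding $q_j$ and absorbing the $x_j$-coefficient into $\alpha_{jj}$ (so that $\pi_{(j)j} = 0$) gives the form $c_{jj} = \alpha_{jj} x_j^2 + x_j(\phi_j + \theta_{(j)}^\top x_I + \pi_{(j)}^\top x_J)$.

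The crux of the argument is to show that $c^0_{ij} = 0$ in the mixed case $i \in I$, $j \in J$, and this is the step I expect to be the main obstacle. Here I would evaluate $c$ at a vertex of the simplex: fix $i \in I$ and take $x$ with $x_i = 1$, $x_k = 0$ for $k \in I\setminus\{i\}$, and arbitrary $x_J \in \mathbb{R}^m_+$. Then $c_{ii}(x) = 0$, and the Cauchy--Schwarz inequality $c_{ij}^2 \le c_{ii} c_{jj}$ (valid for every symmetric PSD matrix) forces $c_{ij}(x) = 0$; since $c_{ij}(x) = c^0_{ij}\, x_j$ at this point, the vanishing must hold for every $x_j \ge 0$, so $c^0_{ij} = 0$. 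The remaining parameter admissibility conditions ($\gamma_{ij} \ge 0$, $\phi_j \ge \max_{i} \theta_{(j)i}^-$, $\pi_{(j)k} \ge 0$ for $k \ne j$, and the matrix inequality on $\alpha$) I would then read off from PSD of $c$ restricted to appropriate faces of $D$: two-dimensional simplicial edges for the $\gamma$'s, the limit $x_j \downarrow 0$ with $x_I$ running over $\Delta^d$ for the bounds involving $\phi_j$, $\theta_{(j)}$ and $\pi_{(j)}$, and the full $J$-block for the $\alpha$-condition. Beyond the vertex argument everything is routine bookkeeping with Lemma~\ref{lem:representations}.
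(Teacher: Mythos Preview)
Your proposal is correct and follows essentially the same route as the paper: apply Lemma~\ref{lem:representations} to obtain the off-diagonal entries as $x_ix_j$ times a constant, read off $c_{ii}$ for $i\in I$ from $c_{II}\mathbf{1}=0$, read off $c_{jj}$ for $j\in J$ from Lemma~\ref{lem:representations}(i), and then extract the parameter constraints from positive semidefiniteness on suitable faces of $D$.

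Two small remarks. First, your vertex argument for the mixed block ($i\in I$, $j\in J$) is in fact a mild simplification of the paper's: the paper takes $x_i=s$, $x_k=(1-s)/(d-1)$ for $k\in I\setminus\{i\}$ and argues that for $s$ close to $1$ the left side of $c_{ii}c_{jj}\ge c_{ij}^2$ becomes arbitrarily small, whereas you evaluate directly at the vertex $s=1$ where $c_{ii}$ vanishes identically. Your version is cleaner and avoids the limit. Second, for (ii)~$\Rightarrow$~(i) the paper notes that the only non-obvious point is positive semidefiniteness of $c_{II}$ (for which it cites \cite[Proposition~6.6]{FL:14}); you implicitly fold this into the standing PSD assumption, which is fine, but be aware that establishing PSD of the $II$-block from the parametric form with $\gamma_{ij}\ge 0$ does require a short argument.
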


\begin{proof}
We start by proving (i) $\Rightarrow$ (ii) by applying similar arguments as in \cite[Lemma E.3]{CLS:16} and \cite[Proposition 6.4]{FL:14}. Since  $c(x)e_j=0$ on  $\Delta^d \times \mathbb{R}^m_+ \cap \{x^j=0\}$ implies $c_{ij}=0$ on $\Delta^d \times \mathbb{R}^m_+ \cap \{x_j=0\}$ and thus by symmetry $c_{ij}=0$ on $\Delta^d \times \mathbb{R}^m_+ \cap(\{x^i =0\} \cup \{x^j=0\})$. Lemma \ref{lem:representations} (ii) together with $c_{ij} \in \mathcal{P}_2$  thus yields $c_{ij}(x)=-\gamma_{ij}x_ix_j$ for all $i \neq j$ and some $\gamma_{ij} \in \mathbb{R}$. Moreover, as $c_{II}\mathbf{1}=0$ on $\Delta^d \times \mathbb{R}^m_+$, we also have that
\[
c_{ii}(x)= -\sum_{j \neq i, j \in I} c_{ij}(x)= \sum_{j \neq i, j \in I}\gamma_{ij}x_ix_j, \quad i \in I
\]
for all $x \in \Delta^d \times \mathbb{R}^m_+$. Since $c_{ii} \geq 0$ and 
$\gamma_{ij}$ can be written as $\gamma_{ij}=4c_{ii}(\frac{e_i+e_j}{2})$ it follows that $\gamma_{ij} \in \mathbb{R}_+$ for $i,j \in I$ and the form of $c_{II}$ is proved.

Let us now consider $c_{jj}$ for $j \in J$. By \ref{lem:representations} (i), we have 
$c_{jj}(x)=x_jp_1^j$ with some affine function $p_1^j$, which already yields the form
\[
c_{jj}(x)=\alpha_{jj}x_j^2+x_j(\phi_j+\theta_{(j)}^{\top} x_I+ \pi^{\top}_{(j)}x_J)
\]
with $\alpha_{jj} \in \mathbb{R}$, $\theta_{(j)} \in \mathbb{R}^d$, $\pi_{(j)} \in \mathbb{R}^m$ with $\pi_{(j)j}=0$. Positive semidefiniteness of $c(x)$ requires $c_{jj}(x) \geq 0$ for all $x$ on $\Delta^d \times \mathbb{R}_+^m$. This directly yields $\pi_{(j)} \in \mathbb{R}^m_+$. Furthermore by setting $x_k =0$ for $k \in J \setminus\{j\}$ and making $x_j$ sufficiently small, we see that $\phi_j +\theta^{\top}_{(j)}x_I\geq 0$ is required for all $x_I \in \Delta^d$, which forces $\phi_j \geq \max_{i\in I} \theta_{(j)i}^-$. Finally, let $\Theta \in \mathbb{R}^{d\times m}$ consist of the columns $\theta_{(j)}$ and  $\Pi \in \mathbb{R}^{m\times m}$ of the columns $\pi_{(j)}$. 
Moreover, let $\alpha \in \mathbb{S}^{m}$ with elements $\alpha_{ij}=-\gamma_{i+d,j+d}$,  $i,j \in \{1, \ldots, m\}$. We then have
\[
s^{-2}c_{JJ}(x_I, sx_J)=\Diag(x_J)\alpha\Diag(x_J)+\Diag(x_J)\Diag(s^{-1}(\phi+ \Theta^{\top} x_I) +\Pi^{\top} x_J).
\]
Letting $s \to \infty$, we see that $\alpha+ \Diag (\Pi^{\top} x_J)\Diag(x_J)^{-1} \in \mathbb{S}^m_+$ for all $x_J \in \mathbb{R}^m_{++}$, which leads the form of $c_{JJ}$.

It remains to show that $\gamma_{ij}=0$ for $i \in I$ and $j \in J$. Positive semidefiniteness of $c$ implies that 
\begin{align}\label{eq:CS}
c_{ii}(x)c_{jj}(x) \geq c_{ij}^2(x).
\end{align} 
Take now $x_i=s$ and $x_k=\frac{1-s}{d-1}$ for $I \ni k \neq i$. Then~\eqref{eq:CS} reads as
\[
c_{ii}(x)c_{jj}(x)=\frac{s(1-s)}{d-1} \left(\sum_{k\neq i, k \in I} \gamma_{ik}\right) (\gamma_{jj}x_j^2 +x_j(\phi_j+\theta_{(j)}^{\top} x_I+ \pi^{\top}_{(j)}x_J))
\geq \gamma_{ij}^2 s^2 x_j^2.
\]
For $s$ close to $1$, the left hand side can be made arbitrarily small so that the inequality is not satisfied if $\gamma_{ij} \neq 0$. This proves the first direction.

Concerning (ii) $\Rightarrow$ (i), the only condition which is not obvious is the positive semidefiniteness of $c_{II}$. This however follows exactly as in the proof of \cite[Proposition 6.6]{FL:14}.
\end{proof}

We are now ready to prove Proposition \ref{prop:statespace}.

\begin{proof}
We start by proving the first assertion on the necessary parameter conditions.
Being a polynomial process in the sense of Definition \ref{def:poly}, implies the well-posedness of the corresponding martingale problem in the sense of Remark \ref{rem:martingaleprob}. Hence, we can invoke \cite[Theorem 5.1]{FL:14}.
As in this paper we define the following set of polynomials $\mathscr{P}:=\{x_i \, | \, i=1,\ldots d+m\}$ and the following polynomial $q(x):=1-\sum_{j=1}^d x_j$ so that
\[
\Delta^d \times \mathbb{R}^m_+=\{x \in \mathbb{R}^{d+m}\, | \, p(x)\geq 0,  \, \forall p \in \mathscr{P}\} \cap M,
\]
where $M=\{x \in \mathbb{R}^{d+m}\, | \, q(x)=0 \}$. 
The conditions on $c(x)$ as stated in \cite[Theorem 5.1]{FL:14}, which are just consequence on the positive maximum principle, thus translate to
\begin{enumerate}
\item $c(x)e_j=0 \text{ on } \Delta^d \times \mathbb{R}^m_+ \cap \{x_j=0\} \text{ for all } j \in I \cup J$,
\item $c_{II}\mathbf{1}=0 \text{ on } \Delta^d \times \mathbb{R}^m_+$.
\end{enumerate}
This together with the polynomial property gives the conditions of Lemma \ref{lem:matrix} (i) and in turn the form of the instantaneous covariance matrix as stated in Proposition \ref{prop:statespace} (i).

Concerning part (ii), we have due the polynomial property $b(x)=\beta+Bx$ for some $\beta \in \mathbb{R}^{d+m}$ and $B\in \mathbb{R}^{(d+m) \times (d+m)}$.
Let us now consider the conditions of \cite[Theorem 5.1]{FL:14}, involving the drift part. To this end, denote by $\mathcal{G}$ the extended infinitesimal generator associated to the polynomial process as introduced in Remark \ref{rem:martingaleprob}
\[
\mathcal{G}f(x)=\sum_{i=1}^{d+m}D_if(x) b_i(x)+ \frac{1}{2} \sum_{i,j=1}^{d+m}D_{ij}f(x) c_{ij}(x).
\]
The condition $\mathcal{G}q=0$ on $\Delta^d \times \mathbb{R}_+^m$ of \cite[Theorem 5.1]{FL:14} therefore yields $\beta^{\top}_I\mathbf{1}+ x^{\top}B^{\top}(\mathbf{1}, \mathbf{0})^{\top}=0$ on $\Delta^d \times \mathbb{R}_+^m$. Thus it can be written as 
\[
\beta^{\top}_I\mathbf{1} + x^{\top}B^{\top}(\mathbf{1}, \mathbf{0})^{\top}=\kappa(1-\sum_{j=1}^d x_j)=0
\]
for some constant $\kappa$. This shows that $B_{IJ}=0$ and that $b_I(x)$ cannot depend on $x_J$. Moreover, we have $\beta^{\top}_I\mathbf{1}=\kappa$ and $B^{\top}_{II}\mathbf{1}=-\kappa\mathbf{1}=-(\beta^{\top}_I\mathbf{1}) \mathbf{1}$.
Furthermore, the second condition  of \cite[Theorem 5.1]{FL:14}, namely $\mathcal{G}p\geq 0$ on $\Delta^d \times \mathbb{R}^m_+ \cap \{p=0\}$ for all $p(x)=x_i, i \in I$ 
yields
\[
\beta_i+\sum_{k\neq i, k \in I} B_{ij}x_k \geq 0
\]
and we obtain, by inserting $e_j, I \ni j\neq i$, that $\beta_i+ B_{ij} \geq 0$ for $i,j \in I, i \neq j$.  Finally, for all $p(x)=x_j$ with $ j \in J$, the condition $\mathcal{G}p\geq 0$ on $\Delta^d \times \mathbb{R}^m_+ \cap \{p=0\}$ also has to be satisfied. We may thus set $x_J =0$ to see that $\beta_J+B_{JI}x_I$ has to lie in $ \mathbb{R}^m_+$ for all $x_I \in \Delta^d$. Hence $\beta_j \geq\max_{i \in I} B^-_{ji}$. Moreover, setting $x_j=0$ for some fixed $j \in J$ and letting $x_k \to \infty $, for $J \ni k \neq j$ forces $B_{jk} \geq 0$. This completes the proof of (ii).

\end{proof}

\begin{proof}[Proof of Proposition \ref{prop:wellposed}]
Note that the martingale problem for the components of $\Delta^d$ can be regarded separately since neither the drift nor the covariance depend on the factors in $\mathbb{R}^m_+$. The well-posedness in this case then follows e.g.~from \cite[Lemma 6.1]{CLS:16}. Combining this with the statement of \cite[Corollary 1.3]{BP:03} yields the assertion.
\end{proof}

\section{Proofs of Section \ref{sec:relarbitrage}}\label{app:relarbitrage}

Let us introduce the following notation needed in the subsequent proofs: 

\begin{notation}\label{notation}
For an admissible simplex parameter set $(\beta^{\mu}, B^{\mu}, \gamma^{\mu})$, let $c^{\mu}: \Delta^d \to \mathbb{S}^d_+$  and $b^{\mu}: \Delta^d \to \mathbb{R}^d$ be given by 
\[
b^{\mu}_i =\beta^{\mu}_i + \sum_{j=1}^d B^{\mu}_{ij} \mu^j,  \quad c_{ii}^{\mu}= \sum_{j \neq i} \gamma_{ij}^{\mu} \mu^i \mu^j, \quad \quad c^{\mu}_{ij} =\gamma_{ij}^{\mu}\mu^i \mu^j, \quad i \neq j.
\]
Note that we view here (and already previously) $c^{\mu}$ and $b^{\mu}$ as functions on $\Delta^d$ and $\mu^i$ stands for the $i^{\text{th}}$ component of a vector in $\Delta^d$. The latter will also be used in the subsequent proofs and it should be clear from the context if $\mu$ represents the process of market weights or rather an element in $\Delta^d$.  We write $b^{\mu}_t$ and $c^{\mu}_t$ when we insert $\mu_t$ in the above functions.
We denote by $\widetilde{c}^{\mu}$ the matrix $c^{\mu}$ with the $d^{\text{th}}$ row and column deleted and by $\widetilde{b}^{\mu}$ the vector $b^{\mu}$ with the $d^{\text{th}}$ entry deleted. 
\end{notation}

\begin{proof}[Proof of Theorem \ref{th:NANUPBR}]
We start to prove (ii) $\Rightarrow$ (i). Let us first show that there does not exist an equivalent measure $Q \sim P$ under which $\mu$ is a martingale. Assume by contradiction that such a martingale measure $Q$ exists. By Lemma \ref{lem:martingaleboundary}
 this in turn implies that every boundary segment $\{\mu^k=0\}$, $k \in \{1, \ldots,d\}$, is attained with positive $Q$ probability. But due to Proposition~\ref{prop:nonattainement}, Condition~\ref{eq:nonbd} is equivalent to the non-attainment of the boundary segment $\{\mu^i=0\}$ under $P$. Hence $P$ and $Q$ cannot be equivalent, whence by the Fundamental Theorem of Asset Pricing (see \cite{DS:94}), (NFLVR) is not satisfied. We now show that (NUPBR) is satisfied.
Indeed, by Lemma \ref{lem:formlambda} there exists some function $\widetilde{\lambda}$ such that 
\[
 \widetilde{b}^{\mu}= \widetilde{c}^{\mu}\widetilde{\lambda}(\mu)
\]
on $ E:=\{ \mu \in \Delta^d\,|\, \mu^j >0 \text{ for all } j \notin J\} $, where  $J$ denotes the set of indices $j$ for which $b^{\mu}_j=0 $ on $\{ \mu^j=0\}$ with $\widetilde{b}^{\mu}$, and $\widetilde{c}^{\mu}$ defined in \ref{notation}. Since $\mu$ takes values in $E$ as proved in Proposition~\ref{prop:nonattainement},  $\lambda(\mu_t)$ makes sense for all $t \in [0,T]$ and  implies that the so called weak structure condition (see \cite[Chapter 3]{HS:10}) is satisfied.
Moreover $\int_0^T \widetilde{\lambda}^{\top}(\mu_t)\widetilde{c}_t^{\mu} \widetilde{\lambda}(\mu_t) dt$ is $P$-a.s.~finite.  Indeed, as shown in the proof of Lemma \ref{lem:formlambda}, $\widetilde{\lambda}^{\top}(\mu_t)\widetilde{c}_t^{\mu} \widetilde{\lambda}(\mu_t) $ is given by 
\begin{align}\label{eq:equality1}
  \widetilde{\lambda}^{\top}(\mu_t)\widetilde{c}_t^{\mu} \widetilde{\lambda}(\mu_t) = (\widetilde{b}_t^{\mu})^{\top}(\widetilde{c}_t^{\mu})^+\widetilde{b}_t^{\mu}.
	\end{align}
	By Lemma \ref{eq:lemposdef},
we have $\widetilde{c}^{\mu}- \gamma^{\ast} \widetilde{a}^{\mu}$ is positive semidefinite 
where $\gamma^{\ast}=\min_{i\neq j} \gamma^{\mu}_{ij}$ and $\widetilde{a}^{\mu}$ is defined in Lemma \ref{lem:a}. Moreover, as also stated in Lemma \ref{eq:lemposdef}, the rank of both matrices $\widetilde{c}^{\mu}$ and  $\widetilde{a}^{\mu}$ is always the same.
By~\cite[Corollary 2]{W:79},  $1/\gamma^{\ast} (\widetilde{a}_t^{\mu})^{+}-(\widetilde{c}_t^{\mu})^{+}$ is therefore positive semidefinite and we can estimate
\begin{align}\label{eq:compare}
(\widetilde{b}_t^{\mu})^{\top}(\widetilde{c}_t^{\mu})^{+}\widetilde{b}_t^{\mu} \leq \frac{1}{\gamma^{\ast}} (\widetilde{b}_t^{\mu})^{\top}(\widetilde{a}_t^{\mu})^{+}\widetilde{b}_t^{\mu}. 
\end{align}
As on $\mathring{\Delta}^d$ the inverse matrix of $\widetilde{a}^{\mu}$ (and also $\widetilde{c}^{\mu}$) exists, we have  by Lemma \ref{lem:a} and the fact that $\sum_{=1}^{d-1} \widetilde{b}_j^{\mu}=-b^{\mu}_d$ (since the drift components $b^{\mu}$ have to sum up to $0$),  
\[
\frac{1}{\gamma^{\ast}} (\widetilde{b}_t^{\mu})^{\top}(\widetilde{a}_t^{\mu})^{-1}\widetilde{b}_t^{\mu}=\frac{1}{\gamma^{\ast}}\sum_{i=1}^d \frac{(b_{t,i}^{\mu})^2}{\mu_t^i} \quad \text{on } \mathring{\Delta}^d.
\]
By Lemma \ref{lem:representations} (i), we know that for all $j \in J$, $b^{\mu}_j= \kappa_j \mu^j$ for some constant $\kappa_j$. Hence 
\[
\frac{1}{\gamma^{\ast}} (\widetilde{b}_t^{\mu})^{\top}(\widetilde{a}_t^{\mu})^{-1}\widetilde{b}_t^{\mu}=\frac{1}{\gamma^{\ast}}\left(\sum_{j \in J} \kappa_j^2 \mu_t^j + \sum_{j \notin J} \frac{(b_{t,j}^{\mu})^2}{\mu_t^j}\right) \quad \text{on } \mathring{\Delta}^d.
\]
Extending this to $E$ yields the same equality with $(\widetilde{a}_t^{\mu})^{-1}$ replaced by $(\widetilde{a}_t^{\mu})^{+}$ and by \eqref{eq:compare} and \eqref{eq:equality1} we obtain 
\[
\widetilde{\lambda}^{\top}(\mu_t)\widetilde{c}_t^{\mu} \widetilde{\lambda}(\mu_t) \leq
\frac{1}{\gamma^{\ast}}\left(\sum_{j \in J} \kappa_j^2 \mu_t^j + \sum_{j \notin J} \frac{(b_{t,j}^{\mu})^2}{\mu_t^j}\right).
\]
As $\mu$ takes values in $E$, we can thus conclude that $\int_0^T \widetilde{\lambda}^{\top}(\mu_t)\widetilde{c}_t^{\mu} \widetilde{\lambda}(\mu_t) dt$ is $P$-a.s. finite as claimed. 
Hence, the so-called \emph{structure condition} (see e.g.~\cite[Chapter 3]{HS:10}) holds true and \cite[Theorem 3.4]{HS:10} thus implies that (NUPBR) holds for the process $\widetilde{\mu}$ and thus in turn for $\mu$ as well.
As (NFLVR) $\Leftrightarrow$ (NUPBR) + (NA) (see \cite{DS:94}) and since (NFLVR) does not hold, this thus means that relative arbitrages necessarily exist. 
Moreover, they are strong since the model is complete. Indeed, by Proposition \ref{prop:complete} and the existence of relative arbitrages, the $P$-a.s.~payoff $Y=1$ can be replicated with an initial capital strictly less than 1, given by  $E[Z_T] < 1$ where $Z$ is the unique strictly positive martingale deflator of Proposition \ref{prop:complete} (i).

Concerning the other direction. Assume (NUPBR) and that there exist strong relative arbitrage opportunities. Let us first prove that there must exist some $i \in \{1, \ldots,d\}$ such that $b^{\mu}_i > 0$ for some elements in $\{\mu^i=0\}$. Indeed, assume that for all $i$, $b^{\mu}_i = 0$ on $\{\mu^i=0\}$. By Lemma \ref{lem:representations} (i), this means that $b^{\mu}_i=\kappa_i \mu^i$ and since the index set $J=\{1, \ldots,d\}$, the set $E$ defined above is $\Delta^d$. 
Consider now the process 
\[
M_t:=-\int_0^t \widetilde{\lambda}^{\top}(\mu_s)\sqrt{\widetilde{c}^{\mu}_s} dW_s,
\] 
where $W$ is a $d-1$ dimensional Brownian motion and $\widetilde{\lambda}$ is defined in Lemma \ref{lem:formlambda}. Then the quadratic variation of $M$ is 
\[
\int_0^{\cdot}\widetilde{\lambda}^{\top}(\mu_s)\widetilde{c}_s^{\mu} \widetilde{\lambda}(\mu_s)ds.
\]
Analogously as above we obtain on $E=\Delta^d$ the estimate 
\[
\int_0^{\cdot}\widetilde{\lambda}^{\top}(\mu_s)\widetilde{c}_s^{\mu} \widetilde{\lambda}(\mu_s)ds\leq 
\frac{1}{\gamma^{\ast}} \int_0^{\cdot}\sum_{j =1}^d \kappa_j^2 \mu_s^j ds.
\]
Hence, Novikov's condition 
\[
E\left[e^{\frac{1}{2} \int_0^T 
\widetilde{\lambda}^{\top}(\mu_s)\widetilde{c}_s^{\mu} \widetilde{\lambda}(\mu_s)ds}\right]  < \infty
\]
is therefore satisfied and we conclude that $\mathcal{E}(M)$ is a martingale.
By Girsanov's theorem we thus obtain an equivalent measure $Q$ defined via $dQ/dP =\mathcal{E}(M_T)$ such that $b_i^{\mu}=0$  for all $i \in \{1, \ldots,d\}$ under $Q$. Hence $\mu$ is a martingale under $Q$ which contradicts the fact that there exist strong relative arbitrage opportunities. Thus there exist some $i \in \{1, \ldots,d\}$ such that $b^{\mu}_i > 0$ for some elements in $\{\mu^i=0\}$. Let now $i$ be such that  $b^{\mu}_i > 0$ for some elements in $\{\mu^i=0\}$. As (NUPBR) holds true it cannot happen with positive probability that $\mu^i_s=0$ for some $s \in [0,T]$ and that $\mu^i_t > 0$ for some $t >s$, because in this case an unbounded profit could be generated. This means that the drift must be strong enough to guarantee that the boundary segment  $\{\mu^i=0\}$ is not reached with positive probability.
By Proposition~\ref{prop:nonattainement} this is the case if and only if Condition~\ref{eq:nonbd} is satisfied. Hence (ii) holds.
\end{proof}

In order to prove Proposition \ref{prop:complete}, recall the martingale representation property or more generally the \emph{representation property relative to a semimartingale $\mu$ with continuous trajectories} (see ~\cite[Definition III.4.22]{JS:03}).

\begin{definition}
A local martingale $M$ has the \emph{representation property relative to $\mu$} if it has the form
\[
M_t=M_0 + \int_0^t h^{\top}_s d\mu_s^c, 
\]
where $\mu^c$ denotes the continuous martingale part and  $h\in L^2_{\mathrm{loc}}(\mu^c)$ 
(see~\cite[Definition III.4.3]{JS:03}).
\end{definition}

Polynomial diffusions on $\Delta^d$ have the representation property since the martingale problem is well-posed as stated subsequently.

\begin{proposition}\label{prop:representation}
Let $\mu$ be a polynomial diffusion process for the market weights on $\Delta^d$  being  described  by an admissible simplex parameter set $(\beta^{\mu}, B^{\mu}, \gamma^{\mu})$. Then all local martingales have the representation property with respect to $\mu$. 
\end{proposition}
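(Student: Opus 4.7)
The plan is to invoke the standard equivalence between well-posedness of a martingale problem and the predictable representation property. Concretely, the admissible simplex parameter set $(\beta^{\mu}, B^{\mu}, \gamma^{\mu})$ specifies differential characteristics $(b^{\mu}, c^{\mu}, 0)$ of the form described in Notation \ref{notation}, and the corresponding martingale problem on $\Delta^d$ is well-posed (this is the $m=0$ case of Proposition \ref{prop:wellposed}, equivalently \cite[Lemma 6.1]{CLS:16}).

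Given this well-posedness, I would apply the classical result of Jacod, in the form of \cite[Theorem III.4.29]{JS:03}: if the law of a semimartingale is the unique solution of a martingale problem formulated in terms of its semimartingale characteristics, then the characteristics enjoy the predictable representation property with respect to the process. Since $\mu$ is a continuous diffusion (jump measure identically zero), the jump part of the representation vanishes, and every local martingale $M$ on the filtered space $(\Omega,(\mathcal{F}_t),P)$ (with $(\mathcal{F}_t)$ generated by $\mu$, made right-continuous) admits the decomposition
\[
M_t = M_0 + \int_0^t h_s^{\top}\, d\mu_s^c
\]
for some $h \in L^2_{\mathrm{loc}}(\mu^c)$. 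This is precisely the representation property relative to $\mu$.

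The only point requiring any care is to ensure that the hypotheses of \cite[Theorem III.4.29]{JS:03} are genuinely met, namely that uniqueness in law holds for every admissible initial distribution (not just starting points) and that the filtration under consideration is the one generated by $\mu$. Both are standard: uniqueness for arbitrary initial laws follows by conditioning from uniqueness for Dirac initial conditions combined with the Markov property of solutions to a well-posed martingale problem (see \cite[Theorem 4.4.2]{EK:86}), and the filtration appearing in Proposition \ref{prop:complete} and Remark \ref{rem:martingaleprob} is exactly the (right-continuous) natural filtration of $\mu$. I do not expect any genuine obstacle here; the argument is a one-line appeal once the set-up of Jacod--Shiryaev is matched to the setting of the paper.
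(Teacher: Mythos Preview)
Your proposal is correct and follows essentially the same route as the paper: both invoke well-posedness of the martingale problem on $\Delta^d$ (via \cite[Lemma 6.1]{CLS:16}) and then apply \cite[Theorem III.4.29]{JS:03} to obtain the representation property. The paper's proof is a two-line version of exactly this argument, without the additional discussion of hypotheses you include.
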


\begin{proof}
By \cite[Lemma 6.1]{CLS:16} the martingale problem associated to a polynomial generator $\mathcal{G}$ defined via an admissible simplex parameter set $(\beta^{\mu}, B^{\mu}, \gamma^{\mu})$ in the sense of Remark \ref{rem:martingaleprob} is well posed. The assertion thus follows from \cite[Theorem III.4.29 (iv)$\Rightarrow$ (i)]{JS:03}.
\end{proof}

We are now ready to prove Proposition \ref{prop:complete}.

\begin{proof}[Proof of Proposition \ref{prop:complete}]
Concerning the first assertion, assumption (i) of Theorem \ref{th:NANUPBR} implies the existence of a strictly positive local martingale deflator $Z$ (see \cite{ST:14}). By strict positivity and Proposition \ref{prop:representation} there exists some process $\lambda$ such that $Z$ can be represented as $Z=\mathcal{E}(-\int_0^{\cdot} \lambda_s^{\top} \mu^c_s)$. As $Z\mu$ needs to be a local martingale, It\^o's formula implies that $\lambda$ satisfies
\begin{align}\label{eq:lambdaunique}
b^{\mu}_t= c^{\mu}_t \lambda_t, \quad P\text{-a.s.}
\end{align}
for almost all $t \in [0,T]$ (compare with Lemma \ref{lem:formlambda}). Although this equation does not necessarily have a unique solution $\lambda$, the quadratic variation of $Z$, given by $\int_0^{\cdot} \lambda_s^{\top} c^{\mu}_s \lambda_s ds$,  is uniquely determined since $\sqrt{c^{\mu}} \lambda$ is unique due to \eqref{eq:lambdaunique}.

Concerning part (ii), define
\[
M_t:= E[Y Z_T|\mathcal{F}_t],
\]
where $Z$ is the unique strictly positive local martingale deflator of part (i). By  Proposition \ref{prop:representation} there exists some strategy $h\in L^2_{\mathrm{loc}}(\mu^c)$  such that 
\[
M_t=M_0 + \int_0^t h^{\top}_s d\mu_s^c.
\]
Consider now the process $\frac{M_t}{Z_t}$. Then by It\^o's product rule and \eqref{eq:lambdaunique}, it is easily seen that
\[
\frac{M_t}{Z_t}= M_0+\int_0^t \frac{1}{Z_s} (h_s^{\top}+M_s\lambda_s^{\top}) d\mu_s.
\]
Setting $\psi_s=\frac{1}{Z_s} (h_s+M_s\lambda_s)$ and noticing that $M_0=E[Y Z_T]$ and $M_T/Z_T=Y$ yields the assertion.
\end{proof}

\begin{proof}[Proof of Lemma \ref{lem:martingaleboundary}]
Define for all $k \in \{1, \ldots, d\}$, $\tau_k=\inf\{t \geq 0 \,| \, \mu_t^k =0\}$.
We prove the assertion, which translates to
\begin{align}\label{eq:inductionhypothesis}
P_{\mu_0}[\tau_k \leq T] > 0,
\end{align}
for all $T >0$, $k \in \{1,\ldots,d\}$ and $\mu_0 \in \mathring{\Delta}^d$ by induction on the dimension $d$. 
For $d=2$, we have for $\mu_0 \in \mathring{\Delta}^2$
\begin{align*}
P_{\mu_0}[\tau_k \leq T] =(1-\mu_0^k) \left(1 - \int_0^1 p_T(\mu^1_0, dx)\right) >0, \quad k=1,2,
\end{align*}
where $p_T(\mu^1_0,dx)$ denotes the transition function of the one-dimensional Jacobi process without drift as for instance given in \cite[Equation 13.26]{KT:81}. This proves the assertion for $d=2$.
Let now $d \geq 3$. For the induction step from $d-1$ to $d$, we first show that
 the set $ A:=\{ \tau_k \leq \frac{T}{2} \textrm{ for some } k \}$ has positive probability.
Assume by contradiction that the boundary is not reached before $\frac{T}{2}$ and denote by $\mu^{(d)}=\min_i \mu^i$ and by $c^{\mu}_{(d)(d)}$ the instantaneous variance of  $\mu^{(d)}$. Then as in Remark \ref{rem:relarbitrage} we have
\begin{align*}
\frac{c^{\mu}_{(d)(d),t}}{\mu_t^{(d)}}
\geq\frac{d-1}{d} \min_{i \neq j} \gamma^{\mu}_{ij}, \quad t \in \left[0,\frac{T}{2}\right].
\end{align*}
By \cite{BF:08} (see also Remark \ref{rem:relarbitrage}), this condition implies the existence of relative arbitrages over the time horizon $[0,\frac{T}{2}]$, which however contradicts the fact that $\mu$ is a martingale and proves $P[A]>0$. Moreover it follows that there exists certainly some fixed index $k^{\ast} \in \{1, \ldots, d\}$ such that the set $ A_{k^{\ast}}:=\{ \tau_{k^{\ast}} = \min_k \tau_k \wedge \frac{T}{2} \}$ 
 has positive probability. Note that $A_{k^{\ast}}$ is simply the set of paths where the boundary segment $\{\mu^{{k^*}}=0\}$ is reached before or at the same time as the others and $T/2$. 
On the set $A_{k^{\ast}}$, $\tau_{k^{\ast}} < \tau_k$ for all $k \neq k^{\ast}$ a.s., since due to the non-degeneracy of $c^{\mu}$ in the interior of $\Delta^d$ (see Lemma \ref{lem:nondegenerate}) the probability to reach a $d-3$ dimensional manifold (corresponding to $\Delta^{d-2}$ and to a point for $d =3$) from the interior of $\Delta^d$ is  $0$. Denote by $\widetilde{\mu}$ the process where the $k^{\ast}$ component is removed. By the above argument it holds that on the set $A_{k^{\ast}}$, $\widetilde{\mu}_{\tau_{k^{\ast}}} \in \mathring{\Delta}^{d-1}$. We now apply the induction hypothesis conditional on $A_{k^{\ast}}$  which implies due to the strong Markov property of $\mu$ that for all $k \in \{1, \ldots, d-1\}$
\[
P_{\widetilde{\mu}_{\tau_{k^{\ast}}}}\left[ \widetilde{\tau}_k \leq \frac{T}{2}\,\Big|\, A_{k^{\ast}}\right]  > 0,
\]
where $\widetilde{\tau}_k=\inf\{t \geq 0 \,| \, \widetilde{\mu}_t^k =0\}$. 
Since $(A_j)_{j\in \{1, \ldots,d\}}$ together with $A^c$ is a partition of $\Omega$, we  have 
\begin{align*}
P_{\mu_0}[\tau_k \leq T] &= \sum_{j=1}^d P_{\mu_0}[\tau_k \leq T | A_{j}] P[A_j] + P_{\mu_0}[\tau_k \leq T | A^c] P[A^c] \\
&\geq P_{\widetilde{\mu}_{\tau_{k^{\ast}}}}\left[ \widetilde{\tau}_k \leq \frac{T}{2} \,\Big|\, A_{k^{\ast}}\right]  P[A_{{k^{\ast}}}] 1_{\{k \neq k^{\ast}\}}+ P[A_{{k^{\ast}}}] 1_{\{k = k^{\ast}\}} >0.
\end{align*}
\end{proof}

\begin{proof} [Proof of Proposition \ref{prop:nonattainement}]
To prove (ii) $\Rightarrow$ (i) we apply - similarly as in the proof of Lemma \ref{lem:strictpos} McKean's argument (see e.g.~\cite[Proposition 4.3]{MPS:11}) to the $\log(\mu^i)$. 
Then by It\^o's formula, we have for $t < \tau :=\inf\{ s \geq 0 \, |\, \mu^i_s =0\}$
\begin{align*}
\log(\mu_t^i)&=\log(\mu^i_0) + \int_0^t \frac{2 \beta_i^{\mu} + \sum_{j\neq i} (2B_{ij}^{\mu} -\gamma^{\mu}_{ij}) \mu_s^{j} + 2B^{\mu}_{ii}\mu_s^i}{2 \mu_s^i} \\
&\quad+\int_0^t \frac{\sum_{j=1}^d (\sqrt{c^{\mu}_s})_{ij}}{\mu^i_s} dW_s,
\end{align*}
where $W$ denotes some Brownian motion. Denote by $J \subseteq \{1, \ldots, i-1, i+1, \ldots,d\}$ the indices which satisfy $\arg \min_{i \neq j}(2B_{ij}^{\mu} -\gamma^{\mu}_{ij}) $. Then we can write 
\begin{align*}
\frac{2 \beta_i^{\mu} + \sum_{j\neq i} (2B_{ij}^{\mu} -\gamma^{\mu}_{ij}) \mu^{j} + 2B^{\mu}_{ii}\mu^i}{2 \mu_t^i}&= \frac{2 \beta_i^{\mu} + \min_{i \neq j} (2B_{ij}^{\mu} -\gamma^{\mu}_{ij})}{2\mu^i}\\
& +
\frac{(2B_{ii}^{\mu} -\min_{i \neq j} (2B_{ij}^{\mu} -\gamma^{\mu}_{ij}))}{2}\\
& + \sum_{j \in J^c \setminus \{i\}} (2B_{ij}^{\mu} -\gamma^{\mu}_{ij}- \min_{i \neq j} (2B_{ij}^{\mu} -\gamma^{\mu}_{ij})) \frac{\mu^j}{2\mu^i}.
\end{align*}
Since the first and the last term are nonnegative and the second is constant, we deduce that for every $\mathbf{T} >0$ 
\[
\inf_{t \in [0, \tau \wedge \mathbf{T})} \int_0^t \frac{2 \beta_i^{\mu} + \sum_{j\neq i} (2B_{ij}^{\mu} -\gamma^{\mu}_{ij}) \mu_s^{j} + 2B^{\mu}_{ii}\mu_s^i}{2 \mu_s^i} > -\infty, \quad P\text{-a.s.}
\]
Mc Kean's argument as of \cite[Proposition 4.3]{MPS:11} therefore yields that $\tau =\infty$ and thus implies the assertion.

For the converse direction we apply \cite[Theorem 5.7 (iii)]{FL:14} and assume for a contradiction that (ii) does not hold, i.e. 
\[
 2 \beta_i^{\mu} + \min_{i \neq j}(2B_{ij}^{\mu} -\gamma^{\mu}_{ij}) <0.
\]
In the terminology of \cite[Theorem 5.7 (iii)]{FL:14}, we have $p(\mu)=\mu^i$,
such that 
\[
c^{\mu} \nabla p =\left(\begin{array}{c} -\gamma^{\mu}_{1i} \mu^1 \mu^i\\
\vdots\\
\sum_{j\neq i} \gamma^{\mu}_{ij} \mu^{j} \mu^{i}\\
\vdots\\
-\gamma^{\mu}_{di} \mu^d \mu^i
\end{array}\right)= \mu^i h(\mu),
\]
with 
$h=(-\gamma^{\mu}_{1i} \mu^1, \cdots,\sum_{j\neq i} \gamma^{\mu}_{ij} \mu^{j}, \cdots,-\gamma^{\mu}_{di} \mu^d )^{\top}$.
Let $j^*$ be  the $\arg \min_{j \neq i} (2B_{ij}^{\mu} -\gamma^{\mu}_{ij})$ and let $\bar{\mu}=e_{j^*}$.  
Then $\mathcal{G} p(\bar{\mu})= \beta^{\mu}_i+ B_{ij^*} \geq 0$ by Definition \ref{def:parameters} and
\begin{align*}
2 \mathcal{G} p(\bar{\mu})- h^{\top} \nabla p(\bar{\mu}) = 2 \beta_i^{\mu} +\min_{i \neq j} 2(B_{ij}^{\mu} -\gamma^{\mu}_{ij})<0.
\end{align*}
By \cite[Theorem 5.7 (iii)]{FL:14}, it thus follows that for any time horizon $T$ we can find some $\mu_0 \in \mathring{\Delta}^d$ close enough to $\bar{\mu}$ such that $\{\mu^i =0\}$ is hit with positive probability.
This contradicts (i) and proves the assertion.
\end{proof}

\begin{lemma}\label{lem:nondegenerate}
Let $\gamma^{\mu}$ be such that $\gamma^{\mu}_{ij}>0$ for all $i \neq j$. Then the matrix $\widetilde{c}^{\mu} \in \mathbb{S}^{d-1}_{++}$ and $c^{\mu}$ has rank $d-1$ for all $\mu \in \mathring{\Delta}^d$.
\end{lemma}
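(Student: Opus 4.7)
The plan is to compute the quadratic form $v^\top c^{\mu} v$ explicitly using the stated entries of $c^{\mu}$ and recognize it as a sum of squares weighted by $\gamma^{\mu}_{ij}\mu^i\mu^j$. First I would observe that $c^{\mu}\mathbf{1}=0$: summing the $i$-th row gives $c^{\mu}_{ii}+\sum_{j\neq i}c^{\mu}_{ij}=\sum_{j\neq i}\gamma^{\mu}_{ij}\mu^i\mu^j-\sum_{j\neq i}\gamma^{\mu}_{ij}\mu^i\mu^j=0$, so $\mathbf{1}\in\ker c^{\mu}$ and $\mathrm{rank}(c^{\mu})\le d-1$.

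Next, for any $v\in\mathbb{R}^d$ I would plug in the expressions for $c^{\mu}_{ii}$ and $c^{\mu}_{ij}$, reindex using the symmetry $\gamma^{\mu}_{ij}=\gamma^{\mu}_{ji}$, and telescope to obtain the key identity
\begin{equation*}
v^\top c^{\mu} v \;=\; \sum_{i<j}\gamma^{\mu}_{ij}\,\mu^i\mu^j\,(v_i-v_j)^2.
\end{equation*}
Indeed, $\sum_i v_i^2 c^{\mu}_{ii}=\tfrac12\sum_{i\neq j}\gamma^{\mu}_{ij}\mu^i\mu^j(v_i^2+v_j^2)$ by symmetrizing, while $\sum_{i\neq j}v_iv_j c^{\mu}_{ij}=-\sum_{i\neq j}\gamma^{\mu}_{ij}\mu^i\mu^j v_iv_j$, and the two combine to the claimed square.

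From this identity, since $\gamma^{\mu}_{ij}>0$ for every $i\neq j$ and every component $\mu^i>0$ for $\mu\in\mathring{\Delta}^d$, the quadratic form is nonnegative and vanishes exactly when $v_i=v_j$ for all pairs $i\neq j$, i.e.\ precisely on $\mathbb{R}\mathbf{1}$. Hence $\ker c^{\mu}=\mathbb{R}\mathbf{1}$ and $\mathrm{rank}(c^{\mu})=d-1$.

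Finally, for $\widetilde{c}^{\mu}\in\mathbb{S}^{d-1}_{++}$, given $v\in\mathbb{R}^{d-1}$ I would extend it to $\tilde v\in\mathbb{R}^d$ by setting $\tilde v_d=0$, so that $v^\top\widetilde{c}^{\mu}v=\tilde v^\top c^{\mu}\tilde v=\sum_{i<j}\gamma^{\mu}_{ij}\mu^i\mu^j(\tilde v_i-\tilde v_j)^2\ge0$. If equality holds, then $\tilde v_i=\tilde v_j$ for all $i\neq j$, and combined with $\tilde v_d=0$ this forces $\tilde v=0$, hence $v=0$. Thus $\widetilde{c}^{\mu}$ is strictly positive definite. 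There is no real obstacle here; the only slightly non-obvious step is spotting the sum-of-squares rearrangement, which is a direct consequence of the structure $c^{\mu}_{ij}=-\gamma^{\mu}_{ij}\mu^i\mu^j$ together with the row-sum-zero property.
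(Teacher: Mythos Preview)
Your proof is correct and takes a genuinely different route from the paper. The paper observes that $\widetilde{c}^{\mu}$ is strictly diagonally dominant --- the strict inequality $|\widetilde{c}^{\mu}_{ii}|=\sum_{j\neq i}\gamma^{\mu}_{ij}\mu^i\mu^j > \sum_{j\notin\{i,d\}}\gamma^{\mu}_{ij}\mu^i\mu^j=\sum_{j\neq i}|\widetilde{c}^{\mu}_{ij}|$ holds because the omitted $j=d$ term is strictly positive on $\mathring{\Delta}^d$ --- and then invokes the textbook result \cite[Theorem 6.1.10]{HJ:85} that a symmetric strictly diagonally dominant matrix with positive diagonal is positive definite. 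You instead compute the quadratic form directly and recognise it as the weighted graph-Laplacian expression $\sum_{i<j}\gamma^{\mu}_{ij}\mu^i\mu^j(v_i-v_j)^2$, from which the kernel $\mathbb{R}\mathbf{1}$ is read off immediately. Your argument is fully self-contained (no external citation needed) and arguably more transparent about \emph{why} the kernel is one-dimensional, revealing $c^{\mu}$ as the Laplacian of the complete graph on $d$ vertices with edge weights $\gamma^{\mu}_{ij}\mu^i\mu^j$; the paper's route is a one-liner once diagonal dominance is noticed, but relies on an external result and gives less structural insight.
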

\begin{proof}
Note that the matrix $\widetilde{c}^{\mu}$  is strictly diagonally dominant, i.e. $|\widetilde{c}^{\mu}_{ii}| > \sum_{j \neq i} |\widetilde{c}^{\mu}_{ij}|$, if $\mu \in \mathring{\Delta}^d$. Indeed, we have
\[
|\widetilde{c}^{\mu}_{ii}| =\sum_{j\neq i} \gamma^{\mu}_{ij} \mu^i \mu^j >\sum_{j\neq \{i, d\}  } \gamma_{ij} \mu^i \mu^j=
\sum_{j \neq i} |\widetilde{c}^{\mu}_{ij}|.
\]
By \cite[Theorem 6.1.10]{HJ:85}, $\widetilde{c}^{\mu}$ is thus strictly positive definite which implies that $c^{\mu}$ has rank $d-1$.
\end{proof}

\begin{remark}
By similar arguments, it actually suffices to assume that there exists some index $j$ such that $\gamma^{\mu}_{ij} >0$ for all $i$ in order to obtain the same conclusion.
\end{remark}

\begin{lemma}\label{lem:formlambda}
Let $\gamma^{\mu}$ be such that $\gamma^{\mu}_{ij}>0$ for all $i \neq j$ and $b^{\mu}$ a drift being described by admissible parameters $(\beta^{\mu}, B^{\mu})$. Let $J=\{j_1, \ldots, j_k\}$, $0 \leq k \leq d$,  denote the set of indices for which $b^{\mu}_{j_i}=0$ on $\{\mu^{j_i} =0\}$ and set $E:=\{ \mu \in \Delta^d\,|\, \mu^j >0 \text{ for all } j \notin J\}$.
Then there exists some function $\widetilde{\lambda}: E \to \mathbb{R}^{d-1}$ such that 
\begin{align} \label{eq:bmulambda}
 \widetilde{b}^{\mu}= \widetilde{c}^{\mu}\widetilde{\lambda}(\mu)
\end{align}
on $ E $.
\end{lemma}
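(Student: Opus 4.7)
The plan is to show that $\widetilde{b}^{\mu}(\mu)$ lies in the range of $\widetilde{c}^{\mu}(\mu)$ at every $\mu \in E$, and then to define $\widetilde{\lambda}(\mu)$ as any preimage, e.g.\ via the Moore--Penrose pseudoinverse $\widetilde{\lambda}(\mu) := (\widetilde{c}^{\mu})^{+}\widetilde{b}^{\mu}$. I would begin by recording two structural identities valid on all of $\Delta^d$: \emph{(a)} $c^{\mu}\mathbf{1} = 0$, which is immediate from the formulas $c^{\mu}_{ii} = \sum_{j\neq i}\gamma^{\mu}_{ij}\mu^i\mu^j$ and $c^{\mu}_{ij} = -\gamma^{\mu}_{ij}\mu^i\mu^j$; and \emph{(b)} $\mathbf{1}^{\top}b^{\mu} = 0$, which follows from the admissibility relation $(B^{\mu})^{\top}\mathbf{1}+(\beta^{\mu\top}\mathbf{1})\mathbf{1} = 0$ together with $\sum_i \mu^i = 1$.

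Next I fix $\mu \in E$ and set $L := \{i \in \{1,\ldots,d\} : \mu^i > 0\}$. The hypothesis $\mu \in E$ means that every $i \notin L$ belongs to $J$, and by the defining property of $J$ this forces $b^{\mu}_i(\mu) = 0$; moreover $c^{\mu}_{ij}(\mu) = 0$ whenever $i \notin L$ or $j \notin L$. Hence the full system $c^{\mu}\lambda = b^{\mu}$ reduces to the $L$-indexed principal block $C_L\lambda_L = b^{\mu}_L$, with $C_L := (c^{\mu}_{ij})_{i,j\in L}$. Arguing as in Lemma \ref{lem:nondegenerate}, the strict positivity of the $\gamma^{\mu}_{ij}$ together with $\mu^i > 0$ on $L$ makes every principal submatrix of $C_L$ of size $|L|-1$ strictly diagonally dominant, hence invertible; combined with the row-sum identity $C_L\mathbf{1}_L = 0$ (the same computation as \emph{(a)} restricted to $L$), this shows that $C_L$ has rank $|L|-1$ with kernel $\mathbb{R}\mathbf{1}_L$, and by symmetry its range equals $\mathbf{1}_L^{\perp}$. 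Since $b^{\mu}_i = 0$ for $i \notin L$, identity \emph{(b)} gives $\mathbf{1}_L^{\top}b^{\mu}_L = \mathbf{1}^{\top}b^{\mu} = 0$, so $b^{\mu}_L \in \mathrm{range}\,C_L$ and some $\lambda \in \mathbb{R}^d$ solves $c^{\mu}\lambda = b^{\mu}$.

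Finally, using \emph{(a)} I would replace $\lambda$ by $\lambda - \lambda_d\mathbf{1}$ to arrange $\lambda_d = 0$; the first $d-1$ rows of $c^{\mu}(\lambda-\lambda_d\mathbf{1}) = b^{\mu}$ then read exactly $\widetilde{c}^{\mu}\widetilde{\lambda} = \widetilde{b}^{\mu}$ with $\widetilde{\lambda} := (\lambda_1-\lambda_d,\ldots,\lambda_{d-1}-\lambda_d)$. The only genuinely nontrivial step is the middle paragraph: on the interior $\mathring{\Delta}^d$ one could simply invert $\widetilde{c}^{\mu}$ by Lemma \ref{lem:nondegenerate}, but extending to the boundary part of $E$ hinges on the fact that the set $E$ is tailored so that $b^{\mu}$ vanishes at precisely the coordinates along which $c^{\mu}$ degenerates, yielding the crucial compatibility $b^{\mu}_L \in \mathrm{range}\,C_L$.
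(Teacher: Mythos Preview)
Your proof is correct and takes a somewhat different route from the paper's. The paper also sets $\widetilde{\lambda}(\mu) = (\widetilde{c}^{\mu})^{+}\widetilde{b}^{\mu}$, but to verify $\widetilde{c}^{\mu}(\widetilde{c}^{\mu})^{+}\widetilde{b}^{\mu} = \widetilde{b}^{\mu}$ on the boundary part of $E$ it works directly with the reduced $(d-1)\times(d-1)$ matrix via a spectral decomposition $\widetilde{c}^{\mu}=O^{\mu}D^{\mu}(O^{\mu})^{\top}$: ordering the eigenvalues so that the zero ones sit at the indices $j_i$ with $\mu^{j_i}=0$, it then checks componentwise that $(O^{\mu})^{\top}\widetilde{b}^{\mu}$ vanishes at precisely those positions. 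You instead lift to the full $d\times d$ system $c^{\mu}\lambda = b^{\mu}$, use the simplex identities $c^{\mu}\mathbf{1}=0$ and $\mathbf{1}^{\top}b^{\mu}=0$ to identify the kernel and range of the active principal block $C_L$ explicitly as $\mathbb{R}\mathbf{1}_L$ and $\mathbf{1}_L^{\perp}$, and only at the end descend to $\widetilde{c}^{\mu}$ by subtracting $\lambda_d\mathbf{1}$. Your argument avoids spectral decomposition and makes the structural reason for solvability---that $b^{\mu}$ obeys the same sum-to-zero constraint as the rows of $c^{\mu}$---fully transparent; the paper's route has the minor advantage of writing down the canonical pseudoinverse solution in one line, though any solution yields the same value of $\widetilde{\lambda}^{\top}\widetilde{c}^{\mu}\widetilde{\lambda}$ needed later in the proof of Theorem~\ref{th:NANUPBR}.
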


\begin{proof}
Define $\widetilde{\lambda}$ as follows
\begin{align}\label{eq:lambdatilde}
\widetilde{\lambda}(\mu)=(\widetilde{c}^{\mu})^+\widetilde{b}^{\mu},
\end{align}
where $(\widetilde{c}^{\mu})^+$ denotes the Moore-Penrose pseudo-inverse. Let $O^{\mu} D^{\mu} (O^{\mu})^{\top}$ be the spectral decomposition of $\widetilde{c}^{\mu}$ with orthogonal matrices $O^{\mu}$ and a diagonal matrix $D^{\mu}$. We write here the superscript $\mu$ to indicate the dependence on $\mu$. Then  $(\widetilde{c}^{\mu})^+$ is given by
\[
(c^{\mu})^+=   O^{\mu}(D^{\mu})^+ (O^{\mu})^{\top},
\]
where $(D^{\mu})_{ii}^+= \frac{1}{ D^{\mu}_{ii}}$ if $D^{\mu}_{ii} \neq 0$ and $0$ otherwise.
By Lemma \ref{lem:nondegenerate}, $\widetilde{c}^{\mu}$ is invertible if $\mu \in \mathring{\Delta}^{d}$, and $(\widetilde{c}^{\mu})^+$ thus coincides with $(\widetilde{c}^{\mu})^{-1}$. This already proves the claim on $\mathring{\Delta}^{d}$.

It thus remains to show that $\widetilde{\lambda}$ given by \eqref{eq:lambdatilde} also satisfies \eqref{eq:bmulambda} whenever $\mu^{j_i}=0$ for some $j_i \in J$. In this case  $b^{\mu}_{j_i}=0$ by assumption and $ c^{\mu}_{j_i,k}=0$ for all $k \in \{1, \ldots, d\}$ due to the form of $c^{\mu}$. Note in particular that 
for $\mu^{d}=0$  the claim already follows from the considerations above.
Let us choose the order of the eigenvectors in $O^{\mu}$ such that $D^{\mu}_{j_i j_i}=0$ whenever $\mu^{j_i}=0$. Inserting now \eqref{eq:lambdatilde} into \eqref{eq:bmulambda} yields
\[
 \widetilde{b}^{\mu}=\widetilde{c}^{\mu}\widetilde{\lambda}(\mu)= \widetilde{c}^{\mu}(\widetilde{c}^{\mu})^+\widetilde{b}^{\mu}=O^{\mu} I^{\mu} (O^{\mu})^{\top} \widetilde{b}^{\mu},
\]
where $I^{\mu}$ is a diagonal matrix with all entries equal to $1$ except of the indices $j_i$ where $\mu^{j_i}=0$. Thus \eqref{eq:bmulambda} is equivalent to
\[
(O^{\mu})^{\top}\widetilde{b}^{\mu}=I^{\mu} (O^{\mu})^{\top} \widetilde{b}^{\mu}
\]
and this equality holds true since $((O^{\mu})^{\top} \widetilde{b}^{\mu})_{j_i}=0$. Indeed,  
\[
0=\widetilde{c}^{\mu}_{j_i j_i} = \sum_{k=1}^d (O^{\mu}_{j_i k})^2 D^{\mu}_{kk}
\]
implies that $O_{j_i k}=0$ for whenever $D^{\mu}_{kk} \neq 0$. As $b_k = 0$ whenever $D^{\mu}_{kk}= 0$, this yields the claim $((O^{\mu})^{\top}\widetilde{b}^{\mu})_{j_i}=0$ and in turn the  assertion of the lemma.
\end{proof}

\begin{lemma}\label{lem:a}
Let $\gamma^{\mu}$ be such that $\gamma^{\mu}_{ij}=1$ for all $i \neq j$ and denote $\widetilde{c}^{\mu}$ for this particular form of $\gamma$ by $ \widetilde{a}^{\mu}$. Then, 
 on $\mathring{\Delta}^d$, the entries of the inverse of $\widetilde{a}^{\mu}$ are given by
\begin{align}\label{eq:inv}
(\widetilde{a}^{\mu})_{kk}^{-1}=\frac{1}{\mu^k}+\frac{1}{ \mu^d}, \quad  (\widetilde{a}^{\mu})_{kl}^{-1}=\frac{1}{\mu^d}, \quad k \neq l.
\end{align}
\end{lemma}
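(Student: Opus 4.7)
The plan is to observe that with $\gamma^{\mu}_{ij}=1$ for all $i\neq j$, the full matrix $a^{\mu}$ has the rank-one perturbation structure
\[
a^{\mu} = \Diag(\mu) - \mu\mu^{\top},
\]
since $a^{\mu}_{ii}=\mu^i(1-\mu^i)=\mu^i-(\mu^i)^2$ and $a^{\mu}_{ij}=-\mu^i\mu^j$. Writing $\widetilde{\mu}=(\mu^1,\ldots,\mu^{d-1})^{\top}$, the reduced matrix becomes $\widetilde{a}^{\mu}=\Diag(\widetilde{\mu})-\widetilde{\mu}\widetilde{\mu}^{\top}$, which fits precisely the Sherman--Morrison setting.

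First I would apply the Sherman--Morrison formula with $A=\Diag(\widetilde{\mu})$ and $u=v=\widetilde{\mu}$. On $\mathring{\Delta}^d$ we have $A^{-1}=\Diag(1/\widetilde{\mu})$, hence $A^{-1}\widetilde{\mu}=\mathbf{1}$ and $\widetilde{\mu}^{\top}A^{-1}\widetilde{\mu}=\sum_{i=1}^{d-1}\mu^i=1-\mu^d$, so $1-\widetilde{\mu}^{\top}A^{-1}\widetilde{\mu}=\mu^d>0$. Therefore
\[
(\widetilde{a}^{\mu})^{-1}=A^{-1}+\frac{A^{-1}\widetilde{\mu}\widetilde{\mu}^{\top}A^{-1}}{1-\widetilde{\mu}^{\top}A^{-1}\widetilde{\mu}}=\Diag(1/\widetilde{\mu})+\frac{1}{\mu^d}\mathbf{1}\mathbf{1}^{\top},
\]
which is exactly the claimed formula \eqref{eq:inv}.

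Since this is a clean one-line computation, there is no real obstacle, but if one prefers to avoid quoting Sherman--Morrison, a direct verification is equally short: denoting by $M$ the matrix on the right-hand side of \eqref{eq:inv}, I would compute
\[
(\widetilde{a}^{\mu}M)_{ij}=\sum_{k=1}^{d-1}(\mu^i\delta_{ik}-\mu^i\mu^k)\Bigl(\frac{\delta_{kj}}{\mu^k}+\frac{1}{\mu^d}\Bigr),
\]
split the sum into its diagonal and constant pieces, and use $\sum_{k=1}^{d-1}\mu^k=1-\mu^d$ to check that the result equals $\delta_{ij}$. The only point requiring any care is that $\mu\in\mathring{\Delta}^d$ is needed so that all $\mu^k$ (including $\mu^d$) appearing in the denominators are strictly positive; this is exactly the hypothesis, and Lemma \ref{lem:nondegenerate} (applied with $\gamma^{\mu}\equiv 1$) already guarantees invertibility of $\widetilde{a}^{\mu}$ there, so the candidate $M$ must be the inverse.
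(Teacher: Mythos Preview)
Your proof is correct. The paper's own proof simply states ``the form of the inverse is easily verified by a simple computation,'' i.e.\ exactly the direct verification you sketch as your second alternative; your Sherman--Morrison derivation is a clean bonus that explains \emph{why} the inverse has this shape rather than merely checking it, but the underlying content is the same.
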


\begin{proof}
The form of the inverse is easily verified by a simple computation.
\end{proof}

\begin{remark}
Note that the above form of $\gamma^{\mu}$ yields the diffusion matrix of volatility stabilized models with entries specified in Proposition \ref{prop:Jacobi}. Moreover, observe that in this case $\widetilde{a}^{\mu}$  corresponds to $c^{\mu}$ of dimension $d-1$.
\end{remark}

\begin{lemma}\label{eq:lemposdef}
Let $\gamma^{\mu}$ be such that $\gamma^{\mu}_{ij}>0$ for all $i \neq j$. Set $\gamma^{\ast}=\min_{i \neq j} \gamma^{\mu}_{ij}$ and let $\widetilde{a}^{\mu}$ be given as in Lemma \ref{lem:a}. Then $
\widetilde{c}^{\mu}- \gamma^{\ast} \widetilde{a}^{\mu}$
is positive semidefinite. Moreover, $\rk(\widetilde{c}^{\mu})=\rk(\widetilde{a}^{\mu})$.
\end{lemma}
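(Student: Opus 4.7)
The key observation is that $c^\mu$ admits a natural decomposition as a nonnegative combination of rank-one positive semidefinite matrices. Specifically, for $e_1,\ldots,e_d$ the canonical basis in $\mathbb{R}^d$, I would first verify, by matching entries, the identity
\[
c^{\mu} \;=\; \sum_{1 \le i < j \le d} \gamma^{\mu}_{ij}\, \mu^i \mu^j\, (e_i-e_j)(e_i-e_j)^\top,
\]
which is immediate from the definitions of $c^\mu_{ii}$ and $c^\mu_{ij}$. Let $\iota\colon \mathbb{R}^{d-1}\hookrightarrow \mathbb{R}^d$ denote the inclusion that appends a zero $d$-th coordinate, and set $v_{ij}:=\iota^\top(e_i-e_j)\in\mathbb{R}^{d-1}$; note that $v_{ij}=\tilde e_i-\tilde e_j$ if $i<j<d$ and $v_{id}=\tilde e_i$ if $j=d$. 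Since $\widetilde c^\mu=\iota^\top c^\mu \iota$, we obtain
\[
\widetilde{c}^{\mu} \;=\; \sum_{1 \le i < j \le d} \gamma^{\mu}_{ij}\, \mu^i \mu^j\, v_{ij} v_{ij}^\top,
\]
and the analogous formula holds for $\widetilde a^\mu$ by setting all $\gamma^\mu_{ij}=1$.

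Subtracting $\gamma^\ast \widetilde a^\mu$ from $\widetilde c^\mu$ yields
\[
\widetilde{c}^{\mu}-\gamma^{\ast}\widetilde{a}^{\mu}\;=\;\sum_{1 \le i < j \le d} (\gamma^{\mu}_{ij}-\gamma^{\ast})\, \mu^i \mu^j\, v_{ij} v_{ij}^\top.
\]
Because $\gamma^{\ast}=\min_{i\neq j}\gamma^\mu_{ij}$, every coefficient $(\gamma^\mu_{ij}-\gamma^\ast)\mu^i\mu^j$ is nonnegative, so the right-hand side is a nonnegative combination of psd matrices and hence positive semidefinite. This proves the first claim.

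For the rank equality I would argue that the range of $\widetilde c^\mu$ is exactly the span of those $v_{ij}$ for which $\gamma^\mu_{ij}\mu^i\mu^j>0$, since a nonnegative sum of rank-one psd matrices $\sum w_k v_k v_k^\top$ has range equal to $\mathrm{span}\{v_k:w_k>0\}$. Under the hypothesis $\gamma^\mu_{ij}>0$ for all $i\neq j$, the set of ``active'' pairs $\{(i,j):\gamma^\mu_{ij}\mu^i\mu^j>0\}$ coincides with $\{(i,j):\mu^i\mu^j>0\}$, and the latter depends only on the support of $\mu$, not on the specific values of the $\gamma^\mu_{ij}$. Since the identical description applies to $\widetilde a^\mu$ (with $\gamma^\mu_{ij}\equiv 1$), the two matrices have the same range and hence the same rank.

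The main thing to be careful about is the rank step: one must invoke that for nonnegatively weighted sums $\sum w_k v_k v_k^\top$ with $w_k\geq0$, the range is spanned by the $v_k$ with $w_k>0$ (no cancellation is possible since the summands are psd). Everything else is a routine bookkeeping of indices in the decomposition.
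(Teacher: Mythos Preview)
Your argument is correct and rests on the same observation as the paper's proof: the difference $\widetilde c^{\mu}-\gamma^{\ast}\widetilde a^{\mu}$ is again (the restriction of) a simplex-type diffusion matrix, now with off-diagonal parameters $\widehat\gamma^{\mu}_{ij}=\gamma^{\mu}_{ij}-\gamma^{\ast}\geq 0$, and such matrices are positive semidefinite. The paper simply cites this last fact (it is implicit in Lemma~\ref{lem:matrix}, equivalently \cite[Proposition~6.6]{FL:14}), whereas you make it self-contained by writing out the rank-one decomposition $c^{\mu}=\sum_{i<j}\gamma^{\mu}_{ij}\mu^i\mu^j(e_i-e_j)(e_i-e_j)^{\top}$, which is precisely the computation underlying that fact. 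For the rank statement the paper is terse (``the rank of both matrices equals the dimension of the corresponding boundary segment of the simplex''), while your argument via the range of a nonnegative sum of rank-one matrices spells out explicitly why the rank depends only on the support $\{i:\mu^i>0\}$ and not on the particular values $\gamma^{\mu}_{ij}$. So the two proofs are aligned in spirit; yours is the more explicit, elementary version.
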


\begin{proof}
Note that the matrix $
\widetilde{c}^{\mu}- \gamma^{\ast} \widetilde{a}^{\mu}$ corresponds to a diffusion matrix (with last row and column deleted) generated by some matrix $\widehat{\gamma}^{\mu}$ given by
\[
\widehat{\gamma}^{\mu}_{ij}=\gamma_{ij} -\gamma^{\ast}, \quad i \neq j.
\]
Since all entries of $\widehat{\gamma}^{\mu}$ are nonnegative, positive semidefiniteness of $c^{\mu}- \gamma^{\ast} a^{\mu}$ and thus in turn of $\widetilde{c}^{\mu}- \gamma^{\ast} \widetilde{a}^{\mu}$ follows. 
For the last statement, note that the rank of both matrices is equal to the dimension of the corresponding boundary segment of the simplex, i.e. $d-1$ if $\mu \in \mathring{\Delta}^d$, etc.
\end{proof}

\begin{proof}[Proof of Proposition \ref{prop:arbitrageexplicit}]
We follow the idea outlined in Remark \ref{rem:optstrat}. Indeed, consider a sequence of polynomials $(p_{n})_{n \in \mathbb{N}}$ on $\Delta^d$ approximating  $\mu \mapsto 1_E(\mu)$ pointwise such that $p_n(\Delta^d \setminus E)=0$ for all $n$. Let $Z$ be the strictly positive local martingale deflator as of Proposition \ref{prop:complete} and define 
\[
q:=E[Z_T] <1,
\]
which corresponds to the superheding price $U_T$ of $1$, given in Definition \ref{def:superhedge} and which is strictly smaller than $1$ since relative arbitrages exist. 
Let $\delta >0$ such that $q+\delta <1$. By dominated convergences (choosing the sequence of polynomials bounded) there exists some $n$ such that for all $N \geq n$
\[
\left|E\left[p_N(\mu_T) Z_T\right] -q\right|\leq \delta.
\]
Let $\varepsilon >0$. Since $p_n(\mu) \to  1_{E}(\mu)$  as $n \to \infty$ and as $P$ has no mass outside the set $E$ there exists some $\widetilde{n}$ such that for $N \geq \widetilde{n}$ 
\[
P[p_{N}(\mu_T) > q+\delta] \geq 1-\varepsilon.
\]
Take now $n^{\varepsilon}=\max(n, \widetilde{n})$, which implies 
\[
P\left[\frac{p_{n^{\varepsilon}}(\mu_T)}{E\left[p_{n^{\varepsilon}}(\mu_T)Z_T\right] }>1\right] \geq 1-\varepsilon.
\]
Our goal is now to find a strategy $\vartheta^{\varepsilon}$ as given in the statement of the proposition such that 
\[
Y_T^{\vartheta^{\varepsilon}}= \frac{p_{n^{\varepsilon}}(\mu_T)}{E\left[p_{n^{\varepsilon}}(\mu_T)Z_T\right]}.
\]
To this end, let $Q$ denote the  F\"ollmer measure, as already introduced in Remark \ref{rem:optstrat}, satisfying $P \ll Q$ and under which $\mu^{\tau}$ is a martingale,  where $\tau=\inf\{ t >0 \, | \, \mu_t \notin E\} = \inf\{ t >0 \, | \, \frac{1}{Z_t} =0\}$. 
Consider furthermore a polynomial martingale $\hat{\mu}$ with the same covariance structure as $\mu$ and denote its law on path space by $\hat{Q}$. Then, by uniqueness of the corresponding martingale problem, the laws of $\mu^{\tau}$ and $\hat{\mu}^{\tau}$ coincide.
Define now 
\[
p_{n^{\varepsilon}}(t ,\hat{\mu}_t):=E_{\hat{Q}}[p_{n^{\varepsilon}}(\hat{\mu}_T)|\mathcal{F}_t].
\]
By the polynomial property of $\hat{\mu}$, $\mu \mapsto p_{n^{\varepsilon}}(t ,\mu)$ is a polynomial and in particular sufficiently regular in both variables to apply It\^o's formula which yields
\[
p_{n^{\varepsilon}}(\hat{\mu}_T)=p_{n^{\varepsilon}}(T,\hat{\mu}_T)=p_{n^{\varepsilon}}(0,\hat{\mu}_0)+ \int_0^T \sum_{i=1}^d D_i p_{n^{\varepsilon}}(t,\hat{\mu}_t) d\hat{\mu}_t, \quad \hat{Q}\text{-a.s.}
\]
since  $(p_{n^{\varepsilon}}(t ,\hat{\mu}_t))_{t \in [0,T]}$ and $\hat{\mu}$ are $\hat{Q}$-martingales. Clearly, by stopping at time $\tau$ we have
\[
p_{n^{\varepsilon}}(\hat{\mu}_{T \wedge \tau})=p_{n^{\varepsilon}}(0,\hat{\mu}_0)+ \int_0^{T\wedge \tau} \sum_{i=1}^d D_i p_{n^{\varepsilon}}(t,\hat{\mu}_t) d\hat{\mu}_t, \quad \hat{Q}\text{-a.s.}
\]
and the same holds true by removing the ``hats'' since the laws of $\mu$ and $\hat{\mu}$ coincide on the stochastic interval $[0, \tau \wedge T]$.  Since $P\ll Q$ and $P[ \tau >T]=1$, this implies
\[
p_{n^{\varepsilon}}(\mu_T)=p_{n^{\varepsilon}}(0,\mu_0)+ \int_0^{T} \sum_{i=1}^d D_i p_{n^{\varepsilon}}(t,\mu_t) d\mu_t, \quad P\text{-a.s.}
\]
Define now $\vartheta^{i,\varepsilon}$ by 
\begin{align}\label{eq:vartheta}
\vartheta^{i,\varepsilon}=\frac{D_i p_{n^{\varepsilon}}(t,\mu_t)}{E\left[p_{n^{\varepsilon}}(\mu_T)Z_T\right]}
\end{align}
and note that 
\[ 
E\left[p_{n^{\varepsilon}}(\mu_T)Z_T\right]=E_Q\left[p_{n^{\varepsilon}}(\mu_T)1_{\{\frac{1}{Z_T} >0\}}\right]=E_{\hat{Q}}\left[p_{n^{\varepsilon}}(\hat{\mu}_T)\right]=p_{n^{\varepsilon}}(0,\mu_0),
\]
where the first equality follows from the generalized Bayes rule (see e.g.~\cite[Theorem 5.1]{R:13}) and the second from the fact that the laws of $\mu^{\tau}$ and $\hat{\mu}^{\tau}$ coincide and that $p_{n^{\varepsilon}}(\Delta^d \setminus E)=0$.
This thus yields
\[
Y_T^{\vartheta^{\varepsilon}}=\frac{p_{n^{\varepsilon}}(\mu_T)}{E\left[p_{n^{\varepsilon}}(\mu_T)Z_T\right]}=1+\int_0^T \sum_{i=1}^d \vartheta^{i,\varepsilon} d\mu_t
\]
and therefore
the first assertion, since $\mu \mapsto p_{n^{\varepsilon}}(t,\mu)$ is a time dependent polynomial.
Concerning the second one, $Y_T^{\vartheta^{\varepsilon}}$ 
clearly converges $P$-a.s.~to the optimal arbitrage given by $\frac{1}{U_T}=\frac{1}{E[Z_T]}$, since $p_n( \mu_T) \to 1$ $P$-a.s., as $P$ has no mass outside $E$.
\end{proof}

%\bibliographystyle{abbrv}

%\bibliography{referencesSPT150404,170504bibl}

\begin{thebibliography}{10}

\bibitem{Ackerer/Filipovic:2016}
D.~Ackerer and D.~Filipovi\'c.
\newblock Linear credit risk models.
\newblock Available at {SSRN}: http://ssrn.com/abstract=2782455, Swiss Finance
  Institute, 2016.

\bibitem{ack_fil_pul_16}
D.~Ackerer, D.~Filipovi\'{c}, and S.~Pulido.
\newblock {The Jacobi Stochastic Volatility Model}.
\newblock {\em https://arxiv.org/abs/1605.07099v1}, 2016.

\bibitem{BF:08}
A.~Banner and D.~Fernholz.
\newblock Short-term relative arbitrage in volatility-stabilized markets.
\newblock {\em Annals of Finance}, 2008.

\bibitem{BFK:05}
A.~D. Banner, R.~Fernholz, I.~Karatzas, et~al.
\newblock Atlas models of equity markets.
\newblock {\em The Annals of Applied Probability}, 15(4):2296--2330, 2005.

\bibitem{BP:03}
R.~F. Bass and E.~A. Perkins.
\newblock Degenerate stochastic differential equations with {H}\"older
  continuous coefficients and super-{M}arkov chains.
\newblock {\em Trans. Amer. Math. Soc.}, 355(1):373--405 (electronic), 2003.

\bibitem{BZ:16}
S.~Biagini and Y.~Zhang.
\newblock Polynomial diffusion models for life insurance liabilities.
\newblock {\em https://arxiv.org/abs/1602.07910}, 2016.

\bibitem{CGGPST:17}
C.~Cuchiero, K.~Gellert, M.~Guiricich, A.~Platts, S.~Shivan, and J.~Teichmann.
\newblock Empirics for a realistically correlated polynomial model in
  stochastic portfolio theory.
\newblock {\em Working paper}, 2017.

\bibitem{CKT:12}
C.~Cuchiero, M.~Keller-Ressel, and J.~Teichmann.
\newblock Polynomial processes and their applications to mathematical finance.
\newblock {\em Finance and Stochastics}, 16(4):711--740, 2012.

\bibitem{CLS:16}
C.~Cuchiero, M.~Larsson, and S.~Svaluto-Ferro.
\newblock Polynomial jump-diffusions on the unit simplex.
\newblock {\em Preprint, available at http://arxiv.org/abs/1612.04266v1}, 2016.

\bibitem{DS:94}
F.~Delbaen and W.~Schachermayer.
\newblock A general version of the fundamental theorem of asset pricing.
\newblock {\em Math. Ann.}, 300(3):463--520, 1994.

\bibitem{DS:95a}
F.~Delbaen and W.~Schachermayer.
\newblock The existence of absolutely continuous local martingale measures.
\newblock {\em Ann. Appl. Probab.}, 5(4):926--945, 1995.

\bibitem{Delbaen/Shirakawa:2002}
F.~Delbaen and H.~Shirakawa.
\newblock An interest rate model with upper and lower bounds.
\newblock {\em Asia-Pacific Financial Markets}, 9:191--209, 2002.

\bibitem{EK:86}
S.~N. Ethier and T.~G. Kurtz.
\newblock {\em Markov processes}.
\newblock Wiley Series in Probability and Mathematical Statistics: Probability
  and Mathematical Statistics. John Wiley \& Sons, Inc., New York, 1986.
\newblock Characterization and convergence.

\bibitem{FK:10}
D.~Fernholz and I.~Karatzas.
\newblock On optimal arbitrage.
\newblock {\em Ann. Appl. Probab.}, 20(4):1179--1204, 2010.

\bibitem{FKa:10}
D.~Fernholz and I.~Karatzas.
\newblock Probabilistic aspects of arbitrage.
\newblock In {\em Contemporary quantitative finance}, pages 1--17. Springer,
  Berlin, 2010.

\bibitem{F:99}
R.~Fernholz.
\newblock On the diversity of equity markets.
\newblock {\em Journal of Mathematical Economics}, 31(3):393--417, 1999.

\bibitem{F:02}
R.~Fernholz.
\newblock {\em Stochastic Portfolio Theory}.
\newblock Applications of Mathematics. Springer-Verlag, New York, 2002.

\bibitem{FK:05}
R.~Fernholz and I.~Karatzas.
\newblock Relative arbitrage in volatility-stabilized markets.
\newblock {\em Annals of Finance}, 1(2):149--177, 2005.

\bibitem{FK:09}
R.~Fernholz and I.~Karatzas.
\newblock Stochastic portfolio theory: an overview.
\newblock {\em Handbook of numerical analysis}, 15:89--167, 2009.

\bibitem{FKK:05}
R.~Fernholz, I.~Karatzas, and C.~Kardaras.
\newblock Diversity and relative arbitrage in equity markets.
\newblock {\em Finance Stoch.}, 9(1):1--27, 2005.

\bibitem{FKR:16}
R.~Fernholz, I.~Karatzas, and J.~Ruf.
\newblock Volatility and arbitrage.
\newblock {\em https://arxiv.org/abs/1608.06121}, 2016.

\bibitem{FS:82}
R.~Fernholz and B.~Shay.
\newblock Stochastic portfolio theory and stock market equilibrium.
\newblock {\em The Journal of Finance}, 37(2):615--624, 1982.

\bibitem{Filipovic/Gourier/Mancini:2015}
D.~Filipovi\'c, E.~Gourier, and L.~Mancini.
\newblock Quadratic variance swap models.
\newblock {\em Journal of Financial Economics}, 119(1):44--68, 2016.

\bibitem{FL:14}
D.~Filipovi{\'c} and M.~Larsson.
\newblock Polynomial diffusions and applications in finance.
\newblock {\em Finance and Stochastics}, 20(4):931--972, 2016.

\bibitem{F:72}
H.~F\"ollmer.
\newblock The exit measure of a supermartingale.
\newblock {\em Z. Wahrscheinlichkeitstheorie und Verw. Gebiete}, 21:154--166,
  1972.

\bibitem{GKR:95}
H.~Geman, N.~El~Karoui, and J.-C. Rochet.
\newblock Changes of numeraire, changes of probability measure and option
  pricing.
\newblock {\em Journal of Applied probability}, pages 443--458, 1995.

\bibitem{G:09}
I.~Goia.
\newblock Bessel and volatility-stabilized processes.
\newblock Ph.D. thesis, Columbia Univ., 2014.

\bibitem{GJ:06}
C.~Gourieroux and J.~Jasiak.
\newblock Multivariate {J}acobi process with application to smooth transitions.
\newblock {\em Journal of Econometrics}, 131(1–2):475 -- 505, 2006.

\bibitem{HJ:85}
R.~A. Horn and C.~R. Johnson.
\newblock {\em Matrix analysis}.
\newblock Cambridge university press, 1985.

\bibitem{HS:10}
H.~Hulley and M.~Schweizer.
\newblock {$\rm M^6$}---on minimal market models and minimal martingale
  measures.
\newblock In {\em Contemporary quantitative finance}, pages 35--51. Springer,
  Berlin, 2010.

\bibitem{IPBKF:11}
T.~Ichiba, V.~Papathanakos, A.~Banner, I.~Karatzas, R.~Fernholz, et~al.
\newblock Hybrid atlas models.
\newblock {\em The Annals of Applied Probability}, 21(2):609--644, 2011.

\bibitem{JS:03}
J.~Jacod and A.~N. Shiryaev.
\newblock {\em Limit theorems for stochastic processes}, volume 288 of {\em
  Grundlehren der Mathematischen Wissenschaften [Fundamental Principles of
  Mathematical Sciences]}.
\newblock Springer-Verlag, Berlin, second edition, 2003.

\bibitem{KKS:15}
Y.~Kabanov, C.~Kardaras, and S.~Song.
\newblock No arbitrage and local martingale deflators.
\newblock {\em https://arxiv.org/abs/1501.04363}, 2015.

\bibitem{K:06}
J.~Kallsen.
\newblock A didactic note on affine stochastic volatility models.
\newblock In {\em From stochastic calculus to mathematical finance}, pages
  343--368. Springer, Berlin, 2006.

\bibitem{KK:07}
I.~Karatzas and C.~Kardaras.
\newblock The num\'eraire portfolio in semimartingale financial models.
\newblock {\em Finance Stoch.}, 11(4):447--493, 2007.

\bibitem{KR:16}
I.~Karatzas and J.~Ruf.
\newblock Trading strategies generated by {L}yapunov functions.
\newblock {\em https://arxiv.org/abs/1603.08245}, 2016.

\bibitem{KT:81}
S.~Karlin and H.~M. Taylor.
\newblock {\em A second course in stochastic processes}.
\newblock Academic Press, Inc. [Harcourt Brace Jovanovich, Publishers], New
  York-London, 1981.

\bibitem{KL:16}
P.~Kr\"uhner and M.~Larsson.
\newblock Affine processes with compact state space.
\newblock {\em Working paper}, 2016.

\bibitem{MPS:11}
E.~Mayerhofer, O.~Pfaffel, and R.~Stelzer.
\newblock On strong solutions for positive definite jump diffusions.
\newblock {\em Stochastic Process. Appl.}, 121(9):2072--2086, 2011.

\bibitem{P:11}
S.~Pal.
\newblock Analysis of market weights under volatility-stabilized market models.
\newblock {\em Ann. Appl. Probab.}, 21(3):1180--1213, 2011.

\bibitem{P:14}
R.~Pickov{\'a}.
\newblock Generalized volatility-stabilized processes.
\newblock {\em Annals of Finance}, 10(1):101--125, 2014.

\bibitem{R:13}
J.~Ruf.
\newblock Hedging under arbitrage.
\newblock {\em Math. Finance}, 23(2):297--317, 2013.

\bibitem{ST:14}
M.~Schweizer and T.~Kochiro.
\newblock A note on the condition of no unbounded profit with bounded risk.
\newblock {\em Finance and Stochastics}, 18(2):393–405, 2014.

\bibitem{V:15}
A.~Vervuurt.
\newblock Topics in stochastic portfolio theory.
\newblock {\em https://arxiv.org/abs/1504.02988}, 2015.

\bibitem{W:79}
H.-J. Werner.
\newblock On the matrix monotonicity of generalized inversion.
\newblock {\em Linear Algebra Appl.}, 27:141--145, 1979.

\end{thebibliography}

\end{document}